\newcommand{\naturals}{\mathbb{N}}
\newcommand{\nullval}{\bot}
\newcommand{\linkstatesgeneric}[2]{\stackrel{#2}{#1}}
\newcommand{\linkstates}[1]{\linkstatesgeneric{\sim}{#1}}
\newcommand{\isets}[1]{\mathsf{iSets}(#1)}
\newcommand{\states}[1]{\mathsf{States}(#1)}
\newcommand{\setofboxes}[1]{\mathbf{Inv}(#1)}
\newcommand{\scvalof}[2]{\mathsf{scval}(#1,#2)}
\newcommand{\regularpath}[0]{{\sf C}-regular }
\newcommand{\laddersubset}[3]{#1_{#2}(#3)}
\newcommand{\laddersubsetX}[2]{\laddersubset{\Lambda}{#1}{#2}}
\newcommand{\laddersubsetY}[2]{\laddersubset{\Omega}{#1}{#2}}
\newcommand{\writesnapshot}{{\sf update}}
\newcommand{\readsnapshot}{{\sf scan}}
\newcommand{\execobject}{exec}
\newcommand{\eventfont}[1]{{\sf #1}}
\newcommand{\event}[2]{\eventfont{#1}(#2)}
\newcommand{\writeop}[1]{\event{W}{#1}}
\newcommand{\readop}[1]{\event{R}{#1}}
\newcommand{\scop}[1]{\event{S}{#1}}
\newcommand{\xschedule}[1]{\xi(#1)}
\newcommand{\xschedulesup}[2]{\xi^{#1}(#2)}
\newcommand{\xschedulewro}[1]{\eta(#1)}
\newcommand{\inthashmapname}{{\bf tup }}
\newcommand{\inthashmap}[2]{\inthashmapname(#1, #2)}
\theoremstyle{plain}
\newtheorem{theorem}{Theorem}[section]
\newtheorem{lemma}[theorem]{Lemma}
\newtheorem{corollary}[theorem]{Corollary}
\theoremstyle{definition}
\newtheorem*{definition*}{Definition}
\newtheorem*{definitions*}{Definitions}
\newtheorem*{example*}{Example}
\theoremstyle{remark}
\newcounter{linecounter}
\newcommand{\linenumbering}{(\arabic{linecounter})}
\renewcommand{\line}[1]{\refstepcounter{linecounter}
\label{#1}
\linenumbering}
\newcommand{\resetline}{\setcounter{linecounter}{0}}
\title{The solvability of consensus in iterated models extended with safe-consensus
\thanks{A Preliminary version of these results appeared in SIROCCO 2014.}}
\author{Rodolfo Conde\\ Instituto Técnológico Autónomo de México\\Rio Hondo \#1 Col. Progreso Tizapán, 
CDMX 01080, México\\ \tt{rodolfo.conde@itam.mx}
\and Sergio Rajsbaum\\Instituto de Matemáticas, Universidad Nacional Autónoma de México\\
Ciudad Universitaria, CDMX 04510, México\\ \tt{rajsbaum@im.unam.mx}}
\date{July 2022}
\begin{document}


\maketitle

\begin{abstract}
The safe-consensus task was introduced by Afek, Gafni and Lieber (DISC' 09) as a weakening of the 
classic consensus.
When there is concurrency, 
the consensus output can be arbitrary, not even the input of any process. They showed that 
safe-consensus is equivalent to consensus, in a wait-free system.
We study the solvability of consensus in three shared memory iterated models extended with 
the power of 
safe-consensus black boxes. 
In the first iterated model, for the $i$-th iteration, 
the processes write to memory, then they snapshot it and 
finally they invoke safe-consensus boxes. We prove that in this model, consensus cannot be 
implemented. In a second iterated model,
processes first 
invoke safe-consensus, then they write to memory and finally they snapshot it.
We show that this model is equivalent to the previous model and thus 
consensus cannot be implemented.
In the last iterated model, processes 
write to the memory, invoke safe-consensus boxes and finally they snapshot
the 
memory. We show that in this model, any wait-free implementation of consensus requires 
$\binom{n}{2}$ safe-consensus black-boxes and this bound is tight.\\

\noindent {\bf Keywords:} Consensus, safe-consensus, coalition, Johnson graph, connectivity, 
distributed algorithms, lower bounds, wait-free computing, iterated models.
\end{abstract}

\newpage

\section{Introduction}

The ability to agree on a common decision 
is key to distributed computing. The most
widely studied agreement abstraction is  \emph{consensus}.
In the {consensus} task each process proposes a value, and all correct
processes have to decide the same value.
In addition,    \emph{validity} requires   that the decided value 
is a proposed value.

Herlihy's seminal paper~\cite{Herlihy91} examined the power of different synchronization primitives 
for \emph{wait-free computation}, e.g., 
when computation completes in a finite number of steps by a process, 
regardless of how fast or slow other processes run, and even if some of 
them halt permanently. He  showed that consensus is a universal primitive,
in the sense that a solution to consensus (with read/write registers)
can be used to implement any synchronization primitive in a wait-free manner. Also, consensus 
cannot 
be wait-free implemented 
from read/write re\-gis\-ters alone~\cite{fischerImpossCon,impossConSHM}; indeed, all modern 
shared-memory multiprocessors provide some form of universal primitive.


Afek, Gafni and Lieber~\cite{yehudagafni} introduced \emph{safe-consensus},
which  seemed to be a  synchronization primitive much weaker than consensus.
The validity requirement becomes: 
if the first process 
to invoke the task returns before any other process invokes it, 
then it outputs its input; otherwise, when there is concurrency, 
the consensus output can be arbitrary, not even the input of any process. 
In any case, all processes must agree on the same output value.
Trivially, consensus implements safe-consensus.
Surprisingly, they  proved that the converse is also true, by presenting a wait-free implementation 
of consensus using safe-consensus black-boxes and read/write registers. 
Why is it then, that safe-consensus seems
a much weaker synchronization primitive?

\noindent
{\bf Our Results.}
We show that while consensus and safe-consensus are wait-free equivalent,
 any wait-free implementation of consensus  for $n$ processes
 in an iterated model (with the appropriate order of snapshot 
 and safe-consensus operations) 
requires $\binom{n}{2}$ safe-consensus black-boxes, and this bound is tight. 

Our main result is the lower bound.
It uses connectivity arguments  based on  subgraphs of \emph{Johnson graphs},
and an intricate  combinatorial and bivalency argument, that yields a detailed bound on how many  
safe-consensus  objects of each type 
(i.e., which processes and how many processes invoke a safe-consensus object) 
are used by the implementation protocol. 
For the upper bound, we present a simple protocol, based on the new 
 \emph{$g$-2coalitions-consensus} task, which may be of independent interest\footnote{These results 
appeared for the first time in the Proceedings of the 21st International Colloquium on Structural 
Information and Communication 
Complexity \cite{powersafeconsensusconderajsbaum}.}.

We develop our results in an iterated model of computation~\cite{rajsbaum2010iterated}, where the 
processes repeatedly: write their information to a (fresh) shared array, invoke (fresh) 
safe-consensus boxes and  snapshot the contents of the shared array.

Also, we study the solvability of consensus in two alternate iterated models extended with 
safe-consensus. In the first model, the processes write to memory, then they snapshot it and 
finally they invoke safe-consensus boxes. We prove that in this model, consensus cannot be 
implemented from safe-consensus. In the second model, processes first 
invoke safe-consensus, then they write to shared memory and finally they snapshot the contents of 
the memory. We show that this model is equivalent to the previous model and thus consensus cannot 
be solved in this model.

\noindent
{\bf Related Work.}
Distributed computing theory has been concerned from early
on with understanding the relative power of synchronization primitives.
The wait-free context
is the basis to study other failure models e.g.~\cite{BGdistalg},
and there is a characterization of the wait-free, read/write solvable tasks~\cite{HS99}.
For instance, the weakening of consensus,  \emph{set agreement}, where
$n$ processes may agree on at most $n-1$ different input values,
is still not wait-free solvable~\cite{BG93-2,HS99,SZ00} with read/write registers only.
The \emph{renaming} task where $n$ processes have to agree on at most $2n-1$
names has also been studied in detail 
e.g.~\cite{AttiyaBDPR90,Castaneda:2012:ETA:2247370.2247382,Castaneda:2008:NCT:1400751.1400791,Castaneda:2012:NCT:2108242.2108245,CastanedaRRrev2011}.

Iterated models 
e.g.~\cite{rajsbaum2010iterated,borowsky,herlihyAdversaries,herlihyAdversaries-j,impossFDs,RRT08} 
facilitate impossibility results, and (although more restrictive) facilitate 
the analysis of protocols~\cite{gafniRrecursion}.
We follow in this paper the approach of~\cite{GRH06} that used an iterated
model to prove the separation result that set agreement can implement {renaming}
but not vice-versa, and expect our result can be extended to a general 
model using simulations, as was done in~\cite{gafnirajsOPODIS10} for that separation result. 
For an overview of the use of to\-po\-lo\-gy to 
study computability, including the use of iterated models
and simulations see~\cite{HerlihyKR2013}.

Afek, Gafni and Lieber~\cite{yehudagafni} 
presented a wait-free protocol that
implements consensus using $\binom{n}{2}$ safe-consensus black-boxes (and read/write registers). 
Since our implementation uses the iterated form of shared-memory, it
 is easier to prove correct. 
Safe-consensus was used in \cite{yehudagafni} to show that the $g$-tight-group-renaming 
task\footnote{In a
tight group renaming task with group size $g$, 
$n$ processes with ids from a 
large
domain $\{ 1, 2,\ldots,N \}$ are partitioned into 
$m$ groups with ids from a large domain
$\{1, 2,\ldots, M \}$, with at most $g$ processes 
per group. A tight group renaming task
renames groups from the domain $1\ldots M$ to 
$1\ldots l$ for $l \ll M$ , where all processors
with the same initial group ID are renamed to the 
same new group ID, and no
two different initial group ids are renamed to the 
same new group ID.} 
\cite{yehudagamzu} is as powerful as $g$-consensus.

The idea of the classical consensus impossibility result~\cite{fischerImpossCon,impossConSHM}
is  (roughly speaking)  that the executions of
a protocol in such a system can be represented by a graph which is always
connected. The connectivity invariance has been proved in many
papers using the critical state
argument introduced in~\cite{fischerImpossCon}, or sometimes using
a layered analysis as
in~\cite{mosesrajsbaum}. Connectivity can be used also to prove 
 time lower bounds e.g.~\cite{mosesrajsbaum,AttiyaDLS94,Dwork1990156}.
We extend here the layered analysis
to prove the lower bound result on the number of safe-consensus objects needed to implement 
consensus. Also, our results show that when the basic shared memory iterated model is used with 
objects stronger than read/write memory, care has be taken in the way they are added to the model, 
as the resulting power of the model to solve tasks can vary.

 In a previous work~\cite{getco10topologicaltheory} we had already studied an
 iterated model extended with the power of safe-consensus. 
However, that model had the restriction that in each 
iteration, all processes invoke the same safe-consensus object. 
We showed that set agreement can be implemented, but not  consensus. 
The impossibility proof uses much simpler connectivity arguments
than those of this paper. 

\noindent
{\bf Organization.}
The paper is organized as follows. Section~\ref{secdcBasicDef} describes the basic concepts and 
previous results of the models of computation. 
This section can be skipped by readers familiar with standard distributed computing notions. 
Section \ref{secAllModels} defines the three iterated models of computation that we investigate and 
also we present the main results for each model.
Section \ref{secImpossConsensus} is dedicated to our 
first iterated model with safe-consensus. In 
this model, the processes first write to memory, then they 
snapshot the contents of the memory and 
after that, they invoke safe-consensus objects. We prove 
that in this model, consensus cannot be 
implemented. In Section \ref{secEquivWROOWR}, we develop 
our second iterated model extended with 
safe-consensus. In this model, the processes first invoke 
safe-consensus objects, then they write 
to the memory and finally, they snapshot the shared memory. We prove that this model is equivalent 
to the previous model (for task solvability), thus the consensus task cannot be 
implemented in this model.
 Section \ref{secResultsWOR} is devoted to develop 
all the results obtained for our last iterated model, in it the processes write to memory, invoke 
the safe-consensus objects and then they snapshot the shared memory. For this model, our results 
are the following: 
\begin{itemize}
 \item We construct a protocol that solves $n$-process 
consensus using $\binom{n}{2}$ safe-consensus boxes 
(Section~\ref{secTheConsensusProtocol}). We give our 
consensus 
protocol using the new 2coalitions-consensus task, which is described in full detail in Section 
\ref{secCCTask}.
\item We 
describe and prove 
the main result for this iterated model, which is 
also the main result of this paper, it is a lower 
bound on the number of safe-consensus objects needed to 
solve consensus in this iterated model 
(Section \ref{asymtotictighbounds}). 
\end{itemize}
In Section \ref{secConclusions} we give our final  conclusions and some open 
problems. 

\section{Basic definitions}\label{secdcBasicDef}

In this section, we introduce the model of computation and present many basic concepts used in this 
paper. We follow the usual definitions and extend some concepts from \cite{attiyarajsbaum,AW}.

\subsection{Distributed systems}\label{secSystems}

Our formal model is an extension of the standard iterated version \cite{borowsky} of the 
usual read/write shared memory model e.g.~\cite{AW}. 
 A \emph{process} is a deterministic state machine, which has a (possible infinite) set of \emph{local
states}, including a subset called the \emph{initial states} and a subset called the \emph{output
states}.

A \emph{shared object} $\mathsf{O}$ has a domain $D$ of 
input values, and a domain $D^\prime$ of output values. $\mathsf{O}$ provides a unique operation, 
$\mathsf{O}.\execobject(d)$, that receives an input value $d\in D$ and returns an 
output value $d^\prime\in D^\prime$. 

A one-shot snapshot object $A$ is a shared memory 
array $A\left[ 1,\ldots,n\right]$ with one 
entry per process. That array is initialized to 
$\left[\nullval,\ldots,\nullval\right]$, where 
$\nullval$ is a default value that cannot be 
written by a process. The snapshot object $A$ 
provides 
two atomic operations that can be used by a 
process at most once:
\begin{itemize}
\item $A.\writesnapshot(v)$: when called by 
process $p_j$, it writes the value $v$ to the 
register 
$A\left[j\right]$.
\item $A.\readsnapshot ()$: returns a copy of the whole shared memory array $A$.
\end{itemize}

It is also customary to make no assumptions about the size of the registers of the shared memory, 
and therefore we may assume that each process $p_i$ can write its entire local state in a single 
register. Notice that the snapshot operation can be implemented in read/write shared
 memory, according to \cite{AADGMS93,borowskyIASI}.

A \emph{system} consists of the following data:
\begin{itemize}
\item A set of $n \geqslant 1$ processes $\Pi = \{ p_1, \ldots, p_n \}$;
\item a shared memory $SM\left[ i \right]$ ($i \geqslant 0$) structured as an infinite sequence 
of one-snapshot objects;
\item an infinite sequence $\mathsf{S} \left[ j \right]$ ($j \geqslant 0$) of shared objects.
\end{itemize}

A \emph{global state} of a system is a vector $P$ of the form
\[ P=\langle s_1,\ldots, s_n; SM \rangle, \]
where $s_i$ is the local state of process $p_i\in \Pi$ and $SM$ is the shared memory of the system. 
An {\it initial state} is a state in which every local
state is an initial local state and all registers in the shared memory are set to $\nullval$.  A {\it
decision state} is a state in which all local states are output states. When referring to a global 
state $P$, we usually omit the word global and simply refer 
to $P$ as a state.

\subsection{Events and round schedules}

An \emph{event} in the system is performed by a single process 
$p_i\in \Pi$, which applies only one of the following actions: a write (update) operation, denoted by \eventfont{W} or a read (scan) operation, denoted by 
\eventfont{R}, on the shared memory or an invocation to a shared object (\eventfont{S}). 
Any of these operations may be preceded/followed by some local computation, formally a change of 
the process to its next local state. 
We will need 
to consider events performed \emph{concurrently} by the processes. If \eventfont{E} is any event 
and $p_{i_1},\ldots, p_{i_k}\in \Pi$ are processes, then we denote the fact that $p_{i_1},\ldots, 
p_{i_k}$ 
execute concurrently the event \eventfont{E} by $\event{E}{X}$, where $X=\{i_1,\ldots, i_k\}$.

We fix once and for all some notation. Let 
$\overline{n}=\{1,\ldots,n\}$, when convenient, 
we 
will denote $\event{E}{X}$ by $\event{E}{i_1,
\ldots, i_k}$ and if $i\in \overline{n}$ is a 
process 
id, then $\event{E}{\overline{n}-\{ i\}}$ is 
written simply as $\event{E}{\overline{n}-i}$.

A \emph{round schedule} $\pi$ is a finite 
sequence of events of the form 
\[ \pi \colon \event{E_1}{X_1},\ldots,
\event{E_r}{X_r},\]
that encodes the way in which the processes with ids in the set $\bigcup_{j=1}^r X_j$ perform the 
events ${\sf E}_1,\ldots,{\sf E}_r$. For example, the round schedule given by
\[ \writeop{1, 3}, \readop{1,3}, \writeop{2}, \readop{2},\scop{1,2,3} \]
means that processes $p_1,p_3$ perform 
the write and read events concurrently; after that, $p_2$
executes solo its read and write events and finally all three processes invoke the shared objects concurrently. 
Similarly, the round schedule $\writeop{1,2,3},\readop{1,2,3},$ $\scop{1,2,3}$ 
says that $p_1,p_2$ and $p_3$ execute concurrently the write and read events in the shared memory 
and then they invoke the shared objects concurrently.

\subsection{Protocols and executions}

The state machine of each process $p_i\in \Pi$ is called a \emph{local protocol} $\mathcal{A}_i$,
 that determines the steps taken by $p_i$. We assume that all local protocols are identical; i.e.
 Processes have the same state machine. A {\it protocol} is a collection $\mathcal{A}$ 
 of local protocols $\mathcal{A}_1,\ldots,\mathcal{A}_n$.  

For the sake of simplicity, we will give protocols specifications using pseudocode and we establish the
following conventions: A lowercase variable denotes a local variable, with a subindex that indicates
to which process it belongs; the shared memory (which is visible to all processes) is denoted with
uppercase letters. Intuitively, the local state $s_i$ of process $p_i$ is composed of the contents 
of all the local variables of $p_i$. Also, we identify 
two special components of each process' states: an input and an output. It is assumed that initial 
states differ only in the value of the input component; moreover, the input component never changes. 
The protocol cannot overwrite the output, it is initially $\nullval$; once a non-$\nullval$ value is 
written to the output component of the state, it never changes; when this occurs, we say that the 
process \emph{decides}. The output states are those with non-$\nullval$ output values.

Let $\mathcal{A}$ be a protocol. An \emph{execution} of $\mathcal{A}$ is a finite or infinite
alternating sequence of states and round schedules 
\[ S_0,\pi_1,\ldots,S_k,\pi_{k+1},\ldots,\] 
where $S_0$ is an initial state and for each $k\geqslant 1$, $S_k$ is the resulting state of
applying the sequence of events performed by the processes in the way described by the round
schedule $\pi_{k}$. 
An \emph{r-round partial execution} of $\mathcal{A}$ is a
finite execution of $\mathcal{A}$ of the form $S_0,\pi_1,\ldots,S_{r-1},\pi_r,S_r$.

If $P$ is a state, $P$ is said to be \emph{reachable in $\mathcal{A}$} if there exists an $r$-round
partial execution of $\mathcal{A}$ $(r\geqslant 0)$ that ends in the state $P$ and when there is no
confusion about which protocol we refer to, we just say that $P$ is reachable. 

Given the protocol $\mathcal{A}$ and two states $S,R$, we say that $R$ is a
{\it successor} of $S$ in $\mathcal{A}$, if there exists an execution
$\alpha$ of $\mathcal{A}$ such that
\[ \alpha=S_0,\pi_1,\ldots,S_r=S,\pi_{r+1},\ldots,\pi_{r+k},S_{r+k}=R,\ldots, \]
i.e., starting from $S$, we can run the protocol $\mathcal{A}$ $k$ rounds (for some $k\geqslant 0$)
such that the system enters state $R$. 
If $\pi$ is any round schedule and $S$ is a state, the successor of $S$ in $\mathcal{A}$ obtained by
running the protocol (starting in the state $S$) one round with the round schedule $\pi$ is
denoted by $S\cdot \pi$.

\subsection{Decision tasks}

In distributed computing, a \emph{decision task} is a problem that must be solved in a distributed 
system. Each process starts with a private input value, communicates with the others, and halts with 
a private output value. Formally, a \emph{decision task} $\Delta$ is a relation that has a domain 
$\mathcal{I}$ 
of input values and a domain
$\mathcal{O}$ of output values; $\Delta$ specifies for each assignment of the inputs to processes on
which outputs processes can decide. A \emph{bounded} decision task is a task whose number of input 
values is finite.

We also refer to decision task simply as tasks. Examples of tasks includes \emph{consensus} 
\cite{fischerImpossCon}, \emph{renaming} \cite{yehudagamzu,borowskyIASI} and the \emph{set 
agreement} task \cite{setagreementJournal}.

A protocol $\mathcal{A}$ \emph{solves} a decision task $\Delta$ if any finite execution $\alpha$
of $\mathcal{A}$ can be extended to an execution $\alpha^\prime$ in which all processes decide on
values which are allowable (according to $\Delta$) for the inputs in $\alpha$. 
Because the outputs cannot be overwritten, if a process has decided on a value in $\alpha$, it must
have the same output in $\alpha^\prime$. This means that outputs already written by the processes
can be completed to outputs for all processes that are permissible for the inputs in $\alpha$.

 A protocol $\mathcal{A}$ is {\it wait-free} if in any execution of $\mathcal{A}$, a process either
has a finite number of events or decides. This implies that if a process has an infinite
number of events, it must decide after a finite number of events.
Roughly speaking, $\mathcal{A}$ is wait-free if any process that continues to run will halt with an 
output value in a fixed number of steps, regardless of delays or failures by other processes. 
However, in our formal model, we do not require the processes to halt; they solve the decision task 
and decide by writing to the output component; processes can continue to par\-ti\-ci\-pa\-te. We 
typically consider the behaviour of a process until it decides, and therefore, the above distinction 
does not matter. 

The study of wait-free shared memory protocols has been fundamental in distributed 
computing, some of the most powerful results have been constructed on top of wait-free protocols 
\cite{BG93-2,HS99,rajsbaum2010iterated,SZ00}. Also, other variants of distributed systems can 
be reduced to the wait-free case \cite{BGdistalg,BG93-2,gafniextbgsim}.

\subsubsection*{Definition of consensus and safe-consensus tasks.}\label{subsecconsensustasks}

The tasks of interest in this paper are the \emph{consensus} and \emph{safe-consensus} 
\cite{yehudagafni} tasks.

\paragraph{Consensus}
Every process starts with some initial input value taken from 
a set $I$ 
and must 
output a value 
such 
that:
\begin{itemize}
\item Termination: Each process must eventually output some value.
\item Agreement: All processes output the same value.
\item Validity: If some process outputs $v$, then $v$ is the initial input of some process.
\end{itemize}

\paragraph{Safe-consensus} Every process starts with some initial 
input value taken from a set $I$ and must 
output a value 
such that 
Termination and Agreement are satisfied, and:
\begin{itemize}
\item Safe-Validity:
 If a process $p_i$ starts executing the task and outputs before any other
process  starts executing the task, then its decision is its own proposed input value. 
Otherwise, if two or more processes  access the safe-consensus task concurrently,
then any decision value 
is valid.
\end{itemize}

The safe-consensus task was proposed first in \cite{yehudagafni} as a result of weakening the validity condition of consensus.

\section{Iterated models extended with safe-consensus and results}\label{secAllModels}

Intuitively, a model of distributed computing describes a set of protocols that share some common 
properties and restrictions in the way the processes can access the shared objects and these 
conditions affect the way in which the protocols can be specified. In this paper, we are 
interested in protocols which can be written in a simple and structured way,
 such that the behaviour of the system in the $i$th-iteration, can be described by using the behaviour
 of the $(i - 1)$th-iteration, in an inductive way.
 
 In this section, we introduce an extension of the basic iterated model of \cite{borowsky}, adding 
the power of safe-consensus shared objects. We also present all the results of this paper.
 
\subsection{The iterated model with shared objects}
 
 In the iterated model extended with shared objects, the processes can use two kinds of communication media. 
 The first is the shared memory $SM$ structured as an infinite array of snapshot objects; 
 the second medium is the infinite array $\mathsf{S}$ of shared 
objects ($SM$ and $\mathsf{S}$ are described in Section \ref{secSystems}). 
The processes communicate between them through the snapshot objects and the shared objects of 
$\mathsf{S}$,
 in an asynchronous and round-based pattern. In all the iterated models that we investigate, we 
make the assumption that the shared memory is composed of \emph{one-shot snapshot objects} and the array 
$\mathsf{S}$ contains \emph{one-shot shared objets}. Specifically, we assume that:

\begin{itemize}
 \item The operations $\writesnapshot$ and $\readsnapshot$ of the snapshot objects in $SM$ can 
be used by a process at most once.
\item The $\execobject$ operation of each shared object in $\mathsf{S}$ can be used at most 
once by each process that invokes it.
\end{itemize}

When we want to add the power of shared objects to the standard iterated model \cite{borowsky}, 
we must consider two important questions. The first question is: In which order should we place the three
basic operations (write, read and invoke a shared object)? We have three
possibilities: 
\begin{itemize}
 \item Write, read from the shared memory and invoke a shared object;
\item invoke a shared object, write and read the shared memory;
\item write to memory, invoke a shared object and read the contents of the shared memory.
\end{itemize}
The second question, which is very closely related to the previous one is: Does the order of the
main operations affect the computational power of the new model (for task solvability)? In this
paper, we address these two questions and show the differences of the models of distributed
computing that we obtain when the shared objects invoked by the processes are 
safe-consensus objects.

 A \emph{safe-consensus object} is a shared object that can be invoked by any number of processes. 
The
object receives an input value from each process that invokes it, and 
returns to all the processes an output value that satisfies the Agreement and Validity condition of
the safe-consensus task.
In other words, a safe-consensus object is like a ``black box'' that the
processes can use to solve instances of the safe-consensus task. The method of using distributed
tasks as black boxes inside protocols is a standard way to study the relative
computational power of distributed tasks (i.e. if one task is weaker than another, see 
\cite{yehudagafni,GRH06,Castaneda:2012:RWS:2247370.2247383}). 
Notice that safe-consensus shared objects combined with shared memory registers, can implement
consensus \cite{yehudagafni}.

From now on, we work exclusively in iterated models, where the shared objects invoked by the 
processes are safe-consensus objects.

\subsection{The WRO iterated model}\label{secWRONF}

We now define the first iterated model that we investigate; in it, processes 
write to shared memory, then 
they snapshot it and finally they invoke the 
safe-consensus objects. We say that a
protocol is in the \emph{WRO (Write, Read and 
invoke Object) iterated model} if it can be 
written in the form given 
in Figure \ref{figWROprot}.

\begin{figure}[htb]
\centering{ 
\fbox{
\begin{minipage}[t]{150mm}
\footnotesize
\small
\renewcommand{\baselinestretch}{2.5}
\resetline
\begin{tabbing}
aaaaa\=aaaaa\=aaaaaa\=aaaa\kill 
\line{FI1} \> \textbf{init} $r\leftarrow 0$;
              $sm \leftarrow input$;
$dec \leftarrow \nullval$;
$val \leftarrow \nullval$;
\\~\\
\line{FI2} \> \textbf{loop forever} \\

\line{FI3} \> \>   $r\leftarrow r+1$;\\

\line{FI4} \> \> $SM\left[r\right].\writesnapshot(sm, val)$; \\

\line{FI5} \> \> $sm$ $\leftarrow SM\left[r\right].\readsnapshot()$;\\

\line{FI6} \> \> $val$ $\leftarrow \mathsf{S}\left[ h( \langle r, id, sm, val \rangle) 
\right].\execobject(v)$;\\

\line{FI7} \> \> {\bf if} $(dec=\nullval)$  {\bf then} \\

\line{FI8} \> \> \>  $dec \leftarrow \delta(sm, val)$; \\

\line{FI9} \> \> {\bf end if}  \\

\line{FI10} \> \textbf{end loop}
\end{tabbing}
\normalsize
\end{minipage}
}
\caption{The WRO iterated model}
\label{figWROprot}
}
\end{figure}

An explanation of the pseudocode in Figure \ref{figWROprot} follows.  All the variables $r,sm,$ $val,input$ and $dec$ are local to process 
$p_i$ and only when we analyse a protocol, we add a subindex $i$ to a variable to specify it is 
local to $p_i$. The symbol ``$id$'' contains the id if the executing process. Initially, $r$ is zero and 
$sm$ is assigned the contents of the read-only variable 
$input$, which contains the input value for process $p_i$; all other variables are initialized to 
$\nullval$. In each round, $p_i$ increments by one the loop counter $r$, accesses the current 
shared memory array $SM\left[r \right]$, writing all the information it has stored in $sm$ and 
$val$ (full information) 
and taking a snapshot of the shared memory array and after these operations, 
$p_i$ decides which shared object it is going to invoke by 
executing a deterministic function $h$ that returns an index $l$, 
which $p_i$ uses to invoke the shared object 
$\mathsf{S}\left[l\right]$ with 
some value $v$.
Finally, $p_i$ checks if $dec$ is equal to 
$\nullval$, if so, 
it executes a deterministic function $\delta$ to  determine if it may \emph{decide} 
 a valid output value or $\nullval$. Notice that in each round of a protocol, each process invokes at most one safe-consensus object of the 
array $\mathsf{S}$.

In Section \ref{secImpossConsensus}, we argue that the consensus task cannot be implemented in 
the WRO iterated model using safe-consensus objects (Theorem \ref{thmnowroprot}), this is the main 
result for this iterated model.

\subsection{The OWR iterated model}\label{secOWR}

The second iterated model that we study is the 
\emph{OWR (invoke Object, Write and Read) iterated 
model}. In this model, 
processes first invoke safe-consensus shared 
objects and then they write and snapshot the 
shared memory. We say that 
a protocol $\mathcal{A}$ is in the 
OWR iterated 
model if $\mathcal{A}$ can be written in 
the form specified in Figure \ref{figOWRprot}.

\begin{figure}[htb]
\centering{ 
\fbox{
\begin{minipage}[t]{150mm}
\footnotesize
\small
\renewcommand{\baselinestretch}{2.5}
\resetline
\begin{tabbing}
aaaaa\=aaaaa\=aaaaaa\=aaaa\kill 
\line{JI1} \> \textbf{init} $r\leftarrow 0$;
              $sm \leftarrow input$;
$dec \leftarrow \nullval$;
$val \leftarrow \nullval$;
\\~\\
\line{JI2} \> \textbf{loop forever} \\

\line{JI3} \> \>   $r\leftarrow r+1$;\\

\line{JI4} \> \> $val$ $\leftarrow 
             \mathsf{S}\left[ h( \langle r, id, sm, val \rangle) \right].\execobject(v)$;\\

\line{JI5} \> \> 
     $SM\left[ r \right].\writesnapshot(sm, val)$; \\

\line{JI6} \> \> $sm$ $\leftarrow SM \left[ r \right].\readsnapshot()$;\\

\line{JI7} \> \> {\bf if} $(dec=\nullval)$  {\bf then} \\

\line{JI8} \> \> \>  $dec \leftarrow \delta(sm, val)$; \\

\line{JI9} \> \> {\bf end if}  \\

\line{JI10} \> \textbf{end loop}
\end{tabbing}
\normalsize
\end{minipage}
}
\caption{The OWR iterated model}
\label{figOWRprot}
}
\end{figure}

This pseudocode is explained in a similar way to that used for the code in Figure \ref{figWROprot},
the only thing that changes is the place where we put the invocations to the safe-consensus
shared objects, before the execution of the
snapshot operations.

In Section \ref{secEquivWROOWR}, we argue that for task solvability, there is no real difference 
between the WRO and the OWR iterated models. Any protocol in the WRO iterated model can be simulated 
by a protocol in the OWR iterated model and the converse is also true, this is stated formally in 
Theorem \ref{thmWRO2OWR}. Combining this result with Theorem \ref{thmnowroprot}, we can conclude 
that it is impossible to solve consensus in the OWR iterated model (Corollary 
\ref{cornoowrconsensusprot}).

\subsection{The WOR iterated model}

The last iterated model that we introduce is 
constructed by placing the safe-consensus objects 
between the update and snapshot operations. 
A protocol
$\mathcal{A}$ is in the \emph{WOR (Write, invoke Object and Read)} iterated model 
if it can be written as specified in Figure \ref{figWORprot}.

\begin{figure}[htb]
\centering{ 
\fbox{
\begin{minipage}[t]{150mm}
\footnotesize
\small
\renewcommand{\baselinestretch}{2.5}
\resetline
\begin{tabbing}
aaaaa\=aaaa\=aaaa\=aaaaaaa\=aaaa\kill 
\line{GI1} \> \textbf{init} $r\leftarrow 0$;  
              $sm \leftarrow input$;
$dec \leftarrow \nullval$;
$val \leftarrow \nullval$;
\\~\\
\line{GI2} \> \textbf{loop forever} \\

\line{GI3} \> \>   $r\leftarrow r+1$;\\

\line{GI4} \> \> 
     $SM\left[ r \right].\writesnapshot(sm, val)$; \\


\line{GI5} \> \>  $val$ $\leftarrow 
             \mathsf{S}\left[ h( \langle r, id, sm, val \rangle) \right].exec(v)$;\\


\line{GI6} \> \> $sm$ $\leftarrow SM\left[ r \right].\readsnapshot()$;\\
$ $ \\
\line{GI7} \> \> {\bf if} $(dec=\nullval)$  {\bf then} \\

\line{GI8} \> \> \>  $dec \leftarrow \delta(sm, val)$; \\

\line{GI9} \> \> {\bf end if}  \\

\line{GI10} \> \textbf{end loop}
\end{tabbing}
\normalsize
\end{minipage}
}
\caption{The WOR iterated model
}
\label{figWORprot}
}
\end{figure}

It turns out that the WOR iterated model is quite different from the two previous iterated models. 
This is true because of the following facts:

\begin{itemize}
 \item The consensus task for $n$ processes can be solved in the WOR iterated model using 
only $\binom{n}{2}$ safe-consensus black boxes (Theorem \ref{thmWORconsensusbuenas}).
\item Any protocol in the WOR iterated model which implements consensus using safe-consensus objects 
must use at least $\binom{n}{2}$ safe-consensus objects.
\end{itemize}

The second fact, which is a consequence of Theorem \ref{lemminimumKboxesFULLWOR}, is the main 
result of this paper. It describes a matching lower bound on the number of safe-consensus objects 
needed to solve consensus by any protocol in the WOR iterated model which 
implements consensus. In Section \ref{secResultsWOR}, we give the detailed description of the WOR 
iterated protocol which implements consensus using safe-consensus objects, we prove its correctness 
and finally, we give the proof of the lower bound on the number of safe-consensus objects needed to 
solve consensus in 
the WOR iterated model. Our lower bound proof is based in the fact that, if for a protocol $\mathcal{A}$
there exists $m_0 \in \{2, \ldots, n\}$ such that there are not enough groups (of size $m_0$) of processes 
that can invoke safe-consensus shared objects, then $\mathcal{A}$ will fail to solve consensus. 
Specifically, $\mathcal{A}$ cannot solve consensus if the total number of groups of size $m_0$ is no more 
than $n - m_0$. See Theorem \ref{lemminimumKboxesFULLWOR} in Section \ref{asymtotictighbounds} 
for the full details.

\subsection{Shared objects represented as combinatorial sets}\label{secSharedObjsCombinatorialSets}

We now introduce some 
combinatorial definitions which will help us represent shared objects and the specific way in which 
the processes can invoke these shared objects. These definitions are useful in Sections 
\ref{secImpossConsensus} and \ref{secResultsWOR}.

For any $n\geqslant 1$ and $m \in \overline{n}$, let 
$V_{n,m} = \{ c \subseteq \overline{n} \mid \lvert c \rvert = m \}$. Given a protocol 
$\mathcal{A}$, 
we define for each $m\leqslant n$ the set $\Gamma_\mathcal{A}(n,m)\subseteq 
V_{n,m}$ as 
follows: $b = \{ i_1,\ldots, i_m \} \in \Gamma_\mathcal{A}(n,m)$ if and only if
in some execution of $\mathcal{A}$, only the processes $p_{i_1}, \ldots, p_{i_m}$ 
invoke a safe-consensus object of the array $\mathsf{S}$ (see Figures 
\ref{figWROprot}, \ref{figOWRprot} and \ref{figWORprot}). Roughly speaking,
each $c\in \Gamma_\mathcal{A}(n,m)$
represents a set 
of processes which together can invoke
safe-consensus shared objects
in
$\mathcal{A}$. 

For example, if $m = 3$ and $c=\{ i,j,k \}\in \Gamma_\mathcal{A}(n,3)$, then in at 
least one round
of $\mathcal{A}$, processes 
$p_i,p_j$ and  $p_k$ invoke a safe-consensus object and if in another round or perhaps 
another execution of $\mathcal{A}$, these processes invoke another safe-consensus object in the 
same 
way, then these two invocations are represented by the same set $c\in \Gamma_\mathcal{A}(n,3)$, 
that 
is, shared objects invoked by the same processes are considered as the same element of 
$\Gamma_\mathcal{A}(n,3)$ (repetitions do not count). On the other hand, if $d=\{ i,j,l \} 
\notin \Gamma_\mathcal{A}(n,3)$, then there does not exist an execution of $\mathcal{A}$ in 
which only the three processes $p_i,p_j$ and $p_l$ invoke a safe-consensus shared object. 

A set $b\in \Gamma_\mathcal{A}(n,m)$ is called a 
\emph{$m$-box} or simply a \emph{box}. An element $d \in \Gamma_\mathcal{A}(n,1)$ is called a 
\emph{trivial box}, it represents a safe-consensus object invoked only by one process, we 
consider such objects as useless, because they do not give any additional information to the 
process. We model a process that does not invoke a safe-consensus object as a process that invokes 
a safe-consensus object and no other process invokes that object, i.e., this safe-consensus object 
is represented by a trivial box. A process $p_i$ \emph{participates} in the box $b$ if $i\in b$. 
Let the set $\Gamma_\mathcal{A}(n)$ and the quantities  $\nu_\mathcal{A}(n,m)$ and $\nu_\mathcal{A}(n)$ be defined as 
follows:
\begin{itemize}
 \item [] $\Gamma_\mathcal{A}(n)=\bigcup_{m=2}^n \Gamma_\mathcal{A}(n,m)$;
 \item [] $\nu_\mathcal{A}(n,m)=\lvert \Gamma_\mathcal{A}(n,m) \rvert$;
 \item [] $\nu_\mathcal{A}(n)=\sum_{m=2}^n \nu_\mathcal{A}(n,m)$.
\end{itemize}

 From now on, for all our protocols, we consider global states only at the end of some iteration. Suppose 
that $P$ 
 is a 
reachable state in the protocol $\mathcal{A}$. The set of shared objects $o_1,\ldots,o_q$ 
invoked by the processes to enter state $P$ is represented by a set of boxes 
$\setofboxes{P}=\{ b_1,\ldots, b_q \}$ which
 is called the \emph{global invocation specification} of $P$. We assume 
without loss of generality that in all rounds, each process invokes some shared object, 
that is, the set $\setofboxes{P}$ satisfies
\[\bigcup_{b \in \setofboxes{P}} b = \overline{n},\] 
(a process that does not invoke a safe-consensus object
 can be seen as a process that invokes a safe-consensus object and no other process invokes that 
object). Notice that since each process invokes only one safe-consensus object in each round, 
$\setofboxes{P}$ is a \emph{partition} of $\overline{n}$.

If $b = \{ l_1, \ldots, l_s \} \in \setofboxes{P}$ is a box representing a 
safe-consensus shared object invoked by the processes $p_{l_1},\ldots,p_{l_s}$, we define the 
\emph{safe-consensus value of $b$ in $P$}, denoted by $\scvalof{b}{P}$ as the unique output value of 
the 
safe-consensus shared object represented by $b$.

\subsection{Additional definitions on global states}

We now introduce the notions of connectivity and paths between global states. These are well known 
concepts \cite{fischerImpossCon,HS99} and have become a fundamental tool to study 
distributed systems.

\subsubsection*{Paths of global states}

Two states $S,P$ are said to be \emph{adjacent} if there exists a non-empty subset $X\subseteq
\overline{n}$ such that all processes with ids in $X$ have the same local state in both $S$ and 
$P$. That is, for each $i\in X$, $p_i$ cannot \emph{distinguish} between $S$ and $P$. We denote 
this by $S\linkstates{X} P$.
States $S$ and $P$ are {\it connected}, if we can find a sequence of states (called a \emph{path})
\[\mathfrak{p} \colon S=P_1\linkstates{} \cdots \linkstates{} P_r=P,\] 
such that for all $j$ with $1\leqslant j \leqslant r-1$, $P_j$ and $P_{j+1}$ are adjacent.

Connectivity of global states are a key concept for many beautiful results in distributed systems, 
namely, impossibility proofs. The indistinguishability of states between processes is the building 
block to construct topological structures based on the executions of a given protocol and is 
fundamental in many papers \cite{BG93-2,getco10topologicaltheory,HS99,impossConSHM,SZ00}. In 
addition to the classic definitions of connectivity  and paths, we also introduce the 
following concepts. Let  $\mathfrak{q}\colon Q_1 \linkstates{} 
\cdots \linkstates{} Q_l$ be a path of connected states, define the \emph{set of states} 
$\states{\mathfrak{q}}$; the \emph{set of indistinguishability sets} $\isets{\mathfrak{q}}$;
and the \emph{degree of indistinguishability} $\deg \mathfrak{q}$, of $\mathfrak{q}$ as follows:
\begin{itemize}
 \item [] $\states{\mathfrak{q}}=\{ Q_1, \ldots, Q_l \}$;
\item  [] $\isets{\mathfrak{q}}=\{ X\subseteq \overline{n} \mid (\exists Q_i,Q_j \in
\states{\mathfrak{q}})(Q_i \linkstates{X} Q_j) \}$;
\item [] $\deg \mathfrak{q} = \min \{ \lvert X \rvert \mid X \in \isets{\mathfrak{q}} \}$.
\end{itemize}

The degree of indistinguishability of the path $\mathfrak{q}$ guarantees that we can find for any 
pair of states $Q_i,Q_j \in \states{\mathfrak{q}}$ with $Q_i \linkstates{} Q_j$, a set of 
processes $P\subset \Pi$ that cannot distinguish between $Q_i$ and $Q_j$ such that 
$|P| \geqslant \deg \mathfrak{q}$. The degree of indistinguishability of a path is usually of 
non-importance in the 
standard and well known bivalency proofs of the impossibility of consensus in various systems 
\cite{fischerImpossCon,impossConSHM}, but in the impossibility proof of 
Section \ref{secImpossConsensus} and also the lower bound proof of Section 
\ref{secResultsWOR}, measuring the degree of indistinguishability will be a recurring action in all the 
proofs.

A path $\mathfrak{p}$ of connected states of $\mathcal{A}$ is said to be \emph{\regularpath} if and 
only if $\setofboxes{S} = \setofboxes{Q}$ for all $S,Q\in \states{\mathfrak{p}}$, that is, 
$\mathfrak{p}$ is \regularpath when all the states in the set $\states{\mathfrak{p}}$ have the same 
global invocation specification. 

\begin{lemma}\label{lemAllNormalFormsSameboxes}
Let $\mathcal{A}$ be an iterated protocol for $n$ processes, $A \subseteq \overline{n}$ a
non-empty set and $S,Q$ two reachable states of $\mathcal{A}$ in round $r$, such that 
for all $j\in A$, $p_j$ 
cannot distinguish between $S$ and $Q$.
Then all processes with ids in $A$ participate in the same boxes in $S$ and $Q$. 
\end{lemma}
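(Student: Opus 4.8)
The plan is to show that the identity of the safe-consensus object that a process invokes in a given round is already determined by that process' current snapshot variable, and then to read the conclusion off that fact. Recall that in each of the three models (Figures \ref{figWORprot}, \ref{figWROprot} and \ref{figOWRprot}), the object $p_i$ invokes in a round $\rho$ is $\mathsf{S}\bigl[h(\langle \rho, i, sm, val\rangle)\bigr]$, where $sm$ and $val$ are the values held by $p_i$'s local variables $sm$ and $val$ at the moment of that invocation (line GI5 for WOR, FI6 for WRO, JI4 for OWR). So the box $p_i$ participates in during round $\rho$ is exactly the set of ids $k$ for which $p_k$ computes this same index in round $\rho$. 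I interpret the hypothesis ``$p_j$ has the same snapshot information in $S$ and $Q$'' as saying that $p_j$'s local variable $sm$ has the same value in $S$ and in $Q$.

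First I would prove an \emph{extraction claim}: for every $\rho\in\{1,\dots,r\}$ and every process $p_i$, the index $h(\langle\rho,i,sm,val\rangle)$ that $p_i$ feeds to $\mathsf{S}$ in round $\rho$ is a deterministic function of $i$, $\rho$, and the snapshot $sm_i$ that $p_i$ holds at the end of round $r$. This rests on the full-information discipline plus the one-shot structure of each round: in every round $p_i$ first writes its current pair $(sm,val)$ into the fresh array $SM[\rho]$ and then also snapshots $SM[\rho]$, so the $i$-th entry of $p_i$'s round-$\rho$ snapshot is exactly the pair $p_i$ held at the start of round $\rho$; hence $p_i$'s round-$\rho$ snapshot is nested as (the first component of) the $i$-th entry of its round-$(\rho+1)$ snapshot, and so on up to round $r$. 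Unrolling this nesting (formally, downward induction on $\rho$) recovers, for each $\rho\le r$, the pair $(sm,val)$ that $p_i$ held at the start of round $\rho$, the recursion bottoming out at $(input_i,\nullval)$ in round $1$; from that pair one then recovers the precise values passed to $h$ in round $\rho$ --- immediately for WOR and OWR, and after one additional snapshot step for WRO, where the invocation uses the round-$\rho$ snapshot itself. Carrying out this bookkeeping --- pinning down exactly which values $sm$ and $val$ hold at the invocation line in each of the three figures, and checking the ``own write is visible in own snapshot'' property --- is the step I expect to require the most care.

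Granting the extraction claim, the lemma follows quickly. Fix $j\in A$ and a round $\rho\le r$. By hypothesis $p_j$'s snapshot variable has the same value in $S$ and $Q$, so by the extraction claim the index $h(\langle\rho,j,\cdot,\cdot\rangle)$ that $p_j$ invokes in round $\rho$ is the same integer in $S$ and in $Q$; likewise for every $k\in A$. Therefore, for $j,k\in A$ and any round $\rho\le r$, processes $p_j$ and $p_k$ invoke a common safe-consensus object in round $\rho$ of $S$ if and only if they do so in round $\rho$ of $Q$. Ranging over all $\rho\in\{1,\dots,r\}$, this is precisely the assertion that the boxes recorded in $\setofboxes{S}$ and in $\setofboxes{Q}$ have the same trace on $A$; that is, every process with id in $A$ participates in the same boxes in $S$ and $Q$.

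Finally, a word on scope: the whole argument uses only $p_j$'s own snapshot to pin down $p_j$'s invoked index in each round, so it is insensitive to however the executions underlying $S$ and $Q$ may differ on processes outside the views of $A$; this is exactly why the hypothesis only needs to constrain the snapshots of the processes in $A$.
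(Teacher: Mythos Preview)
The paper states this lemma without proof, treating it as immediate from the full-information structure of the iterated models; there is nothing to compare against. Your argument is correct and identifies precisely the mechanism the paper is relying on: because each process writes its current $(sm,val)$ pair before (or, in WRO, also after) invoking $h$, and because a process always sees its own write in its own snapshot, $p_j$'s end-of-round-$r$ snapshot determines the arguments $p_j$ fed to $h$ in round $r$, hence the object index it invoked.

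Two remarks. First, you prove more than is required. The set $\setofboxes{S}$ records only the objects invoked in the single round in which the processes enter $S$ (``the set of shared objects invoked by the processes to enter the state $S$''), not the entire history up to $r$. So the full downward induction over all $\rho\le r$ is unnecessary: for WOR and WRO one level of unnesting (reading $p_j$'s own entry of its round-$r$ snapshot) already recovers the $(sm,val)$ pair used at the invocation line; for OWR one extra level is needed to recover $val$ from the previous round. Your general extraction claim is true, just overkill here.

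Second, your reading of the conclusion---that the partition of $A$ induced by ``invokes the same object'' is identical in $S$ and $Q$, i.e.\ the boxes of $\setofboxes{S}$ and $\setofboxes{Q}$ have the same trace on $A$---is the correct one, and it is exactly what the paper uses downstream (e.g.\ in the proof of Lemma~\ref{lemWROsamesafeconfiguration}). A stronger literal reading (``the box of $\setofboxes{S}$ containing $j$ equals, as a subset of $\overline{n}$, the box of $\setofboxes{Q}$ containing $j$'') would fail in general, since processes outside $A$ may invoke different objects in $S$ and in $Q$; your closing paragraph rightly flags that the argument says nothing about such processes.
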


\begin{proof}[Sketch]
  Intuitively, for each $j \in A$, $p_j$ has the same content in all its local variables, thus 
  it inputs the same values for the function $h$ that outputs the index of the shared object (box) $p_j$ 
  is going to invoke in $S$ and $Q$ respectively. Therefore $h$ outputs the same index in both states,
  gather that, $p_j$ invokes the same shared object in $S$ and $Q$.
\end{proof}

\subsection{Consensus protocols}

We 
need some extra definitions regarding 
consensus protocols: 
If $v$ is a valid input value of 
consensus for processes and $S$ is a state,  we say that $S$ is \emph{$v$-valent} if in 
any possible 
execution starting from $S$, there exists a process that outputs $v$. $S$ is \emph{univalent} if in 
every execution starting from $S$, processes always output the same value. If $S$ is not 
univalent, then $S$ is \emph{bivalent}\footnote{This definition of valency is based on \cite{fischerImpossCon}.}.

\begin{lemma}\label{leministates}
Any two initial states of a protocol for consensus are connected.
\end{lemma}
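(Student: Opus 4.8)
## Proof proposal for Lemma \ref{leministates}

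The plan is to show that any two initial states of a consensus protocol are connected by a path of states, each of which is itself an initial state (so the path stays entirely among initial states, where the global invocation specification is empty anyway). The key observation is that initial states differ only in the values of the input components of the processes: an initial state is completely determined by the tuple $(v_1,\dots,v_n)$ of inputs, where $v_i$ is the input of $p_i$. So it suffices to show that two initial states whose input tuples differ in a single coordinate are adjacent, and then chain these single-coordinate changes together.

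First I would fix two initial states $S$ and $Q$ with input tuples $(v_1,\dots,v_n)$ and $(w_1,\dots,w_n)$. I would build a sequence of initial states $S = S_0, S_1, \dots, S_n = Q$, where $S_k$ has input tuple $(w_1,\dots,w_k,v_{k+1},\dots,v_n)$: that is, $S_k$ is obtained from $S_{k-1}$ by switching the input of $p_k$ from $v_k$ to $w_k$ (if $v_k = w_k$, then $S_k = S_{k-1}$ and we just skip it). Each $S_{k-1}$ and $S_k$ differ only in the local state of $p_k$, hence every process $p_j$ with $j \neq k$ has the same (initial) local state in both; taking $X = \overline{n} - k$, which is non-empty since $n \geqslant 1$ forces... well, if $n = 1$ the only initial states with the same... here I would just note $X = \overline{n}-k$ is non-empty whenever $n \geqslant 2$, and the $n=1$ case is trivial since then the single process sees its own input and the statement is vacuous unless $v_1 = w_1$. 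So $S_{k-1} \linkstates{\overline{n}-k} S_k$, i.e.\ they are adjacent.

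Concatenating, we obtain a path $S = S_0 \linkstates{} S_1 \linkstates{} \cdots \linkstates{} S_n = Q$, which witnesses that $S$ and $Q$ are connected. I expect the only subtlety — and it is minor — is the degenerate handling of $n = 1$ and the bookkeeping of skipping coordinates where $v_k = w_k$ so that every consecutive pair in the final path is genuinely adjacent (we cannot have $S_{k-1} = S_k$ appear as a "step"); I would simply define the path by listing only the distinct states in the sequence. There is no real obstacle here: the lemma is essentially the statement that the input complex of a consensus protocol is connected, and the proof is the standard one-coordinate-at-a-time argument.
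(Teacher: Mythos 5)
Your proposal is correct and follows essentially the same route as the paper: both proofs connect two initial states by changing one process's input value at a time, observing that each single-coordinate change yields a pair of states indistinguishable to the remaining $n-1$ processes and chaining these adjacencies. Your extra remarks about skipping coordinates with $v_k = w_k$ and the trivial $n=1$ case are sound bookkeeping that the paper leaves implicit.
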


\begin{proof}
 Let $S,P$ be two initial states. If $S$ and $P$
differ only in the initial value $input_i$ of a single process $p_i$, then $S$ and $P$ are adjacent
(Only $p_i$ can distinguish between the two states, the rest of the processes have the same initial
values). In the case $S$ and $P$ differ in more that one initial value, they can be connected by a
sequence of initial states $S=S_1 \linkstates{} \cdots \linkstates{} S_q=P$ such that $S_j,S_{j+1}$
differ only in the initial value of some process (we obtain $S_{j+1}$ from $S_j$ by changing the
input value of some process $p_i$, the result is a valid input of the consensus problem), hence they
are adjacent. In summary, $S$ and $P$ are connected.
\end{proof}

We need one last result about consensus protocols, we omit its easy proof.

\begin{lemma}\label{lemvvstates}
Suppose that $\mathcal{A}$ is a protocol that solves the consensus task and that $I,J$ are 
connected initial states of $\mathcal{A}$, such that for all rounds $r \geqslant 1$, $I_r, J_r$ are 
connected successor states of $I$ and $J$ respectively. Also, assume that $I$ is a $v$-valent 
state. Then $J$ is $v$-valent.
\end{lemma}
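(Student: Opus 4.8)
The plan is to prove Lemma~\ref{lemvvstates} by tracking a valency-preserving chain of states from $I$ to $J$ through the connecting sequence of initial states, using the hypothesis that connectivity is preserved at every round. First I would recall what we are given: $I$ and $J$ are connected initial states of the consensus protocol $\mathcal{A}$, meaning (by Lemma~\ref{leministates}, or by direct hypothesis here) there is a path $I = I_1 \linkstates{} \cdots \linkstates{} I_q = J$ of initial states; moreover for every round $r$, the corresponding successor states $I^r$ and $J^r$ are connected, and in fact the same should hold along every link of the chain. The strategy is to show that $v$-valency propagates across a single adjacency link, and then iterate along the chain.

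The key step is the single-link argument. Suppose $S$ and $T$ are adjacent initial states, $S \linkstates{X} T$ for some non-empty $X \subseteq \overline{n}$, with $S$ being $v$-valent, and suppose that for all rounds $r$ the successors $S^r$ and $T^r$ are connected (more precisely, there is a round-$r$ path of states whose endpoints agree with $S$ and $T$ respectively in their natural extensions). I want to conclude $T$ is $v$-valent. The idea: fix any execution $\alpha$ starting from $T$; it reaches some round-$r$ decision state in which every process has output (by wait-freedom / termination, $r$ is finite). Because processes in $X$ cannot distinguish $S$ from $T$, and because $S^r$ is connected to $T^r$ by a path of reachable states in which adjacent states share the output of at least one process, I can walk along that path: the value output by a process in a state adjacent to $T^r$ must coincide with the value it outputs in $T^r$, and by Agreement all processes in a given state output the same value, so the output value is constant along the whole path. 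Since $S$ is $v$-valent, some execution from $S$ reaches $S^r$ (or a state connected to it) outputting $v$; hence the constant value along the path is $v$; hence $T^r$, and therefore $\alpha$, outputs $v$. As $\alpha$ was arbitrary, $T$ is $v$-valent.

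The main obstacle I anticipate is making precise the statement ``$I^r$ and $J^r$ are connected successor states'' and ensuring the connecting path consists of \emph{reachable} states of $\mathcal{A}$ in round $r$, so that Agreement applies to each of them and so that the path endpoints really correspond to arbitrary executions from $I$ and $J$. One has to be careful that from any initial state one can always \emph{extend} a partial execution to a full decision state (this is exactly the definition of a protocol solving a task together with wait-freedom), and that indistinguishability links preserve the decided value: if $P \linkstates{X} P'$ with both in round $r$ and both extendable to decision states, then the common processes in $X$ output the same value in any extension, and Agreement forces the whole-state output value to agree. A minor point is that the chain $I_1, \dots, I_q$ may need its pairwise round-$r$ connectivity hypothesis stated link-by-link rather than only for the endpoints; I would either assume this is what the hypothesis means (it is the natural reading), or derive it from Lemma~\ref{leministates} applied to each adjacent pair. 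With that in hand, the induction over the $q-1$ links of the chain finishes the proof: $I$ is $v$-valent, each link propagates $v$-valency, so $J = I_q$ is $v$-valent.
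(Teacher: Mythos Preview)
The paper omits the proof of this lemma entirely, so there is nothing to compare against. Your core mechanism --- pick $r$ large enough that all processes have decided, then walk along the round-$r$ path from $I^r$ to $J^r$ and use Agreement at each adjacency link to propagate the decision value --- is the standard and correct argument for results of this kind.

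There is, however, a genuine gap in your single-link step. You ``fix any execution $\alpha$ starting from $T$'' and try to show it outputs $v$, but the hypothesis only supplies one \emph{specific} successor $J^r$ per round that is connected to $I^r$; it says nothing about the round-$r$ state reached by an arbitrary $\alpha$. So your argument only shows that the particular execution leading to the given $J^r$ decides $v$, not that \emph{every} execution from $J$ does --- hence it does not literally establish $v$-valency of $J$ as defined. (This is arguably a looseness in the lemma statement itself: where the lemma is used in the paper, $J$ is the all-ones state $U$, already $1$-valent by Validity, and exhibiting a single execution from $U$ that decides $0$ is enough for the desired contradiction.) Separately, the detour through the chain $I = I_1 \linkstates{} \cdots \linkstates{} I_q = J$ of initial states is unnecessary and is what forces you to worry about per-link round-$r$ connectivity that the hypothesis does not state and that Lemma~\ref{leministates} cannot supply; the hypothesis already gives you a direct round-$r$ path from $I^r$ to $J^r$, and that is all the path-walking argument needs.
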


\begin{proof}
  Suppose that $I$ is $v$-valent and $J$ is $v^\prime$-valent ($v\neq v^\prime$). 
  By hypothesis, for all rounds $r\geqslant 1$, $I_r$ and $J_r$ are connected successor states 
  of $I$ and $J$ respectively. Thus, for any $r \geqslant 0$, we can find $r$-round partial executions
  \[ I=I_0,\pi_1,I_1,\ldots,\pi_r,I_r\quad\text{and}\quad J=J_0,\pi^\prime_1,J_1,\ldots,\pi^\prime_r,J_r, \]
  such that $I_r$ and $J_r$ are connected states for all $r\geqslant 0$. Given that $\mathcal{A}$ solves consensus, 
  there must exists $k\geqslant 1$ such that $I_u$ and $J_u$ are decision states for all $u \geqslant k$. 
  Let $\mathfrak{p}_u$ be a path connecting $I_u$ and $J_u$. For the rest of the proof, assume that 
  $\states{\mathfrak{p}_u} =\{ I_u, J_u\}$ (i.e., $I_u \linkstates{} J_u$), as the general case will follow easily from this case and induction 
  on $|\states{\mathfrak{p}_u}|$.
  
  Since 
  $I_u$ is a successor state of $I$, which is a $v$-valent state, all processes decide $v$ in $I_u$; similarly, as $J_u$
  is a successor state of $J$ (a $v^\prime$-valent state), processes decide $v^\prime$ in $J_u$. Let $p$ be a process that 
  cannot distinguish between $I_u$ and $J_u$. Then the local state of $p$ in $I_u$ and $J_u$ must be the same and this includes 
  its output component. But this is a contradiction, because in $I_u$, $p$ decides $v$ while in $J_u$, $p$ decides 
  $v^\prime$. We conclude that $J$ must be $v$-valent, such as $I$.
\end{proof}

\section{Impossibility of consensus in the WRO and OWR iterated models}\label{secImpossConsensus}

In this section, we prove that the consensus task cannot be implemented in the WRO iterated model with 
safe-consensus objects and we extend this result to the OWR iterated model, by showing that the  WRO and 
OWR iterated  models are equivalent models in their computational power to solve tasks. Given
that there exists a protocol in the standard 
model that solves consensus with registers 
and safe-consensus objects \cite{yehudagafni},
it is a consequence of the results presented 
in this section that the WRO and OWR iterated 
models are not equivalent to the standard model 
extended with safe-consensus objects.

\subsection{The impossibility of consensus in the WRO iterated model}\label{secImpossConsensusWRO}

We give a negative result concerning protocols in the WRO iterated model.
Although there exist wait-free shared memory protocols that can
 solve consensus using safe-consensus objects \cite{yehudagafni}, in this paper we show that
there is no protocol in the WRO iterated model that solves consensus using safe-consensus objects.
We first introduce some extra definitions and results that we will be using and  after that, we 
prove that  there is no protocol in the  WRO  iterated model which can solve the consensus task. 
For simplicity, we will refer to a protocol 
in the WRO iterated model simply as a WRO protocol. 

\begin{lemma}\label{lemWROsamesafeconfiguration}
Let $\mathcal{A}$ be a WRO protocol for $n$ processes, $A_i=\overline{n} - i$ for
some $i\in \overline{n}$. Assume there exist $S,Q$ two reachable states of $\mathcal{A}$ in round $r$, such that 
for all $j\in A_i$, $p_j$ 
does not distinguish between 
$S$ and $Q$. Then
$\setofboxes{S}=\setofboxes{Q}$.
\end{lemma}

\begin{proof}
 Let $\mathcal{A}$ be a WRO protocol,
$A_i=\overline{n} - i$ and $S,Q$ such that every process $p_j$ with $j\in A_i$ 
cannot distinguish between 
$S$ and $Q$. By Lemma \ref{lemAllNormalFormsSameboxes}, if $b\in
\setofboxes{S}$ is a box such that $i\notin b$, then $b\in \setofboxes{Q}$.
For the box $c\in \setofboxes{S}$ that contains the id $i$, we argue by cases:
\begin{itemize}
 \item [] {\it Case I.} $\lvert c \rvert = 1$. All processes with ids in $A_i$ participate in the
same boxes in both states $S$ and $Q$ and $p_i$ does not participate in those boxes, thus $c = \{ i
\} \in \setofboxes{Q}$.
\item  [] {\it Case II.} $\lvert c \rvert > 1$. There exists a process $p_j$ with $j\in c$ and
$j\neq i$. Then $j\in A_i$, so that by Lemma \ref{lemAllNormalFormsSameboxes}, $c\in
\setofboxes{Q}$.
\end{itemize}
We have prove that $\setofboxes{S}\subseteq \setofboxes{Q}$. To show that the other inclusion
holds, the argument is symmetric. Therefore $\setofboxes{S} =
\setofboxes{Q}$.
\end{proof}

In order to prove the impossibility of consensus in the WRO iterated model (Theorem 
\ref{thmnowroprot}), we need various results that describe the structure
of protocols in the WRO iterated model. These results tell us that for any given WRO protocol, the
degree of indistinguishability of paths of connected states 
is high, even with the added power of 
safe-consensus objects, because we can make the safe-consensus values returned by the shared 
objects the same in all these reachable states, so that they cannot help the processes distinguish between these states. 
This is the main reason (of) why consensus cannot be implemented with safe-consensus in the WRO iterated model.

The following construction
will be useful for
the rest of this section. If $A_1,\ldots,$ $A_q\subset\overline{n}$ $(q\geqslant 1)$ is a collection of
disjoin subsets of $\overline{n}$, define the round schedule 
\[\xschedulewro{A_1,\ldots,A_q,Y}\] 
for $\mathcal{A}$ as:
\begin{equation*}
 \writeop{A_1},\readop{A_1},\writeop{A_2},\readop{A_2},\ldots,\writeop{A_q},\readop{A_q},\writeop{Y}
,\readop{Y},\scop{\overline{n}}
\end{equation*}
where 
$Y=\overline{n} - (\bigcup_{i=1}^q A_i)$. Sometimes, 
we omit the set $Y$ and just write $\xschedulewro{A_1,\ldots,A_q}$. The key point 
of the round schedule $\xschedulewro{A_1,\ldots,A_q,Y}$ is that in a given 
one-round execution of a WRO protocol, it will allow us to choose the safe-consensus values of the safe-consensus shared objects invoked by at least two 
processes (using the Safe-Validity property of the safe-consensus task).

Our next Lemma is the first step to prove Theorem \ref{thmnowroprot}. Roughly 
speaking, it tells us that if a WRO protocol has entered any reachable state 
$S$, then we can find two one-round successor states of $S$, such that we 
can connect these states with a small path of very high indistinguishability degree. These two successor states of $S$ are obtained when the processes 
execute the WRO protocol in such a way that only one process is delayed in the 
update and snapshot operations, but all processes execute the safe-consensus 
shared objects concurrently in every state of the resulting path, 
thus we can keep the safe-consensus values of every 
safe-consensus shared object the same in every state.

\begin{lemma}
\label{lembasicBregularpathWRO}
 Let $n\geqslant 2$, $\mathcal{A}$ a WRO protocol 
and
$i,j\in \overline{n}$. If $S$ is a reachable state of $\mathcal{A}$ in round 
$r\geqslant 0$ and the states $Q_i=S\cdot \xschedulewro{\overline{n} - i}$ and 
$Q_j=S\cdot \xschedulewro{\overline{n} - j}$ satisfy the conditions
\begin{itemize}
	\item[A.] $\setofboxes{Q_i} = \setofboxes{Q_j}$;
	\item[B.] $(\forall b\in \setofboxes{Q_i})
	(\scvalof{b}{Q_i}=\scvalof{b}{Q_j})$.
\end{itemize}
Then $Q_i$ and $Q_j$ are connected in round $r+1$ of $\mathcal{A}$ with a \regularpath
path $\mathfrak{q}$ with $\deg \mathfrak{q} \geqslant n - 1$.
\end{lemma}

\begin{proof}
 If $i=j$, the result is immediate. Suppose that $i\neq j$ and $\overline{n} - i = \{
l_1,\ldots,l_{n - 1} \}$ with $l_{n-1}=j$. We show that 
$Q_i$
and $Q_j$ can be connected with a \regularpath sequence with
indistinguishability degree $n - 1$ by building three subsequences 
$\mathfrak{q}_1,\mathfrak{q}_2$
and $\mathfrak{q}_3$. 

We proceed to build the sequence $\mathfrak{q}_1$. We claim that there exists a state 
$Q$ of the form $Q=S\cdot \xschedulewro{l_1, \overline{n} - \{ i, l_1\},i}$ such that 
\begin{itemize}
	\item $\setofboxes{Q_i} = \setofboxes{Q}$;
	\item $\scvalof{b}{Q_i}=\scvalof{b}{Q}$  for all $b\in \setofboxes{Q}$.
\end{itemize}
The first property holds because in any state of the form 
$S\cdot \xschedulewro{l_1, \overline{n} - \{ i, l_1\},i}$, 
only process $l_1$ can distinguish between $Q_i$ and 
$Q$, thus by Lemma \ref{lemWROsamesafeconfiguration} we obtain 
that $\setofboxes{Q_i} $ $= \setofboxes{Q}$. Now, in both $Q_i$ and $Q$, 
all processes execute the safe-consensus shared objects concurrently and we can use 
the Safe-Validity property of the safe-consensus task to find a state of $\mathcal{A}$ in round 
$r+1$ such that $Q=S\cdot \xschedulewro{l_1, \overline{n} - \{ i, l_1\},i}$ in which 
$\scvalof{b}{Q_i}=\scvalof{b}{Q}$ for each 
$b\in \setofboxes{Q}=\setofboxes{Q_i}$. This proves the second property. 
Therefore, we have that 
\[ Q_i=S\cdot \xschedulewro{\overline{n} - i,i} \linkstates{\overline{n} - l_1} 
S\cdot \xschedulewro{l_1,\overline{n} - \{ i, l_1\},i} = Q	
\]
and this sequence is a 
\regularpath path. Repeating the same argument, we can connect $Q$ with a state of 
the form $P=S\cdot \xschedulewro{l_1,l_2,\overline{n} - \{ i, l_1,l_2\},i}$ and obtain 
the three state \regularpath path $Q_i \linkstates{\overline{n} - l_1} Q \linkstates{\overline{n} - l_2} P$. 
Continuing in this way, we can show that the path 
$\mathfrak{q}_1$ is given by
\begin{equation*}\label{eqlembasicBregularpathWRO1}
 S\cdot \xschedulewro{\overline{n} - i,i} \linkstates{\overline{n} - l_1} S\cdot \xschedulewro{l_1,
\overline{n} - \{ i, l_1\},i} \linkstates{\overline{n} - l_2} \cdots \linkstates{\overline{n} -
l_{n - 2}} S\cdot \xschedulewro{l_1,\ldots, l_{n - 1}, i},
\end{equation*}
which is clearly a \regularpath path by repeated use of the previous argument. In
the same way, we construct the sequence $\mathfrak{q}_2$ as
\begin{equation*}\label{eqlembasicBregularpathWRO2}
 S\cdot \xschedulewro{l_1,\ldots, l_{n - 1}, i} \linkstates{\overline{n} - l_{n - 1}} S\cdot
\xschedulewro{l_1,\ldots, \{ l_{n - 1}, i\}} \linkstates{\overline{n} - i} 
S\cdot \xschedulewro{l_1,\ldots, i, l_{n - 1}}
\end{equation*}
and finally, in a very similar way, we build $\mathfrak{q}_3$ as follows,
\begin{multline*}\label{eqlembasicBregularpathWRO3}
S\cdot \xschedulewro{l_1,\ldots, i, l_{n - 1}} \linkstates{\overline{n} - l_{n - 2}} S\cdot
\xschedulewro{l_1,\ldots, \{ l_{n - 2}, i\}, l_{n - 1}} \\ 
\linkstates{\overline{n} - l_{n - 3}} \cdots \linkstates{\overline{n} - l_{2}} \\
S\cdot \xschedulewro{l_1,\overline{n} - \{l_1, l_{n - 1} \}, l_{n - 1}} \linkstates{\overline{n} -
l_{1}} S\cdot \xschedulewro{\overline{n} - l_{n - 1},l_{n - 1}}.
\end{multline*}
Both $\mathfrak{q}_2$ and $\mathfrak{q}_3$ are
\regularpath paths and for $k=1,2,3$, $\deg \mathfrak{q}_k = n - 1$. Notice that the processes
execute the safe-consensus objects in the same way in all the states of the previous
sequences, and
by 
construction, 
all the states have the same invocation specification. 
As
$j=l_{n -
1}$, $S\cdot \xschedulewro{\overline{n} - l_{n - 1},l_{n - 1}} = S\cdot \xschedulewro{\overline{n} -
j,j}=Q_j$, thus we can join $\mathfrak{q}_1,\mathfrak{q}_2$ and $\mathfrak{q}_3$ to obtain a
\regularpath path $\mathfrak{q} \colon Q_i \linkstates{} \cdots \linkstates{} Q_j$ such that $\deg
\mathfrak{q} = n - 1$. This finishes the proof.
\end{proof}

Lemma \ref{lemconstates} is the next step towards proving Theorem \ref{thmnowroprot}. 
Basically, it says that for any WRO protocol, if we already have a path of connected states 
with high indistinguishability degree for some round, then we can use this
path to build a new sequence of connected states in the next round with a high degree
of indistinguishability. Again, using the round schedules where the processes execute
the safe-consensus shared objects concurrently, we can render useless the 
safe-consensus values, so that the number of processes that cannot distinguish 
between a pair of states of the path is as high as possible.

\begin{lemma}\label{lemconstates}
Let $\mathcal{A}$ be a WRO protocol and $S,P$ be two reachable states of some
round $r\geqslant 0$ such that $S$ and $P$ are connected with a sequence 
$\mathfrak{s}$ such
that $\deg \mathfrak{s} \geqslant n - 1$;
suppose also that the number of participating processes
is $n\geqslant 2$. Then there exist successor states $S^{\prime},P^{\prime}$ of $S$ and $P$
respectively, in round $r+1$ of $\mathcal{A}$, such that $S^{\prime}$ and $P^{\prime}$ are connected with
a \regularpath path $\mathfrak{q}$ with $\deg \mathfrak{q} \geqslant n - 1$.
\end{lemma}

\begin{proof}
 Suppose that $S$ and $P$ are connected with a
sequence $\mathfrak{s} \colon S=S_1 \linkstates{} \cdots \linkstates{} S_m=P$ such that for all 
$j$ $(1\leqslant j \leqslant m-1)$,
$S_j\linkstates{X} S_{j+1}$, where $\lvert X \rvert \geqslant n - 1$. Let $p_1,\ldots,p_n$ be the
set of participating processes. We now construct a sequence of connected states 
\[\mathfrak{q} \colon Q_1 \linkstates{} \cdots \linkstates{} Q_s, \]
in such a way that each $Q_i$ is a successor state to some state $S_j$ and $\deg \mathfrak{q} \geqslant n
- 1$. We use induction on $m$; for the basis, consider the adjacent states $S_1,S_2$ and suppose
that all processes with ids in the set $X_1$ $(\lvert X_1 \rvert \geqslant n - 1)$ cannot
distinguish between $S_1$ and $S_2$. By the Safe-Validity property of the safe-consensus task and 
Lemma \ref{lemWROsamesafeconfiguration}, we can find states $Q_1,Q_2$ such that
\begin{itemize}
	\item $Q_1=S_1\cdot \xschedulewro{X_1}$ and $Q_2=S_2\cdot \xschedulewro{X_1}$;
	\item $Q_1$ and $Q_2$ are indistinguishable for all processes with ids in $X_{1}$;
	\item $\setofboxes{Q_1}=\setofboxes{Q_2}$.
\end{itemize}
 Combining these facts, we can see that 
 the small sequence
$Q_1\linkstates{X_1} Q_2$ is \regularpath and has indistinguishability degree at least $n - 1$. 
Assume now that we have build the sequence 
\begin{equation*}
\mathfrak{q}^\prime \colon Q_1 \linkstates{} \cdots \linkstates{} Q_{s^\prime},
\end{equation*}
of connected successor states for $S_1\linkstates{} \cdots \linkstates{} S_q$ $(1\leqslant q < m)$
such that $\mathfrak{q}^\prime$ is \regularpath, $\deg \mathfrak{q}^\prime \geqslant n - 1$ and
$Q_{s^\prime}=S_q \cdot \xschedulewro{X}$ with $\lvert X \rvert \geqslant n -1$. Let
$X_q$ with $\lvert X_q \rvert \geqslant n - 1$ be a set of processes' ids that cannot
distinguish between $S_q$ and
$S_{q+1}$. To connect $Q_{s^\prime}$ with a successor state for $S_{q+1}$, we first use Lemma
\ref{lembasicBregularpathWRO} to connect $Q_{s^\prime}$ and $Q=S_q \cdot \xschedulewro{X_q}$ by
means of a \regularpath sequence $\mathfrak{p}$ such that $\deg \mathfrak{p} \geqslant n - 1$.
Second, notice that we have the small \regularpath path $\mathfrak{s}\colon Q \linkstates{X_q}
S_{q+1} \cdot \xschedulewro{X_q}=Q_{s^\prime +1}$. In the end, we use
$\mathfrak{q}^\prime,\mathfrak{p}$ and $\mathfrak{s}$ to get a \regularpath sequence
\[ \mathfrak{q} \colon Q_1 \linkstates{} \cdots \linkstates{} Q_{s^\prime +1}, \]
which fulfils the inequality $\deg \mathfrak{q} \geqslant n - 1$. By induction, the result
follows.
\end{proof}

Using Lemmas \ref{lembasicBregularpathWRO} and \ref{lemconstates} and previous results 
regarding bivalency arguments, we can finally prove the impossibility of consensus in the WRO iterated model.

\begin{theorem}\label{thmnowroprot}
There is no protocol for consensus in the WRO iterated model using safe-consensus objects.
\end{theorem}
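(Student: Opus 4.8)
The plan is to derive a contradiction from the assumption that some WRO protocol $\mathcal{A}$ solves consensus for $n\geqslant 2$ processes, using the classical valency-plus-connectivity schema on top of the structural lemmas just established for the WRO model. First I would fix the two extreme initial states $O$ and $U$ in which every process has input $0$ and input $1$, respectively; by Validity, $O$ is $0$-valent and $U$ is $1$-valent. By Lemma~\ref{leministates}, $O$ and $U$ are connected through a chain of initial states, and since consecutive states in that chain differ only in the input of a single process, each adjacency hides the state from all but one process. Hence this chain is a connecting sequence $\mathfrak{s}_0$ with $\deg\mathfrak{s}_0\geqslant n-1$.

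The core of the proof is an induction on the round number $r$, establishing that for every $r\geqslant 0$ there exist successor states $O^r$ of $O$ and $U^r$ of $U$ in round $r$ of $\mathcal{A}$ that are connected by a \regularpath path of degree at least $n-1$. The base case $r=0$ is exactly the chain $\mathfrak{s}_0$ of the previous paragraph (with $O^0=O$, $U^0=U$). For the inductive step, given connected round-$r$ states of degree at least $n-1$, Lemma~\ref{lemconstates} yields connected round-$(r+1)$ successor states joined by a \regularpath path of degree at least $n-1$; since the successor relation is transitive (one simply runs $\mathcal{A}$ for more rounds), the states so obtained are still successors of $O$ and $U$. This is where the real work of the section is concentrated, but it has been front-loaded into Lemmas~\ref{lembasicBregularpathWRO} and~\ref{lemconstates}, whose proofs use Lemma~\ref{lemWROsamesafeconfiguration} together with the Safe-Validity freedom of the safe-consensus boxes to keep every safe-consensus output consistent along the constructed paths.

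Finally I would invoke Lemma~\ref{lemvvstates}: $O$ and $U$ are connected initial states, $O$ is $0$-valent, and by the induction above, for every round $r$ the states $O^r$ and $U^r$ are connected successor states of $O$ and $U$. Therefore $U$ is $0$-valent, contradicting the fact that $U$ is $1$-valent. Hence no protocol in the WRO iterated model solves consensus using safe-consensus objects. (The degenerate case $n=1$ is vacuous, as it involves no concurrency and requires no safe-consensus boxes.)

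As for the main obstacle, at the level of this theorem there is essentially none left: the difficulty lives entirely in the connectivity Lemmas~\ref{lembasicBregularpathWRO} and~\ref{lemconstates}. The only points demanding care in the write-up are the base case --- verifying that the chain of initial states genuinely has indistinguishability degree $n-1$, so that Lemma~\ref{lemconstates} applies already at the first step --- and the promotion of ``successor of $O^r$'' to ``successor of $O$'', so that the hypotheses of Lemma~\ref{lemvvstates} are met verbatim.
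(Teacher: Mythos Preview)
Your proposal is correct and mirrors the paper's proof almost verbatim: both argue by contradiction from the all-$0$ and all-$1$ initial states, observe that the chain of initial states from Lemma~\ref{leministates} has indistinguishability degree $n-1$, push connectivity forward round by round via Lemma~\ref{lemconstates}, and finish with Lemma~\ref{lemvvstates}. The only cosmetic discrepancy is that your inductive invariant asserts the connecting path is \regularpath already at $r=0$, whereas Lemma~\ref{lemconstates} only needs degree $\geqslant n-1$ on input; since the output of that lemma is \regularpath anyway, this over-statement is harmless.
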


\begin{proof}
Suppose $\mathcal{A}$ is a WRO protocol
that solves consensus with safe-consensus objects and without loss, assume that 0,1 are two
valid input values. Let $O,U$ be the initial states in which all processes have as initial values
only 0 and 1 respectively. By Lemma \ref{leministates}, $O$ and $U$ are connected and it is easy to
see that the sequence joining $O$ and $U$, build in the proof on Lemma \ref{leministates} has degree
of indistinguishability $n -1$. So that applying an easy induction, we can use Lemma 
\ref{lemconstates} to show that there exist connected successor states 
$O_r,U_r$ of $O$ and $U$ respectively in each round $r\geqslant 1$ of the protocol $\mathcal{A}$. 
Clearly, $O$ is a 0-valent state and by Lemma \ref{lemvvstates}, $U$ is 0-valent.
But this is a contradiction, because $U$ is a 1-valent state. Therefore no such protocol
$\mathcal{A}$ can exists. 
\end{proof}

At this moment we remark that the key fact that allows us to prove this 
impossibility of consensus in the WRO model, is that in all our proofs, 
we are able to ``neutralize'' the safe-consensus objects by making 
all processes 
invoke their respective shared objects concurrently, this is the entry point 
for the Safe-Validity property of consensus to allow us to find the states of 
the executing protocol with the same safe-consensus values, so that the 
processes can only distinguish different states by their local view of the 
shared memory. In contrast to the protocols in the WOR model, it is not 
possible to apply this technique and neutralize the safe-consensus objects,
because the shared objects operations are located between the update and 
snapshot operations, so that when the processes take the snapshot, they 
have already invoked the safe-consensus shared objects.

\subsection{The equivalence of the WRO and OWR iterated models}\label{secEquivWROOWR}

In this section, we prove Theorem \ref{thmWRO2OWR}, which tells us that the WRO and the OWR iterated models have the same computational power. We show that given  any protocol  in the WRO iterated  model,  there is  an OWR protocol that 
can simulate its behaviour and the converse is  also true. To check the meaning of some definition or
previous result, the reader may consult Sections \ref{secdcBasicDef}, \ref{secWRONF} and 
\ref{secOWR}. For the sake of simplicity, we will refer to a protocol in the OWR iterated model, 
simply as an OWR protocol.

\subsubsection*{Transforming a WRO protocol into an OWR protocol}

The algorithm of Figure \ref{figWRO2OWRprot} is a generic OWR protocol to simulate protocols in the 
WRO iterated model. Suppose that $\mathcal{A}$ is a WRO protocol that solves a task $\Delta$ and 
assume
that $h_\mathcal{A}$ is the deterministic function to select safe-consensus objects in 
$\mathcal{A}$ and $\delta_\mathcal{A}$ is the decision map used in the protocol $\mathcal{A}$. To 
obtain an OWR protocol $\mathcal{B}$ that simulates the behaviour of $\mathcal{A}$, we use the generic 
protocol of Figure \ref{figWRO2OWRprot}, replacing the functions $h_{\_}$ and $\delta_{\_}$ with 
$h_\mathcal{A}$ and $\delta_\mathcal{A}$ respectively. If the processes execute $\mathcal{B}$ with 
valid input values from $\Delta$, then in the first round, the participating processes discard the 
output value of the safe-consensus object that they invoke, because the test at line \eqref{OI5} is 
successful and after that, each process goes on to execute the write-snapshot operations, with the 
same values that they would use to perform the same operations in round one of $\mathcal{A}$. Later, 
at lines \eqref{OI12}-\eqref{OI14}, they do some local computing to try to decide an output value. 
It is clear from the code of the generic protocol that none of the executing processes will write a
non-$\nullval$ value to the local variables $dec$, thus no process makes a decision in the first
round and they finish the current round, only with two simulated operations from $\mathcal{A}$:
write and snapshot. In the second round, the processes invoke one or more safe-consensus objects at 
line 
\eqref{OI4}, accordingly to the return values of the function $h_\mathcal{A}$, which
is invoked with the values obtained from the snapshot operation of the previous round (from the
round one write-read simulation of $\mathcal{A}$) and notice that instead of using the current value
of the round counter $r$, the processes call $h_\mathcal{A}$ with the parameter $r - 1$. Since $r = 
2$, the processes invoke $h_\mathcal{A}$ as if they were still executing the first round of
the simulated protocol $\mathcal{A}$. Finally, after using the 
safe-consensus objects,
processes do local computations in lines \eqref{OI7}-\eqref{OI9}, using the decision map
$\delta_\mathcal{A}$ to simulate the decision phase of $\mathcal{A}$, if for some process $p_j$, it
is time to take a decision in $\mathcal{A}$, it stores the output value in the local variable
$dec_j^\prime$, which is used at the end of the round to write the output value in $dec_j$ and then
$p_j$ has decided. If the map $\delta_\mathcal{A}$ returns $\nullval$, then $p_j$ goes on to do the
next write-read operations (simulating the beginning of round two of $\mathcal{A}$) and proceeds to
round three of $\mathcal{B}$. The described behaviour is going to repeat in all subsequent rounds.

\begin{figure}[htb]
\centering{ 
\fbox{
\begin{minipage}[t]{150mm}
\footnotesize
\small
\renewcommand{\baselinestretch}{2.5}
\resetline
\begin{tabbing}
aaaaa\=aaaaa\=aaaaaa\=aaaa\=aaa\=aaa\kill 
\textbf{Algorithm} $\mathsf{WRO\text{-}Simulation}$ \\
\line{OI1} \> \textbf{init} $r\leftarrow 0$; $sm,tmp \leftarrow input$; 
$dec,dec^\prime \leftarrow \nullval$; $val \leftarrow \nullval$;
\\~\\
\line{OI2} \> \textbf{loop forever} \\

\line{OI3} \> \>   $r\leftarrow r+1$;\\

\line{OI3P} \> \>   $sm\leftarrow tmp$;\\

\line{OI4} \> \> $val$ $\leftarrow 
             {\mathsf S}\left[ h_{\_}(r - 1, id, sm, val) \right].\execobject(v)$;\\

\line{OI5} \> \> {\bf if} $(r = 1)$  {\bf then} \\

\line{OI6} \> \> \> $val \leftarrow \nullval$; $\qquad\qquad\qquad$ /* 1st round: ignore
returned value */ \\

\line{OI7} \> \> {\bf else if} $(dec^\prime = \nullval )$  {\bf then} \\

\line{OI8} \> \> \>  $dec^\prime \leftarrow \delta_{\_}(sm, val)$; $\quad$ /* Decision of
simulated protocol */ \\

\line{OI9} \> \> {\bf end if}  \\

\line{OI10} \> \> 
     $SM\left[ r \right].\writesnapshot(sm, val)$; \\

\line{OI11} \> \> $tmp$ $\leftarrow SM\left[ r \right].\readsnapshot()$;\\

\line{OI12} \> \> {\bf if} $(dec=\nullval)$  {\bf then} \\

\line{OI13} \> \> \>  $dec \leftarrow dec^\prime$;  $\qquad\quad$ /* Decide when simulated
protocol decides */ \\

\line{OI14} \> \> {\bf end if}  \\

\line{OI15} \> \textbf{end loop}
\end{tabbing}
\normalsize
\end{minipage}
}
\caption{General WRO-simulation protocol in the OWR iterated model}
\label{figWRO2OWRprot}
}
\end{figure}

\subsubsection*{Correspondence between executions of the protocols}

 Let $\pi$ be a round schedule. Denote by $\pi\left[ 
\eventfont{W},\eventfont{R} \right]$ the sequence of write
and read events of $\pi$, these events appear in $\pi\left[ \eventfont{W},\eventfont{R} 
\right]$ just in
the same order they are specified in $\pi$. The symbol $\pi\left[ \eventfont{S} \right]$ is defined 
for
the event \eventfont{S} in a similar way. For example, given the round schedule
\[ \pi=\writeop{1, 3}, \readop{1,3}, \writeop{2}, \readop{2}, \scop{1,2,3}, \]
then $\pi\left[ \eventfont{W},\eventfont{R} \right]=\writeop{1, 3}, \readop{1,3}, \writeop{2}, 
\readop{2}$ and
$\pi\left[ \eventfont{S} \right]=\scop{1,2,3}$. For another example, suppose that the round schedule 
$\pi^\prime$ is given by
\[ \pi^\prime=\writeop{1,2,3}, \readop{1,2,3},\scop{1}, \scop{3}, \scop{2}. \]
In this case, we have that $\pi^\prime\left[ \eventfont{W},\eventfont{R} \right]=\writeop{1,2,3}, \readop{1,2,3}$ and 
$\pi^\prime\left[ \eventfont{S} \right]=\scop{1},\scop{3},$ $\scop{2}$.

Let $\mathcal{A}$ be a WRO protocol as given in Figure \ref{figWROprot},
$h_\mathcal{A}$ the deterministic function to select safe-consensus
objects and $\delta_\mathcal{A}$ the decision map used in the protocol $\mathcal{A}$. An OWR
protocol $\mathcal{B}$ that simulates the behaviour of $\mathcal{A}$ is obtained by using the
generic protocol $\mathcal{G}$ of Figure \ref{figWRO2OWRprot}. 
To construct $\mathcal{B}$, we only
replace the functions $h_{\_}$ and 
$\delta_{\_}$ of $\mathcal{G}$ with 
$h_\mathcal{A}$ and
$\delta_\mathcal{A}$ respectively. 

In order to show that $\mathcal{B}$ simulates $\mathcal{A}$, we first notice that there is a 
correspondence between executions of $\mathcal{A}$ and $\mathcal{B}$. For, if $\alpha$ is an 
execution of $\mathcal{A}$ such that
\[ \alpha=S_0,\pi_1,S_1,\ldots\pi_k,S_k,\pi_{k+1},\ldots \]
 Then there exists an execution $\alpha_\mathcal{B}$ of $\mathcal{B}$
\[
\alpha_\mathcal{B}=S_0,\pi^\prime_1,S^\prime_1\ldots,\pi^\prime_k,S^\prime_k,\pi^\prime_{k+1},\ldots
, \]
such that 
\[ \pi_l^\prime = \begin{cases}
    \scop{X_1}\ldots,\scop{X_s},\pi_1\left[ \eventfont{W},\eventfont{R} \right]\quad\text{if } 
l=1, \\
    \pi_{l-1}\left[ \eventfont{S} \right],\pi_l\left[ \eventfont{W},\eventfont{R} \right]\quad 
\text{otherwise,}
   \end{cases}
 \]
where $\bigcup_{j=1}^s X_j=\overline{n}$ is an arbitrary partition of $\overline{n}$. Conversely, 
for
any execution $\beta=R_0,\eta_1,R_1,\ldots$ $\eta_k,R_k,$ $\eta_{k+1},\ldots$ of the protocol
$\mathcal{B}$, we can find an execution
\[\beta_\mathcal{A}=R_0,\eta^\prime_1,R^\prime_1,\ldots\eta^\prime_k,R^\prime_k,\eta_{k+1}, \ldots\]
of $\mathcal{A}$ such that 
\[ \eta_l^\prime=\eta_l\left[ \eventfont{W},\eventfont{R} \right],\eta_{l+1}\left[ \eventfont{S} 
\right] \quad \text{for all } l\geqslant 1. \]

\begin{lemma}\label{leminvarianceStatesRoundRRm1}
 Suppose that $\beta$ is an execution of the protocol $\mathcal{B}$ and process $p_i$ is executing
the protocol (accordingly to $\beta$) in round $r > 0$ at line \ref{OI10}. Then the local variables
$sm_i,dec^\prime_i$ and $val_i$ of $\mathcal{B}$ have the same values of the respectively local
variables $sm_i,dec_i$ and $val_i$ of the protocol $\mathcal{A}$, when $p_i$ finishes executing
round $r - 1$ of $\mathcal{A}$, in the way specified by the execution $\beta_\mathcal{A}$.
\end{lemma}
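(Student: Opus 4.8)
The plan is to prove the statement by induction on $r$, after strengthening it with a second clause that makes the induction go through: together with (a) the claimed equalities of $sm_i,val_i$ and of $dec^\prime_i$ with $dec_i$ at line~\ref{OI10} of round $r$, I will also prove (b) that the snapshot $p_i$ obtains at line~\ref{OI11} of round $r$ in $\beta$ is exactly the snapshot it obtains at line~\ref{FI5} of round $r$ in $\beta_\mathcal{A}$. Clause (b) follows from clause (a) applied to every process whose write event to $SM[r]$ precedes $p_i$'s read: by the execution correspondence the write/read portion $\eta_r[\eventfont{W},\eventfont{R}]$ of $\mathcal{B}$'s round $r$ is literally the write/read portion of $\mathcal{A}$'s round $r$, so the same processes write to $SM[r]$ in the same order before $p_i$ reads it, and by (a) each of them writes the same pair $(sm_j,val_j)$ in both protocols (in $\mathcal{A}$ these are the values left at the end of round $r-1$, i.e. exactly what is written at line~\ref{FI4}). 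Hence $SM[r]$ evolves identically up to $p_i$'s read in the two runs.

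For the base case $r=1$: in $\mathcal{B}$, after line~\ref{OI3P} we have $sm_i=tmp_i=input_i$ (the initial value of $tmp_i$), lines~\ref{OI5}--\ref{OI6} reset $val_i\leftarrow\nullval$, and $dec^\prime_i$ is still its initial value $\nullval$ because the branch at line~\ref{OI7} is not entered; these agree with the values $input_i,\nullval,\nullval$ that $sm_i,val_i,dec_i$ have in the initial state of $\mathcal{A}$, which is the ``end of round $0$'' state. For the inductive step, assume (a)--(b) for round $r-1$ ($r\geqslant 2$) and trace $p_i$ in $\beta$ from line~\ref{OI10} of round $r-1$ to line~\ref{OI10} of round $r$, comparing with $\mathcal{A}$ run from the end of round $r-2$ to the end of round $r-1$. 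Line~\ref{OI3P} sets $sm_i\leftarrow tmp_i$, the snapshot taken at line~\ref{OI11} of round $r-1$, which by (b) for $r-1$ equals $p_i$'s snapshot at line~\ref{FI5} of round $r-1$ of $\mathcal{A}$, the value $sm_i$ holds at the end of round $r-1$ of $\mathcal{A}$. Line~\ref{OI4} then computes the index $h_\mathcal{A}(r-1,id,sm_i,val_i)$ with this $sm_i$ and with $val_i$ inherited unchanged from line~\ref{OI10} of round $r-1$, which by (a) for $r-1$ equals the value of $val_i$ at the end of round $r-2$ of $\mathcal{A}$; these are exactly the arguments with which $\mathcal{A}$ evaluates $h_\mathcal{A}$ at line~\ref{FI6} of round $r-1$. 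Doing this for every active process shows that the set of processes invoking a given object $\mathsf{S}[k]$ in round $r$ of $\beta$ coincides with the set invoking the corresponding object in round $r-1$ of $\beta_\mathcal{A}$ (these objects are paired by the execution correspondence, since $\eta^\prime_{r-1}$ ends with $\eta_r[\eventfont{S}]$); when the invocation is solo both objects return the common invoker's id, and when it is concurrent we use the freedom in building $\beta_\mathcal{A}$ to make its object return the same value $\beta$ gives. So the value assigned to $val_i$ at line~\ref{OI4} of round $r$ equals the value $val_i$ receives at line~\ref{FI6} of round $r-1$ of $\mathcal{A}$. Finally, at line~\ref{OI7} of round $r$, by (a) for $r-1$ the variable $dec^\prime_i$ still equals $dec_i$ at the end of round $r-2$ of $\mathcal{A}$, i.e. its value at line~\ref{FI7} of round $r-1$; hence the guard $dec^\prime_i=\nullval$ holds exactly when $\mathcal{A}$'s guard at line~\ref{FI7} holds, and in that case line~\ref{OI8} sets $dec^\prime_i\leftarrow\delta_\mathcal{A}(sm_i,val_i)$ with the same arguments used at line~\ref{FI8}. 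Thus at line~\ref{OI10} of round $r$ the triple $(sm_i,dec^\prime_i,val_i)$ of $\mathcal{B}$ matches $(sm_i,dec_i,val_i)$ of $\mathcal{A}$ at the end of round $r-1$, proving (a) for $r$; (b) for $r$ then follows as in the first paragraph.

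The step I expect to require the most care, rather than a new idea, is keeping the one-round shift between the two protocols straight: the safe-consensus events of $\mathcal{A}$'s round $r-1$ sit inside $\mathcal{B}$'s round $r$ while the write/read events of $\mathcal{A}$'s round $r$ also sit inside $\mathcal{B}$'s round $r$, which is precisely what forces the parameter $r-1$ at line~\ref{OI4} and the guard at line~\ref{OI5}. The other point needing attention is that, because a safe-consensus object is nondeterministic under concurrency, equality of the object outputs is not automatic but is arranged when $\beta_\mathcal{A}$ is constructed, which is legitimate exactly because Safe-Validity is vacuous under concurrency and otherwise forces the identical invoker id. Everything else reduces to a line-by-line comparison of the two pseudocodes.
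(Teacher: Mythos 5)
Your proof is correct and follows exactly the approach the paper intends: induction on the round number $r$ with a line-by-line comparison of the two pseudocodes. The paper itself only asserts the proof is "an easy analysis" by induction on $r$; your write-up supplies the details, including the natural strengthening (clause (b) about the agreement of snapshots) that makes the induction close, and the observation that the construction of $\beta_\mathcal{A}$ uses the freedom left by Safe-Validity to pin down the safe-consensus outputs under concurrency.
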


\begin{proof}
 The proof is based on induction on the round number $r$, the base case and the inductive case are
obtained by an easy analysis of the programs given in Figures \ref{figWROprot} and
\ref{figWRO2OWRprot}.
\end{proof}

The converse of the previous lemma is also true. 

\begin{lemma}\label{leminvarianceStatesRoundRRm1AB}
 Suppose that $\alpha$ is an execution of the protocol $\mathcal{A}$ and process $p_i$ is executing
the protocol (accordingly to $\alpha$) in round $r > 0$ at line \ref{FI4}. Then the local variables
$sm_i,dec_i$ and $val_i$ of $\mathcal{A}$ have the same values of the respective local
variables $sm_i,dec^\prime_i$ and $val_i$ of the protocol $\mathcal{B}$, when $p_i$ finishes 
executing
round $r + 1$ of $\mathcal{B}$, in the way specified by the execution $\alpha_\mathcal{B}$.
\end{lemma}

The final result that we need to prove that
$\mathcal{B}$ simulates $\mathcal{A}$ for task solvability is an immediate consequence of Lemma
\ref{leminvarianceStatesRoundRRm1}.

\begin{lemma}
Suppose that $\Delta$ is a decision task solved by the protocol $\mathcal{A}$ and let $p_i\in \Pi$.
Then $p_i$ decides an output value $v$ in round $r > 0$ of an execution $\alpha$ of $\mathcal{A}$ if
and only if $p_i$ decides the output value $v$ in round $r+1$ of $\mathcal{B}$, in the execution
$\alpha_\mathcal{B}$.
\end{lemma}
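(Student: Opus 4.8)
The plan is to derive this lemma directly from the two invariance lemmas \ref{leminvarianceStatesRoundRRm1} and \ref{leminvarianceStatesRoundRRm1AB}, together with the explicit correspondence between executions of $\mathcal{A}$ and $\mathcal{B}$ described just above. The key observation is that in the protocol $\mathcal{B}$ of Figure \ref{figWRO2OWRprot}, a process $p_i$ writes a non-$\nullval$ value to $dec_i$ in round $r+1$ (at line \ref{OI13}) precisely when, in that round, $dec_i$ was still $\nullval$ and $dec^\prime_i$ has become non-$\nullval$; and $dec^\prime_i$ is set (at line \ref{OI8}) exactly to $\delta_\mathcal{A}(sm_i, val_i)$, where $sm_i$ and $val_i$ are the local variables of $\mathcal{B}$ at that point. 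On the other side, a process $p_i$ decides $v$ in round $r$ of $\mathcal{A}$ (at line \ref{FI8}) precisely when $dec_i$ was $\nullval$ and $\delta_\mathcal{A}(sm_i, val_i)$ is non-$\nullval$, where now $sm_i, val_i$ are the local variables of $\mathcal{A}$ after executing lines \ref{FI4}-\ref{FI5} of round $r$.

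First I would fix an execution $\alpha$ of $\mathcal{A}$ and its corresponding execution $\alpha_\mathcal{B}$ of $\mathcal{B}$ (and symmetrically, starting from an execution $\beta$ of $\mathcal{B}$, use $\beta_\mathcal{A}$), so that both directions of the ``if and only if'' are handled by the same bookkeeping. By Lemma \ref{leminvarianceStatesRoundRRm1AB}, when $p_i$ is about to invoke the safe-consensus object at line \ref{FI4}... wait — more precisely, I would use the invariance at the point where $\mathcal{A}$ executes its decision test: the values $sm_i, val_i$ that $\mathcal{A}$ feeds to $\delta_\mathcal{A}$ in round $r$ coincide with the values $sm_i, val_i$ that $\mathcal{B}$ feeds to $\delta_\mathcal{A}$ at line \ref{OI8} in round $r+1$. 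This is exactly what Lemma \ref{leminvarianceStatesRoundRRm1} gives (it identifies $sm_i, dec_i, val_i$ of $\mathcal{A}$ at the end of round $r-1$ with $sm_i, dec^\prime_i, val_i$ of $\mathcal{B}$ at line \ref{OI10} of round $r$; reindexing $r \mapsto r+1$ gives the statement for the decision phase), and Lemma \ref{leminvarianceStatesRoundRRm1AB} gives the converse inclusion. Hence $\delta_\mathcal{A}$ returns the same value in both cases, so $p_i$ sets $dec^\prime_i$ to a non-$\nullval$ value $v$ in round $r+1$ of $\mathcal{B}$ if and only if $p_i$ sets $dec_i$ to $v$ in round $r$ of $\mathcal{A}$.

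It then remains to argue that ``$p_i$ sets $dec^\prime_i := v \neq \nullval$ in round $r+1$ of $\mathcal{B}$'' is equivalent to ``$p_i$ decides $v$ in round $r+1$ of $\mathcal{B}$'', i.e. that $dec_i$ is actually updated to $v$ and stays that way. This is a routine monotonicity argument from the code: once $dec^\prime_i$ becomes non-$\nullval$ it never changes (the test at line \ref{OI7}), and in the same round the test at line \ref{OI12} copies $dec^\prime_i$ into $dec_i$ if $dec_i = \nullval$; moreover, by the equivalence just established together with the corresponding monotonicity of $dec_i$ in $\mathcal{A}$, $dec_i$ of $\mathcal{B}$ is $\nullval$ up through round $r+1$ exactly when $dec_i$ of $\mathcal{A}$ is $\nullval$ up through round $r$, so the first round in which $p_i$ records a decision in $\mathcal{B}$ is $r+1$ precisely when the first round in which $p_i$ decides in $\mathcal{A}$ is $r$. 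Combining the three equivalences yields the claim. I do not expect a serious obstacle here: the content is entirely carried by Lemmas \ref{leminvarianceStatesRoundRRm1} and \ref{leminvarianceStatesRoundRRm1AB}, and the only mild subtlety is keeping the round-index shift ($r$ in $\mathcal{A}$ versus $r+1$ in $\mathcal{B}$) consistent and checking that the first round is handled correctly — in round $1$ of $\mathcal{B}$ the value returned by the safe-consensus object is discarded (line \ref{OI6}) and no decision is recorded, which matches the fact that round $1$ of $\mathcal{B}$ simulates only the round-$1$ write-read phase of $\mathcal{A}$, before any decision test of $\mathcal{A}$ has occurred.
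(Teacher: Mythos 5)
Your proof is correct and follows the same route the paper only gestures at: the paper gives no argument beyond asserting the lemma is ``an immediate consequence of Lemma~\ref{leminvarianceStatesRoundRRm1},'' and your write-up is exactly the unpacking of that assertion, invoking Lemma~\ref{leminvarianceStatesRoundRRm1} (and its converse, Lemma~\ref{leminvarianceStatesRoundRRm1AB}) to match the arguments fed to $\delta_\mathcal{A}$, then tracking the $dec/dec^\prime$ bookkeeping, monotonicity, and the round-index shift. The extra care you take with both directions of the equivalence and with the first round is a genuine improvement in precision over the paper's one-line justification, not a departure from its method.
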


This shows that the protocol $\mathcal{B}$ simulates the behaviour of $\mathcal{A}$ for task
solvability and therefore proves the first part of Theorem \ref{thmWRO2OWR}. 

\subsubsection*{Transforming an OWR protocol into a WRO protocol}

Now we show that 
 any OWR protocol can be simulated with a WRO protocol. We follow the same technique that we used 
in the previous section, we use the generic protocol of 
Figure \ref{figOWR2WROprot}. The intuitive argument of how it works is very much the same as in the 
WRO case, thus we omit the details.

\begin{figure}[htb]
\centering{ 
\fbox{
\begin{minipage}[t]{150mm}
\footnotesize
\small
\renewcommand{\baselinestretch}{2.5}
\resetline
\begin{tabbing}
aaaaa\=aaaaa\=aaaaaa\=aaaa\=aaa\=aaa\kill 
\textbf{Algorithm} $\mathsf{OWR\text{-}Simulation}$ \\
\line{UI1} \> \textbf{init} $r\leftarrow 0$; $sm \leftarrow input$; 
$dec,dec^\prime \leftarrow \nullval$; $val \leftarrow \nullval$;
\\~\\
\line{UI2} \> \textbf{loop forever} \\

\line{UI3} \> \>   $r\leftarrow r+1$;\\


\line{UI10} \> \> 
     $SM\left[ r \right].\writesnapshot(sm, val)$; \\

\line{UI11} \> \> $sm$ $\leftarrow SM\left[ r \right].\readsnapshot()$;\\

\line{UI5} \> \> {\bf if} $(r = 1)$  {\bf then} \\

\line{UI6} \> \> \> $sm \leftarrow input$; $\qquad$ /* 1st round: ignore
write-snapshot */ \\

\line{UI7} \> \> {\bf else if} $(dec^\prime = \nullval )$  {\bf then} \\

\line{UI8} \> \> \>  $dec^\prime \leftarrow \delta_{\_}(sm, val)$; 
\\

\line{UI9} \> \> {\bf end if}  \\

\line{UI4} \> \> $val$ $\leftarrow 
             {\mathsf S}\left[ h_{\_}(r, id, sm, val)\right].\execobject(v)$;\\

\line{UI12} \> \> {\bf if} $(dec=\nullval)$  {\bf then} \\

\line{UI13} \> \> \>  $dec \leftarrow dec^\prime$; 
\\

\line{UI14} \> \> {\bf end if}  \\

\line{UI15} \> \textbf{end loop}
\end{tabbing}
\normalsize
\end{minipage}
}
\caption{General OWR-simulation protocol in the WRO iterated model}
\label{figOWR2WROprot}
}
\end{figure}

\begin{theorem}\label{thmWRO2OWR}
 Let $\mathcal{A}$ be a protocol in the WRO iterated model which solves a task $\Delta$. Then
$\mathcal{A}$ can be simulated with a protocol $\mathcal{B}$ in the OWR iterated model. Conversely, 
for every
protocol $\mathcal{B}^\prime$ in the OWR iterated model that solves a task $\Delta^\prime$, there is 
a
protocol $\mathcal{A}^\prime$ in the WRO iterated model, which simulates $\mathcal{B}^\prime$.
\end{theorem}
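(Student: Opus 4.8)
\noindent The plan is to prove the two halves of the theorem separately. The first half---every WRO protocol is simulated, for task solvability, by an OWR protocol---has in effect already been established above (the generic protocol of Figure~\ref{figWRO2OWRprot} together with Lemmas~\ref{leminvarianceStatesRoundRRm1} and~\ref{leminvarianceStatesRoundRRm1AB}), so for it one only has to assemble the pieces. Given a WRO protocol $\mathcal{A}$ solving a task $\Delta$, let $\mathcal{B}$ be the OWR protocol obtained from Figure~\ref{figWRO2OWRprot} by substituting $h_\mathcal{A},\delta_\mathcal{A}$ for $h_{\_},\delta_{\_}$. Using the decomposition $\pi\mapsto(\pi[\eventfont{W},\eventfont{R}],\pi[\eventfont{S}])$ of round schedules introduced above, every execution $\alpha=S_0,\pi_1,S_1,\ldots$ of $\mathcal{A}$ is matched by the execution $\alpha_\mathcal{B}$ of $\mathcal{B}$ whose $l$-th round schedule is $\pi_{l-1}[\eventfont{S}],\pi_l[\eventfont{W},\eventfont{R}]$ for $l>1$ and $\scop{X_1},\ldots,\scop{X_s},\pi_1[\eventfont{W},\eventfont{R}]$ for $l=1$, where $X_1,\ldots,X_s$ is an arbitrary partition of $\overline{n}$; conversely, every execution of $\mathcal{B}$ arises this way. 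By the two lemmas just cited, for each $p_i$ the values of $sm_i,dec^\prime_i,val_i$ at line~\ref{OI10} of round $r+1$ of $\mathcal{B}$ equal the values of $sm_i,dec_i,val_i$ at the end of round $r$ of $\mathcal{A}$, so $p_i$ decides a value $v$ in round $r$ of $\alpha$ exactly when it decides $v$ in round $r+1$ of $\alpha_\mathcal{B}$. The correspondence being defined round by round, it is compatible with extending executions; hence any finite execution of $\mathcal{B}$ completes to one in which every process decides a value allowable for $\Delta$ (complete the corresponding execution of $\mathcal{A}$, then map back), so $\mathcal{B}$ solves $\Delta$ and is wait-free whenever $\mathcal{A}$ is.

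\noindent For the converse I would run the dual argument. Let $\mathcal{B}^\prime$ be an OWR protocol solving a task $\Delta^\prime$, and let $\mathcal{A}^\prime$ be the WRO protocol obtained from Figure~\ref{figOWR2WROprot} by plugging in $h_{\mathcal{B}^\prime}$ and $\delta_{\mathcal{B}^\prime}$. Because an OWR round begins with its object invocations, the write-read performed in round~$1$ of $\mathcal{A}^\prime$ (lines~\ref{UI10}--\ref{UI11}) is discarded by the $r=1$ branch, while for every $r\geqslant 1$ the write-read of round $r$ of $\mathcal{A}^\prime$ simulates the write-read of round $r-1$ of $\mathcal{B}^\prime$ and the invocation at line~\ref{UI4}, issued with round parameter $r$, simulates the object invocation of round $r$ of $\mathcal{B}^\prime$; the decision map $\delta_{\mathcal{B}^\prime}$ is applied to the simulated snapshot and the simulated object output, so a decision taken in round $r$ of $\mathcal{B}^\prime$ is recorded in round $r+1$ of $\mathcal{A}^\prime$. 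Concretely, writing the round schedules of an execution $\beta^\prime$ of $\mathcal{B}^\prime$ as $\eta_k=\eta_k[\eventfont{S}],\eta_k[\eventfont{W},\eventfont{R}]$ (the \eventfont{S} events coming first, by the shape of the OWR model), the matching execution $\beta^\prime_{\mathcal{A}^\prime}$ of $\mathcal{A}^\prime$ has $l$-th round schedule $\eta_{l-1}[\eventfont{W},\eventfont{R}],\eta_l[\eventfont{S}]$ for $l>1$, with an arbitrary write-read schedule on $\overline{n}$ followed by $\eta_1[\eventfont{S}]$ for $l=1$; and every execution of $\mathcal{A}^\prime$ decomposes this way.

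\noindent With this correspondence fixed I would then prove, by induction on $r$ and by the same routine analysis of the pseudocode used for Lemma~\ref{leminvarianceStatesRoundRRm1} (now reading Figures~\ref{figOWRprot} and~\ref{figOWR2WROprot}), the invariant: when $p_i$ reaches line~\ref{UI4} in round $r>0$ of $\mathcal{A}^\prime$ under $\beta^\prime_{\mathcal{A}^\prime}$, the variables $sm_i,dec^\prime_i,val_i$ of $\mathcal{A}^\prime$ hold the values that $sm_i,dec_i,val_i$ of $\mathcal{B}^\prime$ hold just before line~\ref{JI4} of round $r$ of $\beta^\prime$; the converse invariant (for an execution of $\mathcal{B}^\prime$ given as the image of one of $\mathcal{A}^\prime$) follows symmetrically. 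Consequently $p_i$ writes a non-$\nullval$ output $v$ in round $r$ of $\mathcal{B}^\prime$ if and only if it writes $v$ in round $r+1$ of $\mathcal{A}^\prime$, and, the correspondence again being compatible with extension, $\mathcal{A}^\prime$ solves $\Delta^\prime$ and inherits wait-freedom from $\mathcal{B}^\prime$. This is the second half; together with Theorem~\ref{thmnowroprot} it also yields Corollary~\ref{cornoowrconsensusprot}.

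\noindent The one delicate point---and the only real obstacle---is the bookkeeping of the one-round offset together with the dummy first round: one must verify that $h_{\_}$ is fed the round index ($r$ in the OWR-into-WRO direction, $r-1$ in the WRO-into-OWR direction) that makes the safe-consensus object invoked by a given set of processes in the simulated protocol correspond to the very same entry of $\mathsf{S}$, invoked by the same set, in the simulating protocol, so that the simulation respects the structure of the model and not merely the input/output relation. These are routine inductions in the style of Lemma~\ref{leminvarianceStatesRoundRRm1}, which is why the second half can be written as tersely as the first.
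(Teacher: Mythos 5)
Your proof is correct and takes essentially the same approach as the paper: the WRO$\to$OWR half is verbatim the argument built around Figure~\ref{figWRO2OWRprot} and Lemmas~\ref{leminvarianceStatesRoundRRm1}--\ref{leminvarianceStatesRoundRRm1AB}, and the converse is the dual argument via Figure~\ref{figOWR2WROprot}, which the paper explicitly declines to spell out (``the intuitive argument $\ldots$ is very much the same, thus we omit the details''). Your version is in fact slightly more explicit than the paper's on the second half, and the ``delicate point'' you flag---why the simulator feeds $h_{\_}$ the index $r-1$ in Figure~\ref{figWRO2OWRprot} but $r$ in Figure~\ref{figOWR2WROprot}, owing to which of the two operations is the discarded dummy in round~$1$---is exactly the bookkeeping the paper leaves implicit.
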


The equivalence of the WRO iterated model with the OWR iterated model can be combined with Theorem
\ref{thmnowroprot} to obtain the following

\begin{corollary}\label{cornoowrconsensusprot}
 There is no protocol in the OWR iterated model for consensus using safe-consensus objects.
\end{corollary}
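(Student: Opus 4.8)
\textbf{Proof plan for Corollary \ref{cornoowrconsensusprot}.}
The plan is to argue by contradiction, using the equivalence of the two models established in Theorem \ref{thmWRO2OWR} together with the impossibility result of Theorem \ref{thmnowroprot}. So suppose, for contradiction, that there exists a protocol $\mathcal{B}^\prime$ in the OWR iterated model that solves consensus using safe-consensus objects. I would then invoke the second half of Theorem \ref{thmWRO2OWR}: since $\mathcal{B}^\prime$ solves the task $\Delta^\prime = \text{consensus}$, there is a protocol $\mathcal{A}^\prime$ in the WRO iterated model which simulates $\mathcal{B}^\prime$.

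Next I would check that $\mathcal{A}^\prime$ is exactly the kind of object ruled out by Theorem \ref{thmnowroprot}, i.e.\ a WRO protocol that solves consensus \emph{using safe-consensus objects}. The ``solves consensus'' part is immediate from the definition of simulation (every finite execution of $\mathcal{A}^\prime$ can be completed to one in which all processes decide values valid for the inputs, since the corresponding execution of $\mathcal{B}^\prime$ has this property). The ``using safe-consensus objects'' part is guaranteed by the construction of the simulating protocol in Figure \ref{figOWR2WROprot}: the only shared objects it ever invokes are the safe-consensus objects selected by $h_{\_}$, so $\mathcal{A}^\prime$ is a legitimate WRO protocol with safe-consensus in the sense of Figure \ref{figWROprot}. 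Hence $\mathcal{A}^\prime$ contradicts Theorem \ref{thmnowroprot}, and no such $\mathcal{B}^\prime$ can exist.

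There is essentially no hard step here; the corollary is a direct consequence of the two theorems. The only point that deserves a sentence of care is the bookkeeping observation above, namely that the WRO$\to$OWR and OWR$\to$WRO transformations of Section \ref{secEquivWROOWR} turn a protocol whose shared-object invocations are all safe-consensus into another protocol with the same property, so that the impossibility for WRO transfers verbatim to OWR rather than only to a formally weaker statement.
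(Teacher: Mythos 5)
Your proof is correct and takes exactly the route the paper intends: combining the WRO$\leftrightarrow$OWR equivalence of Theorem \ref{thmWRO2OWR} with the WRO impossibility of Theorem \ref{thmnowroprot}. The extra sentence observing that the simulation preserves the ``uses only safe-consensus objects'' property is a reasonable bit of care, but does not change the argument.
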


The results of this section about the WRO and OWR iterated models combined with the consensus
protocol using safe-consensus objects of \cite{yehudagafni}, allow us to conclude that the 
standard model enriched with safe-consensus shared objects is more powerful that the WRO and 
OWR iterated models. But there are still some unanswered questions regarding these two iterated 
models, see our conclusions at Section \ref{secConclusions}.

\section{Solving consensus in the WOR iterated model}\label{secResultsWOR}

In this section, We investigate the solvability of consensus in the WOR iterated model. We 
show that there exists a protocol in the WOR iterated model for consensus with $\binom{n}{2}$ 
safe-consensus objects. Although the formal specification of this protocol is 
a bit complicated (see Section \ref{secTheConsensusProtocol}), the intuitive idea behind it is 
quite simple and this idea can be depicted in a graphical way. This is a nice property of our 
consensus protocol and it is a consequence of working in an (extended) iterated model of 
computation. As a byproduct of the design of the consensus protocol, we discovered the new $g$-2coalitions-consensus task, which is a new kind of ``consensus task''. More details can be found in Section 
\ref{secCCTask}.
In Section \ref{asymtotictighbounds}, we present the main result of this paper, the lower bound on the number of 
safe-consensus objects needed by any protocol in the WOR iterated model which implements consensus. The lower bound proof is a result of combining a careful study of the connectivity 
of protocols in the WOR iterated model and the effects of invoking safe-consensus shared objects 
on the indistinguishability degree of paths connecting reachable states of the protocols. These 
effects are described in part combinatorially by a new application of 
Johnson graphs in distributed computing.
 We use terminology and results from Sections 
\ref{secdcBasicDef} and \ref{secAllModels}. For simplicity, we refer to any protocol in the WOR 
iterated model as a WOR protocol.

\subsection{The coalitions-consensus task}
\label{secCCTask}

In order to present the complete specification 
of the consensus protocol, we define the 
2coalition-consensus task, which will be useful 
to give a simpler description of the consensus 
protocol of Figure \ref{figWORprotconsensus2} 
and prove its correctness in Theorem 
\ref{thmWORconsensusbuenas}, but before that, we show 
that the 2coalitions-consensus task can be solved in the
WOR iterated model in one round with one snapshot object
and one safe-consensus object.

Intuitively, in the \emph{$g$-2coalitions-consensus} task, $g$ processes try to decide a 
consensus value from two 
proposed input values, there are exactly $g - 2$ processes which know the two input values, 
while each of the two remaining processes know only one input value and it is this last fact that makes the 2coalitions-consensus task dificult to solve.
We now give the formal definition of this new task.
First, 
we need to define the set of all valid input values for this new task. 

Let $I$ be a non-empty set of names such that $\nullval \notin I$ and
$\mathcal{N} = I \cup \{ \nullval \}$. If $x=\langle v_1, v_2 \rangle\in
\mathcal{N} \times \mathcal{N}$, $x.left$ denotes the value $v_1$ and $x.right$ is the value 
$v_2$. If $g\in \naturals$, then 
an ordered $g$-tuple $C= (x_1,\ldots,x_g)\in (\mathcal{N} \times \mathcal{N})^g$ is a
\emph{$g$-coalitions tuple} if and only if
\begin{itemize}
 \item [(a)] For all $x_i$ in $C$, $x_i.left \neq \nullval$ or $x_i.right \neq \nullval$.
\item [(b)] If $x_i,x_j$ are members of $C$ such that $x_i.left \neq \nullval$ and $x_j.left \neq
\nullval$, then $x_i.left = x_j.left$. A similar rule must hold if $x_i.right \neq \nullval$ and
$x_j.right \neq \nullval$.
\end{itemize}
Let $l(C)=\{ i\in\overline{g} \mid x_i.left \neq \nullval \}$ and $r(C)=\{ j\in \overline{g} \mid
x_j.right \neq \nullval \}$. Then we also require that 
\begin{itemize}
\item [(c)] $\lvert l(C) - r(C) \rvert = 1$ and $r(C) - l(C) = \{ g \}$.
\end{itemize}
Notice that property (a) implies that $l(C)\cup r(C)=\overline{g}$. The set of all $g$-coalitions 
tuples is denoted by $\mathcal{C}_g$. Some examples can help us understand better the 
concept of a coalition tuple. Let $\mathcal{N}=\{ 0,1, \nullval \}$. The following tuples are 
examples of elements of the set $\mathcal{C}_4$,
\begin{itemize}
 \item [] $C_1=(\langle 1, \nullval \rangle, \langle 1, 0 \rangle, \langle 1, 0 \rangle, \langle 
\nullval, 0 \rangle)$;
\item  [] $C_2=(\langle 1, 0 \rangle, \langle 1, 0 \rangle, \langle 1, \nullval \rangle,
\langle \nullval, 0 \rangle)$.
\end{itemize}

Clearly, $C_1,C_2\in \mathcal{C}_4$ and also we have that $l(C_1)=\{ 1,2,3 \}$ and $r(C_1)=\{ 2,3,4 
\}$. Notice that $l(C_1) \cup r(C_1)=\overline{4}$ and also $l(C_1) \cap r(C_1) = \{ 2,3 \}$. 
For $C_2$, it is true that $l(C_2) - r(C_2) = \{ 1,2,3 \} - \{ 
1,2,4 \}=\{ 3 \}$, $r(C_2) - l(C_2) = \{ 1,2,4 \} - \{ 1,2,3 \}=\{ 4 \}$ and $l(C_2) \cap r(C_2) 
= \{ 1,2 \}$. The following tuples are not elements of any coalitions tuples set:
\begin{itemize}
 \item [] $C_3=(\langle \nullval, \nullval \rangle, \langle 1, 0 \rangle, \langle 1, 0 
\rangle, 
\langle 1, 0 \rangle)$;
\item  [] $C_4=(\langle 1, \nullval \rangle, \langle 1, 0 \rangle, \langle \nullval, 1 \rangle, 
\langle \nullval, 0 \rangle)$.
\end{itemize}

For $C_3$, properties (a) and (c) are not satisfied and for 
 $C_4$, condition (b) is not fulfilled. Thus we conclude that $C_3,C_4 \notin \mathcal{C}_4$. We 
can now define a new distributed task.

\paragraph{The $g$-2coalitions-consensus task} 
We have $g$ processes
$p_1,\ldots,p_{g}$ and each process $p_i$ starts with a private input value of the form $x_i \in
\mathcal{N} \times \mathcal{N}$ such that $C=(x_1,\ldots,x_g)\in \mathcal{C}_g$,
and $C$ is called a \emph{global input}. 
In any execution, the following properties must be true:
\begin{itemize}
\item Termination: Each process must eventually output some value.
\item Agreement: All processes output the same value.
\item 2coalitions-Validity: If some process outputs $c$, then there must exists a process 
$p_j$ with input
$x_j$ such that $x_j = \langle c, v \rangle$ or $x_j = \langle v, c \rangle$ ($c\in I, v\in
\mathcal{N}$).
\end{itemize}

Notice that the sets 
$\mathcal{C}_g$
($g\geqslant 2$) encapsulate the intuitive ideas given at the beginning of this section about the 
coalition-consensus task that we need: With an input 
$C\in \mathcal{C}_g$, 
there are two 
processes that know exactly one of the input values, while the rest of the processes do know the 
two input values. These processes could easily reach a consensus, but they still need the other two 
processes with unique input values to agree on the same value, and this is the difficult part of the 
$g$-2coalitions-consensus 
task.

The  protocol of Figure~\ref{figWORprotk1coalitionconsensus}
implements  $g$-2coalitions-consensus.
Each process $p_i$ receives as input a tuple with values satisfying the properties of the 
2coalitions-consensus task and then in lines \ref{GCO3}-\ref{GCO5}, $p_i$ writes its input tuple in 
shared memory using the snapshot object $SM$; invokes the safe-consensus object with its id as 
input, storing the unique output value $u$ of the shared object in the local variable $val$ and 
finally, $p_i$ takes a snapshot of the memory. Later, what happens in Lines \ref{GCO6}-\ref{GCO10} 
depends on the output value $u$ of the safe-consensus object. If $u=g$, then by the Safe-Validity 
property, either $p_g$ invoked the object or at least two processes invoked the safe-consensus 
object concurrently and as there is only one process with input tuple $\langle v, \nullval 
\rangle$, $p_i$ will find an index $j$ with $sm\left[ j \right].right \neq \nullval$ in line 
\ref{GCO7}, assign this value to $dec$ and in line \ref{GCO11} $p_i$ decides. On the other hand, 
if $u\neq g$, then again by the Safe-Validity condition of the safe-consensus task, either process 
$p_u$ is running and invoked the safe-consensus object or two or more processes invoked concurrently 
the shared object and because all processes with id not equal to $g$ have input tuple $\langle z, y 
\rangle$ with $z\neq \nullval$, it is guaranteed that $p_i$ can find an index $j$ with $sm\left[ 
j \right].left \neq \nullval$ and assign this value to $dec$ to finally execute line \ref{GCO11} 
to decide its output value. 
All processes decide the same value because of the properties of the input tuples of the 
2coalitions-consensus task and the Agreement property of the safe-consensus task.

\begin{figure}[htb]
\centering{ 
\fbox{
\begin{minipage}[t]{150mm}
\footnotesize
\small
\renewcommand{\baselinestretch}{2.5}
\resetline
\begin{tabbing}
aaaaa\=aaa\=aaa\=aaa\=aa\=aa\kill 

\line{GCOPROC} \> \textbf{procedure} $g\text{-}2coalitions\text{-}consensus(v_1,v_2)$ \\
\line{GCO2} \> \textbf{begin} \\

\line{GCO3} \> \> $SM.\writesnapshot(\langle v_1,v_2 \rangle)$; \\

\line{GCO4} \> \> $val\leftarrow safe\text{-}consensus.exec(id)$; \\

\line{GCO5} \> \> $sm\,\leftarrow SM.\readsnapshot()$; \\

\line{GCO6} \> \> {\bf if} $val = g$  {\bf then} \\

\line{GCO7} \> \> \> $dec \leftarrow \text{choose any } sm\left[ j \right].right \neq \nullval$;
\\

\line{GCO8} \> \> {\bf else}  \\

\line{GCO9} \> \> \> $dec \leftarrow \text{choose any } sm\left[ j \right].left \neq \nullval$;
\\

\line{GCO10} \> \> {\bf end if}  \\

\line{GCO11} \> \> \textbf{decide} $dec$; \\

\line{GCO12} \> \textbf{end}
\end{tabbing}
\normalsize
\end{minipage}
}
\caption{A $g$-2coalitions-consensus protocol with one safe-consensus object.
}
\label{figWORprotk1coalitionconsensus}
}
\end{figure}

We now prove that this $g$-2coalitions-consensus 
protocol 
is correct.

\begin{lemma}\label{lemk1coalitionsconsWOR}
The protocol of Figure \ref{figWORprotk1coalitionconsensus} 
solves the
$g$-2coalitions-consensus task using one snapshot object and one safe-consensus object.
\end{lemma}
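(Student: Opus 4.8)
The plan is to check the three requirements of $g$-2coalitions-consensus directly for the protocol of Figure~\ref{figWORprotk1coalitionconsensus}. Termination is immediate, since the procedure is straight-line code with no loop: every participating process reaches line~\ref{GCO11} and decides after a bounded number of steps, using only the single snapshot object $SM$ and the single safe-consensus object. So the real content is Agreement, 2coalitions-Validity, and the auxiliary fact that the ``choose any'' at lines~\ref{GCO7} and~\ref{GCO9} is always possible. First I would unpack the hypothesis $C=(x_1,\ldots,x_g)\in\mathcal{C}_g$: by (a) and (c), $l(C)\cup r(C)=\overline{g}$, there is a unique index $k$ with $x_k.left\neq\nullval$ and $x_k.right=\nullval$, we have $x_g.left=\nullval\neq x_g.right$, and every other input has both coordinates different from $\nullval$; in particular $x_i.left\neq\nullval$ for all $i\neq g$ and $x_i.right\neq\nullval$ for all $i\neq k$, and by (b) all non-$\nullval$ left coordinates are equal and all non-$\nullval$ right coordinates are equal. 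Note $k\neq g$, since $k\in l(C)$ while $g\notin l(C)$.

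The core step is to show that any process $p_i$ reaching line~\ref{GCO7} or~\ref{GCO9} finds a suitable index $j$ in its snapshot. Because $p_i$ performs $\writesnapshot$ at line~\ref{GCO3} before $\readsnapshot$ at line~\ref{GCO5}, it always sees its own input, $sm[i]=x_i$. If $val=g$ and $i\neq k$, then $x_i.right\neq\nullval$, so $j=i$ works at line~\ref{GCO7}; symmetrically if $val\neq g$ and $i\neq g$, then $x_i.left\neq\nullval$ and $j=i$ works at line~\ref{GCO9}. The two remaining subcases use Safe-Validity of the safe-consensus object. If $val=g$ and $i=k$: since $p_k$ feeds the object its id $k\neq g$, by Safe-Validity and Agreement of safe-consensus $p_k$ cannot have returned from the object before every other invoking process began its invocation (otherwise the common return value would be $k\neq g$); hence some other process $p_b$ (necessarily $b\neq k$) starts line~\ref{GCO4} before $p_k$ leaves line~\ref{GCO4}, so $p_b$'s write at line~\ref{GCO3} precedes $p_k$'s read at line~\ref{GCO5}, giving $sm[b]=x_b$ with $x_b.right\neq\nullval$; take $j=b$. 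The subcase $val\neq g$, $i=g$ is entirely symmetric, yielding some $p_b$ with $b\neq g$ and $x_b.left\neq\nullval$ visible to $p_g$.

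Granting the core step, 2coalitions-Validity follows immediately: the value $dec$ decided by $p_i$ is $x_j.right$ (when $val=g$) or $x_j.left$ (when $val\neq g$) for some input index $j$, hence a value in $I$ appearing as the right or left coordinate of some process' input, which is exactly the required form. For Agreement, the Agreement property of the safe-consensus object gives that all processes obtain the same value $val$ at line~\ref{GCO4}; if $val=g$ then every process decides a non-$\nullval$ right coordinate, and all such coordinates coincide by~(b), while if $val\neq g$ every process decides a non-$\nullval$ left coordinate, again unique by~(b); so all decisions agree. I expect the main obstacle to be the concurrency bookkeeping in the core step: turning the statement ``$p_k$ did not return from the safe-consensus object before every other invocation started'' (what Safe-Validity forces once $val=g$ with $g\neq k$) into ``some other process had already written to $SM$ before $p_k$ read it''. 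Once the per-process ordering $\writesnapshot\prec\execobject\prec\readsnapshot$ of lines~\ref{GCO3},~\ref{GCO4},~\ref{GCO5} is made explicit, this is a short real-time argument.
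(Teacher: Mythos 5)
Your proof is correct and follows essentially the same approach as the paper's: direct verification of Termination, Agreement (via the safe-consensus Agreement property plus property (b) of coalition tuples), and 2coalitions-Validity (via Safe-Validity plus property (c)). Your ``core step'' makes explicit the real-time ordering argument (that when $val=g$ with $p_i=p_k$, some other process's $\writesnapshot$ must precede $p_k$'s $\readsnapshot$) that the paper treats more briefly with ``In any case, the participating processes wrote their input tuples to the snapshot object $SM$ before accessing the safe-consensus object.''
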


\begin{proof}
Let $p_i\in
\{p_1,\ldots,p_g\}$; after $p_i$ writes the tuple $\langle v_1,$ $v_2 \rangle$ to the snapshot 
object $SM$, it invokes the safe-consensus object and takes a snapshot of the shared memory, $p_i$
enters into the {\bf if}/{\bf else} block at lines \ref{GCO6}-\ref{GCO10}. Suppose that the test in
line \ref{GCO6} is successful, this says that the safe-consensus object returned the value $g$
to process $p_i$. By the Safe-Validity condition of the safe-consensus task, either process
$p_g$ invoked the shared object or at least two processes $p_j,p_k$ invoked the safe-consensus 
object concurrently (and it could happen that $j,k\neq g$). In any case, the participating 
processes wrote their input tuples to the snapshot object $SM$ before accessing the safe-consensus 
object, so that $p_i$ can see these values in its local variable $sm_i$, and remember that the 
coalitions tuple $C$ consisting of all the input values of processes $p_1,\ldots,p_g$ 
satisfies the properties
\begin{equation*}
\lvert l(C) - r(C) \rvert = 1 \text{ and } r(C) - l(C) = \{ g \}.
\end{equation*}
Thus, the equation $\left| l(C) - r(C) \right| = 1$ tells us that only one process has input tuple
$\langle x, \nullval \rangle$ $(x\neq \nullval)$, then when $p_i$ executes line \ref{GCO7}, it will
find a local register in $sm_i$ such that $sm_i\left[ j \right].right \neq \nullval$ and this value
is assigned to $dec_i$. Finally, $p_i$ executes line \ref{GCO11}, where it decides the value stored
in $dec_i$ and this variable contains a valid input value proposed by some process.

If the test at line \ref{GCO6} fails, the argument to show that $dec_i$ contains a valid proposed
input value is very similar to the previous one. This proves that the protocol satisfies the
2coalitions-Validity condition of the $g$-2coalitions-consensus task.

The Agreement condition is satisfied because by the Agreement condition of the safe-consensus
task, all participating processes receive the same output value from the shared object and
therefore all processes have the same value in the local variables $val_i$ $(1\leqslant i
\leqslant g)$, which means that all participating processes execute line \ref{GCO7}
or all of them execute line \ref{GCO9}. Then, for every process $p_r$, $dec_r$
contains the value $sm_r\left[ j_r \right].right$ or the value $sm_r\left[ j^\prime_r \right].left$
where $j_r$ ($j_r^\prime$) depends on $p_r$. But because the input pairs of the processes constitute
a coalitions tuple $C$, they satisfy property (b) of the definition of coalitions tuple,
which implies that $sm_r\left[ j_r \right].right = sm_r\left[ j_q \right].right$ whenever
$sm_r\left[ j_r \right].right$ and $sm_r\left[ j_q \right].right$ are non-$\nullval$ (a similar
statement holds for the left sides). We conclude that all processes assign to the variables
$dec_i$ $(1\leqslant i \leqslant g)$ the same value and thus all of them decide the same output
value, that is, the Agreement property of the $g$-2coalitions-consensus tasks is fulfilled. The
Termination property is clearly satisfied.
\end{proof}

\subsection{The WOR protocol for consensus}\label{secTheConsensusProtocol}

We first give an intuitive description of the WOR 
protocol for consensus, using 2coalitions-consensus 
shared objects and then we introduce the formal 
specification and prove its correctness.

\subsubsection*{Intuitive description}

A simple way to describe the protocol that solves consensus is by seeing it as a 
protocol in which the processes use a set of $\binom{n}{2}$ shared objects 
which implement 
our new $g$-2coalitions-consensus task. 
In this way, 
the protocol in Figure \ref{figWORprotconsensus2} can be 
described graphically as shown in Figure \ref{figgraphicconsensusprotocol}, for the case of $n=4$. 
In each round of the protocol, some processes invoke a 2coalitions-consensus object, represented by 
the symbol $\eventfont{2CC}_i$. In round one, $p_1$ and $p_2$ invoke the object $\eventfont{2CC}_1$ 
with input values $\langle v_1, \nullval \rangle$ and $\langle \nullval, v_2 \rangle$ respectively, 
(where $v_i$ is the initial input value of process $p_i$) and the consensus output $u_1$ of 
$\eventfont{2CC}_1$ is stored by $p_1$ and $p_2$ in some local variables. In round two, $p_2$ and 
$p_3$ invoke the $\eventfont{2CC}_2$ object with inputs $\langle v_2, \nullval \rangle$ and 
$\langle \nullval, v_3 \rangle$ respectively and they keep the output value $u_2$ in local 
variables. Round three is executed by $p_3$ and $p_4$ in a similar way, to obtain the consensus 
value $u_3$ from the 2coalition-consensus object $\eventfont{2CC}_3$. At the beginning of round 
four, $p_1,p_2$ and $p_3$ gather the values $u_1,u_2$ obtained from the objects $\eventfont{2CC}_1$ 
and $\eventfont{2CC}_2$ to invoke the $\eventfont{2CC}_4$ 2coalition-consensus object with the 
input values $\langle u_1, \nullval \rangle, \langle u_1, u_2 \rangle$ and $\langle \nullval, u_2 
\rangle$ respectively (Notice that $p_2$ uses a tuple with both values $u_1$ and $u_2$) and they 
obtain a consensus value $u_4$. Similar actions are taken by the processes $p_2,p_3$ and $p_4$ in 
round five with the shared object $\eventfont{2CC}_5$ and the values $u_2,u_3$ to compute an 
unique value $u_5$. Finally, in round six, all processes invoke the last shared object 
$\eventfont{2CC}_6$, with the respective input tuples
\[ \langle u_4, \nullval \rangle,\langle u_4, u_5 \rangle,\langle u_4, u_5 \rangle, 
\langle \nullval, u_5 \rangle, \]
and the shared object returns to all processes an unique output value $u$, which is the decided 
output value of all processes, thus this is the final consensus of the processes.

\begin{figure}[htb]
\centering{
\includegraphics[scale=0.8]{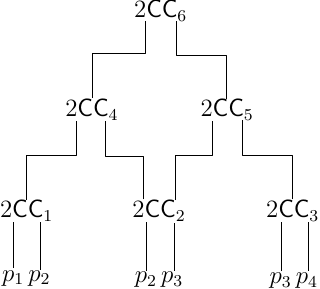}

\caption{The  structure of the 4-consensus protocol using 
2coalitions-consensus 
tasks.}
\label{figgraphicconsensusprotocol}
}
\end{figure}

\subsubsection*{Formal specification}

The formal specification of the iterated consensus protocol with safe-consensus objects is given in Figure 
\ref{figWORprotconsensus2}. This is a protocol that implements consensus using only $\binom{n}{2}$ 
2coalitions-consensus tasks. If we suppose that the protocol is correct, then we can use the 
$g$-2coalitions-consensus protocol presented in Section \ref{secCCTask} (Figure 
\ref{figWORprotk1coalitionconsensus}) to replace the call to the 2coalitions-consensus objects in 
Figure \ref{figWORprotconsensus2} to obtain a full iterated protocol that solves the consensus 
task using $\binom{n}{2}$ safe-consensus objects. 

\begin{figure}[htb]
\centering{ 
\fbox{
\begin{minipage}[t]{150mm}
\footnotesize
\small
\renewcommand{\baselinestretch}{2.5}
\resetline
\begin{tabbing}
aaaaa\=aaa\=aaa\=aaa\=aa\=aa\=aa\=aa\kill 
\line{CAW1} \> \textbf{init} $step,firstid,lastid \leftarrow 1$; \\
\> \> $C,D,dec,newagreement \leftarrow \nullval$;\\
\> \> $\thickspace\: agreements\left[\inthashmap{id}{id} \right] \leftarrow input$;

\\~\\
\line{CAW2} \> \textbf{begin} \\

\line{CAW3} \> \> \textbf{for} $r \leftarrow 1$ \textbf{to} $\binom{n}{2}$ \\

\line{CAW4E} \> \> \> $lastid \leftarrow firstid + step;$\\

\line{CAW4} \> \> \> \textbf{if} $firstid\leqslant id \leqslant lastid$ \textbf{then} \\


\line{CAW5} \> \> \> \> $C \leftarrow agreements\left[ \inthashmap{firstid}{lastid - 1} \right]$; \\

\line{CAW5E1} \> \> \> \> $D \leftarrow agreements\left[ \inthashmap{firstid + 1}{lastid} \right]$; 
\\

\line{CAW5E2} \> \> \> \> $newagreement \leftarrow 
2coalitions\text{-}consensus\left[r \right](C,D)$; \\

\line{CAW5E3} \> \> \> \> $agreements\left[ \inthashmap{firstid}{lastid} \right] = newagreement$; \\

\line{CAW13} \> \> \> {\bf end if}  \\

\line{CAW13E1} \> \> \> {\bf if} $lastid < n$ \textbf{then} \\

\line{CAW13E2} \> \> \> \> $firstid = firstid + 1$; \\

\line{CAW13E3} \> \> \> {\bf else if} $firstid > 1$ \textbf{then} \\

\line{CAW13E4} \> \> \> \> $firstid =  1$; \\

\line{CAW13E5} \> \> \> \> $step = step + 1$; \\

\line{CAW13E6} \> \> \> {\bf end if}  \\

\line{CAW14} \> \> \textbf{end for} \\

\line{CAW15E} \> \> $dec \leftarrow agreements\left[ \inthashmap{1}{n} \right]$; \\

\line{CAW15} \> \> \textbf{decide} $dec$; \\

\line{CAW16} \> \textbf{end}
\end{tabbing}
\normalsize
\end{minipage}
}
\caption{An iterated consensus protocol using $g$-2coalitions-consensus objects.
}
\label{figWORprotconsensus2}
}
\end{figure}

We now give a short description of how the protocol works. There are precisely $\binom{n}{2}$ 
rounds executed by the processes and in each round, some subset of processes try to agree in a new 
consensus value among two given input values. The local variables $step,firstid$ and $lastid$ are 
used by 
the processes to store information that tell them which is the current set of processes that must 
try to reach a new 
agreement in the current round, using a 2coalitions-consensus object (the symbol ''$id$`` contains 
the id of the process which is executing the code). The local array 
$agreements$ contains enough local registers used by the processes to store the agreements made in 
each round of the protocol and two distinct processes can have different agreements in 
$agreements$. Each consensus value $v$ stored in a register of $agreements$ is associated with 
two integers $i_1,i_r \in \overline{n}$ ($r\geqslant 1$), which represent a set of processes 
$p_{i_1},\ldots,p_{i_r}$ (with $i_1 < \cdots < i_r$) that have invoked a 2coalitions-consensus 
object to agree on the value $v$, thus we can say that $v$ is an agreement made by the coalition of 
processes represented by the pair $(i_1,i_r)$. To be able to store $v$ in the array $agreements$, 
the processes use a deterministic function $\inthashmapname \colon \naturals \times \naturals \to 
\naturals$, which maps bijectively $\naturals \times \naturals$ onto $\naturals$. This map can be 
easily constructed, for example, here is a simple definition for $\inthashmapname$
\begin{equation}
\inthashmap{i}{j} = \binom{i + j + 1}{2} + j. 
\end{equation}

Using all the elements described above, the processes can use the protocol of Figure 
\ref{figWORprotconsensus2} to implement consensus using $\binom{n}{2}$ 2coalitions-consensus 
objects, building in the process the structure depicted in Figure 
\ref{figgraphicconsensusprotocol}. From the first round 
and up to round $n - 1$, all the processes use their 
proposed input values to make new agreements in pairs, in 
round 
one, $p_1,p_2$ invoke a 2coalitions-consensus shared 
object to agree on a common value, based on 
their input values; in round 2, $p_2$ and $p_3$ do the 
same with another 2coalitions-consensus 
object and their own input values; later, the turn comes 
to $p_3$ and $p_4$ to do the same with 
another shared object and their input values and so on 
until the round number $n - 1$, where 
$p_{n-1}$ and $p_n$ agree on a common value in the way 
that we have already described. All the 
agreements obtained in these $n - 1$ rounds are stored on 
the local registers

\begin{multline}
agreements_1\left[ \inthashmap{1}{2} \right], agreements_2\left[ \inthashmap{1}{2} 
\right], agreements_2\left[ \inthashmap{2}{3} \right], \\
agreements_3\left[ \inthashmap{2}{3} 
\right] , agreements_3\left[ \inthashmap{3}{4} \right],agreements_4\left[ \inthashmap{3}{4} 
\right], 
\\ \ldots,  \\
agreements_{n - 1}\left[ \inthashmap{n - 1}{n} \right], agreements_{n}\left[ \inthashmap{n - 
1}{n} \right].
\end{multline}

In this way, the processes build the first part of the 
structure shown in Figure 
\ref{figgraphicconsensusprotocol}. At the end of round $n - 1$, in 
lines \ref{CAW13E4}-\ref{CAW13E5}, each 
process updates the values of the local variables $step$ and $firstid$ and proceeds to round $n$. 
What 
happens in the next $n - 2$ 
rounds, is very similar to the case of the previous $n - 1$ rounds, but instead of making 
agreements in pairs, the processes reach new agreements in groups of three processes (see 
Figure \ref{figgraphicconsensusprotocol}) invoking new 2coalitions-consensus shared objects and the 
consensus values obtained in the first $n - 1$ rounds and when each process reaches round $n - 1 + 
n - 2= 2n - 3$, it updates the values of its local variables $step$ and $firstid$ and then the 
processes 
proceed to round $2n - 2$. 

In general, when the processes are about to begin executing round 
$(\sum^m_{j=1} n - j) + 1$ ($m < n$), they will try to make $n - (m + 1)$ new agreements in groups 
of size $m+2$, with the aid of the 2coalitions-consensus objects and the agreements they obtained 
from the previous $n - m$ rounds and store the new consensus values in their local arrays 
$agreements$, using the $\inthashmapname$ function. When a process finishes round $(\sum^{m+1}_{j=1} 
n - j)$, the values of $step$ and $firstid$ are updated to repeat the described behaviour for the 
next $n - (m + 2)$ 
rounds, until the $\binom{n}{2}$ round, where the last agreement is made and this value is the 
output value of all the processes.

Now we are ready to give the full proof of Theorem \ref{thmWORconsensusbuenas}. We need a series of 
Lemmas.

\begin{lemma}\label{thmWORconsensus2}
The protocol in Figure \ref{figWORprotconsensus2} solves the consensus task
using $\binom{n}{2}$ $g$-2coa\-li\-tions-con\-sen\-sus objects.
\end{lemma}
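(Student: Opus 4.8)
The plan is to reduce everything to one structural invariant about the local arrays $agreements_i$ and then to read off Termination, Agreement and Validity at the last round. First I would pin down the combinatorial bookkeeping of the variables $step$, $firstid$ and $lastid$: by a routine induction on the round counter $r$ one checks that the set of processes activated in round $r$ by the guard in line \ref{CAW4} is a contiguous interval $b=\{firstid,\ldots,lastid\}\subseteq\overline{n}$ with $lastid=firstid+step$, that these intervals are enumerated in order of increasing length and, within each length, from left to right, and that each of the $\binom{n}{2}$ intervals $\{i,\ldots,j\}$ with $1\leqslant i<j\leqslant n$ is activated in exactly one of the $\binom{n}{2}$ rounds. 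The one fact I really need out of this is that when the interval $b=\{firstid,\ldots,lastid\}$ is activated, the two strictly shorter intervals $b_L=\{firstid,\ldots,lastid-1\}$ and $b_R=\{firstid+1,\ldots,lastid\}$ were activated in earlier rounds, so that $agreements[\inthashmap{firstid}{lastid-1}]$ and $agreements[\inthashmap{firstid+1}{lastid}]$ are already filled in for the processes belonging to $b_L$, resp.\ $b_R$, and still hold $\nullval$ for every other process, since line \ref{CAW4} allows a process to write the register $\inthashmap{i}{j}$ only when it belongs to $\{i,\ldots,j\}$.

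Next I would prove, by induction over the rounds, the invariant: for every interval $b=\{i,\ldots,j\}$ activated up to and including round $r$, (i) all processes $p_k$ with $k\in b$ hold one and the same value $v_b$ in the register $agreements_k[\inthashmap{i}{j}]$, and (ii) $v_b=input_\ell$ for some $\ell\in b$. The base case is the initialization in line \ref{CAW1}: $b=\{id\}$ and $v_b=input_{id}$. For the inductive step, look at the round activating $b=\{i,\ldots,j\}$, of length $g=j-i+1$, and at the inputs its $g$ processes feed to the 2coalitions-consensus object in line \ref{CAW5E2}: process $p_i$ feeds $\langle v_{b_L},\nullval\rangle$, process $p_j$ feeds $\langle\nullval,v_{b_R}\rangle$, and each interior process feeds $\langle v_{b_L},v_{b_R}\rangle$, simply because a process holds a non-$\nullval$ entry in $\inthashmap{i}{j-1}$ iff it lies in $b_L$ and in $\inthashmap{i+1}{j}$ iff it lies in $b_R$. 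Using part (i) of the hypothesis for $b_L$ and $b_R$, this ordered tuple is a genuine $g$-coalitions tuple: every component has a non-$\nullval$ entry, all non-$\nullval$ left entries equal $v_{b_L}$ and all non-$\nullval$ right entries equal $v_{b_R}$, and $l(C)=\overline{g}-\{g\}$, $r(C)=\overline{g}-\{1\}$, so conditions (a)--(c) hold. Then Agreement of the 2coalitions-consensus object gives (i) for $b$ — all of $p_i,\ldots,p_j$ obtain the common output and store it in line \ref{CAW5E3} — and 2coalitions-Validity, combined with part (ii) for $b_L$ and $b_R$, forces that output to equal $v_{b_L}$ or $v_{b_R}$, each of which is an input of a process of $b$; this is (ii) for $b$.

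Finally I would specialize to the last round $r=\binom{n}{2}$, where the activated interval is $\overline{n}=\{1,\ldots,n\}$: every process invokes the last 2coalitions-consensus object and stores the common output in $agreements[\inthashmap{1}{n}]$, which lines \ref{CAW15E}--\ref{CAW15} copy into $dec$ and decide. Termination holds because the body is a fixed \textbf{for} loop of $\binom{n}{2}$ iterations and every 2coalitions-consensus object terminates; Agreement is part (i) of the invariant applied to $b=\overline{n}$; Validity is part (ii) for the same $b$. Exactly one 2coalitions-consensus object is consumed per round, hence $\binom{n}{2}$ in total (with fan-ins $2,3,\ldots,n$), and the whole construction realizes the pyramid of Figure~\ref{figgraphicconsensusprotocol}.

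The part I expect to be fussiest is not a single deep step but keeping the two induction arguments honest at the same time: the $step/firstid/lastid$ recursion must be shown to deliver ``$b_L$ and $b_R$ already computed'' for every activated $b$, and the tuple fed to each 2coalitions-consensus object must be verified to be a $g$-coalitions tuple — particularly condition (b), whose verification is exactly the place where part (i) of the invariant on the two sub-intervals is used. A careless treatment of either could hide a gap, so I would write both inductions out in full rather than merely gesture at them.
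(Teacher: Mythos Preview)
Your proposal is correct and follows essentially the same route as the paper: the paper also first pins down the bookkeeping of $step$, $firstid$, $lastid$ (its Lemma~\ref{sublem1WORconsensus2}), then proves an inductive invariant phrased in terms of ``coalitions'' $[i,\ldots,j]$ that is exactly your two-part invariant (i)--(ii), verifying along the way that the tuple fed to each object is a genuine $g$-coalitions tuple (its Lemma~\ref{sublem2WORconsensus2}), and concludes by specializing to the final coalition $[1,\ldots,n]$. Your packaging into a single invariant is slightly more compact than the paper's three sub-lemmas, but the argument is the same.
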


\begin{proof}
The protocol clearly satisfies the Termination
condition of the consensus task (the only loop is finite). We prove that it fulfils the Agreement
and Validity conditions; to do this, we need some definitions and some intermediate results. 

With respect to the protocol of Figure \ref{figWORprotconsensus2}, let $F=\{ i_1, \ldots, i_r 
\}\subseteq \overline{n}$ be elements such 
that $i_1 < \cdots < i_r$. The set $F$ will be called a \emph{coalition} and will be denoted by 
$F=\left[i_1,\ldots,i_r \right]$. Let $p_i$ a process such that $i\in F$. We say that $p_i$ 
\emph{belongs to the coalition} $F$ if and only if 
 the following condition is fulfilled:
\begin{itemize}
\item [(1)] 
$agreements_{i}\left[ \inthashmap{i_1}{i_r} \right] = v$, where $v\neq \nullval$ is a valid input 
value proposed by some participating process.
\end{itemize}
We ask the following \emph{agreement} property to be satisfied by $F$:
\begin{itemize}
 \item [(2)] if $p_i,p_j$ are processes which belong to $F$ and $v,v^\prime$ are the values which 
satisfy (1) for $p_i$ and $p_j$ respectively, then $v = v^\prime$.
\end{itemize}
The value $v$ of the first condition is called the \emph{name} of the coalition. The second 
condition says that all processes that belong to the same coalition must agree on the coalition's
name. For $n>m\geqslant 0$, let $\gamma(n,m)$ be defined by
\[ \gamma(n,m)=\begin{cases}
                0 & \text{ if } m=0, \\
                \gamma(n, m - 1) + n - m & \text{ if } m > 0.
               \end{cases}\]
Notice that $\gamma(n,m)=\sum_{i=1}^{m} n -i$ for $m>0$ and $\gamma(n, n - 1)= \binom{n}{2}$.

\begin{lemma}\label{sublem1WORconsensus2}
 Let $1\leqslant m \leqslant n - 1$ and $1\leqslant c \leqslant n - m$. If $p_i$ is executing the
protocol at the beginning of round number $r=\gamma(n, m - 1) + c$ (before line \ref{CAW13E1}) then
$step_i = m$ and $firstid_i= c$.
\end{lemma}

\begin{proof}
 We prove this by induction on $m$. An easy analysis of the code in Figure
\ref{figWORprotconsensus2} shows that the base case holds (when $m=1$, in the first $n -1$ rounds).
Assume that for $m< n - 1$, the lemma is true. We first prove the following claim: When $p_i$ starts
executing round $\gamma(n,m) + 1$, $step_i = m+1$ and $firstid_i=1$. By the induction hypothesis, 
when
process $p_i$ executed the protocol at round $r^\prime = \gamma(n,m) = \gamma(n, m - 1) + c=
\gamma(n, m - 1) + (n - m)$ before line \ref{CAW13E1}, the local variables $step_i$ and $firstid_i$ 
had
the values of $m$ and $n-m$ respectively. When $p_i$ reached line \ref{CAW13E1}, it executed the
test of the {\bf if} statement, but before that, $p_i$ executed line \ref{CAW4E} of the protocol,
gather that, $lastid_i = firstid_i + step_i = (n - m) + m = n$; thus the test in line \ref{CAW13E1} 
failed
and $p_i$ executed lines \ref{CAW13E4}-\ref{CAW13E5} of the {\bf else} statement ($firstid_i> 1$ 
because
$m < n - 1$) and then $step_i$ was incremented by one and $firstid_i$ was set to 1 at the end of 
round
$r^\prime$. Therefore, when $p_i$ starts executing round $\gamma(n,m) + 1$, $step_i = m+1$ and
$firstid_i=1$.

Now suppose that $p_i$ executes the protocol at the beginning of round number $r = \gamma(n,m) + c$
where $1\leqslant c \leqslant n - (m + 1)$. If $c = 1$ then by the preceding argument, $step_i= m+
1$ and $firstid_i = 1 = c$. Using this as a basis for an inductive argument, we can prove that for
$c\in \{ 1, \ldots, n - (m + 1)\}$, $c=firstid_i$ and $step_i = m + 1$.
\end{proof}

\begin{lemma}\label{sublem2WORconsensus2}
 Let $0\leqslant m \leqslant n - 2$. Suppose that process $p_i$ is about to begin executing the
protocol at round $r=\gamma(n,m) + c$ $(1 \leqslant c \leqslant n - (m + 1))$. If $c\leqslant i
\leqslant c+m+1$ and $p_i$ belongs to the coalition $P= \left[ c,\ldots,c + m \right]$ or it
belongs to the coalition $Q= \left[ c+1,\ldots,c + m+ 1 \right]$, then at the end of round
$r$, $p_i$ belongs to the coalition $\left[ c,\ldots,c + m+ 1 \right]$.
\end{lemma}

\begin{proof}
 Let $p_i$ be any process that begins executing round $r$. By Lemma \ref{sublem1WORconsensus2}, we
know that $step_i = m + 1$ and $firstid_i = c$, which implies that $lastid_i=c + m + 1$. If $i\in \{ 
c,
\ldots, c+ m+1 \}$, the {\bf if}'s test at line \ref{CAW4} is successful and 
after that what happens in lines \ref{CAW5},\ref{CAW5E1} 
depends on $i$. If
$c\leqslant i \leqslant c + m$, then $p_i$ is in the coalition $P$ and if
$c+ 1\leqslant i \leqslant c + m+ 1$, $p_i$ is in the coalition $Q$, gather
that, a valid input value is assigned to at least one of the variables $C_i$ or $D_i$, so $p_i$
invokes in line \ref{CAW5E2} the $(m+2)$-2coalitions-consensus object with the input $x_i =
\langle C_i,D_i\rangle$. We now pause for a moment to argue that the tuple
$J=(x_c,\ldots,x_{c+m+1})$ built with the inputs of processes $p_c,\ldots,p_{c+m+1}$ is a valid
global input for the $(m+2)$-2coalitions-consensus task. Indeed, from the hypothesis, it is
easy to see that $J$ satisfies the requirements (a)-(c) of the definition of coalition tuple and
notice that 
$r(J) - l(J) = \{ c + m+1 \}$ and $l(J) - r(J) = \{ c \}$.

Now back to the execution of process $p_i$. After $p_i$ invokes the
$(m+2)$-2coalitions-consensus task, the return value of the shared object, say $v$, is
assigned to the local variable $newagree\-ment_i$. Finally, $p_i$ stores the contents of 
$newagreement_i$ in the local array $agreements_i$ at position $\inthashmap{firstid_i}{lastid_i}$, 
where 
$firstid_i=c$ and $lastid_i=c+m+1$. 
Because of the Agreement condition of the 2coalitions-consensus task, every process with id in the 
set $\{ c,\ldots, c+ m+ 1\}$ obtained the same value $v$ as return value from the shared object and 
by the 2coalitions-Validity, this is a valid input value proposed by some process. Therefore 
process 
$p_i$ belongs to the coalition $\left[ c,\ldots,c + m+ 1 \right]$ at the end of round $r$. 

On the other hand, if $i\notin \{ c, \ldots, c+ m+1 \}$, $p_i$ does not execute the body of the
{\bf if} statement (lines \ref{CAW5} to \ref{CAW5E3}) and it goes on to execute the {\bf if}/{\bf
else} block at lines \ref{CAW13E1}-\ref{CAW13E6} and then round $r$ ends for $p_i$, thus it does not
try to make a new coalition with other processes.
\end{proof}

\begin{lemma}\label{sublem3WORconsensus2}
Let $m\in \{0,\ldots,n - 2\}$. Suppose that process $p_j$ has exe\-cu\-ted the protocol for
$\gamma(n, m)$ rounds and that there exists $i\in \{ 1, \ldots, n - (m + 1) \}$ such that $p_j$
belongs to the coalition $\left[ i,\ldots, i + m\right]$ or it belongs to the coalition $\left[
i + 1, \ldots, i + m +1\right]$. Then after $p_j$ has executed the protocol for $n - (m+1)$ more
rounds, $p_j$ belongs to the coalition $\left[ i,\ldots, i + m + 1\right]$.
\end{lemma}

\begin{proof}
 We apply Lemma \ref{sublem2WORconsensus2} in each round $\gamma(n, m) + c$, where $c\in \{ 1,
\ldots,n - (m+1)\}$.
\end{proof}

Now we can complete the proof of Lemma \ref{thmWORconsensus2}. Let $p_j$ be any process that
executes the protocol. Just at the beginning of the first round (line \ref{CAW2}), process $p_j$
belongs to the coalition $\left[ j \right]$ (because of the assignments made to the local array
$agreements_j$ in line \ref{CAW1}, so that if $p_j$ executes the protocol for $\gamma(n,1)=n-1$
rounds, we can conclude using Lemma \ref{sublem3WORconsensus2}, that process $p_j$ belongs to some
of the coalitions $\left[ i, i+1 \right]$ $(1\leqslant i \leqslant n - 1)$. Starting from this fact
and using induction on $m$, we can prove that for all $m=1,\ldots,n - 1$; at the end of round
$\gamma(n,m)$, $p_j$ belongs to some of the coalitions $\left[ i,\ldots, i + m \right]$
$(1\leqslant i \leqslant n - m)$. In the last round (when $m=n -1$), after executing the main {\bf
for} block, process $p_j$ belongs to the coalition $T=\left[ 1,\ldots,n\right]$, thus when
$p_j$ executes line \ref{CAW15E}, it will assign to the local variable $dec_i$ a valid proposed
input value and this is the value decided by $p_j$ at line \ref{CAW15}. All processes decide the
same value because all of them are in the coalition $T$. Therefore the protocol satisfies the
Agreement and Validity conditions of the consensus
task.
\end{proof}

The final result of this section is obtained by combining
Lemmas \ref{lemk1coalitionsconsWOR} and 
\ref{thmWORconsensus2}.

\begin{theorem}\label{thmWORconsensusbuenas}
There exists a WOR protocol that solves the consensus task for $n$ processes using
$\binom{n}{2}$ safe-consensus objects.
\end{theorem}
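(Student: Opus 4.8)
The plan is to construct the protocol explicitly by combining the $g$-2coalitions-consensus building block (Figure~\ref{figWORprotk1coalitionconsensus}) into the pyramid structure sketched in Figure~\ref{figgraphicconsensusprotocol}, and then verify that (i) the protocol fits the WOR syntactic template of Figure~\ref{figWORprot}, (ii) it uses exactly $\binom{n}{2}$ safe-consensus objects, and (iii) it solves consensus. The first task is to describe the pyramid for general $n$: I will index the 2coalitions-consensus objects in layers, where layer $\ell$ (for $\ell = 1, \ldots, n-1$) contains $n-\ell$ objects, the $k$-th object of layer $\ell$ being invoked by processes $p_k, p_{k+1}, \ldots, p_{k+\ell}$ with the appropriate "left/right" input tuples assembled from the outputs of the two objects directly above it in layer $\ell-1$. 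Summing the layer sizes gives $\sum_{\ell=1}^{n-1}(n-\ell) = \binom{n}{2}$ objects, and since each 2coalitions-consensus object is realized by one snapshot object plus one safe-consensus object (Figure~\ref{figWORprotk1coalitionconsensus}), the total safe-consensus count is $\binom{n}{2}$.

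**Next I would** argue that this composite protocol can be rewritten in the WOR normal form. Each invocation of a 2coalitions-consensus object expands into exactly one iteration of the WOR loop: line~\ref{GCO3} is the $\writesnapshot$ of line~\ref{GI4}, line~\ref{GCO4} is the safe-consensus invocation of line~\ref{GI5} (with the routing function $h$ computing which layer/position object to invoke from the round number and process id), and line~\ref{GCO5} is the $\readsnapshot$ of line~\ref{GI6}; a process not participating in the current layer's object simply invokes a trivial (solo) box, consistent with the convention in the excerpt. The decision function $\delta$ of line~\ref{GI8} is activated only in the final round, where it reads off the unique output of the last 2coalitions-consensus object. One should note that the rounds of the pyramid can be totally ordered (layer by layer, left to right), so $\binom{n}{2}$ rounds suffice, and in each round at most one non-trivial safe-consensus box is used, matching the WOR model's constraint.

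**The correctness argument** proceeds by induction on the layers. I first invoke the correctness of the $g$-2coalitions-consensus protocol — each such object returns a single value to all its participants, and by 2coalitions-Validity that value equals some $x_j.left$ or $x_j.right$ among the inputs fed to it. The base case (layer~1) is that each object $\eventfont{2CC}$ in the bottom layer is invoked by $p_k$ with $\langle v_k, \nullval\rangle$ and $p_{k+1}$ with $\langle \nullval, v_{k+1}\rangle$, so its output $u$ satisfies $u \in \{v_k, v_{k+1}\}$, hence is a genuine input value. Inductively, if every output of layer $\ell-1$ is some original input $v_m$, then each layer-$\ell$ object receives tuples whose non-$\bot$ entries are layer-$(\ell-1)$ outputs, so by 2coalitions-Validity its output is again some $v_m$; the apex object in layer $n-1$ therefore returns a value that is an original input, giving Validity. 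Agreement is immediate: the apex object hands the same value to all $n$ processes (Agreement of 2coalitions-consensus / safe-consensus), and that value is what $\delta$ writes to $dec$. Termination holds because the protocol runs for a fixed number $\binom{n}{2}$ of rounds regardless of scheduling, and it is wait-free because each 2coalitions-consensus object (being built from wait-free snapshot and safe-consensus primitives) terminates in a bounded number of steps.

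**The main obstacle** I anticipate is not any single hard lemma but the bookkeeping in the inductive step: one must check that the input tuples assembled at layer $\ell$ genuinely satisfy the structural preconditions of the 2coalitions-consensus task — in particular, that exactly one process feeds a tuple of the form $\langle v, \bot\rangle$, exactly one feeds $\langle \bot, v'\rangle$, and all "left" (resp. "right") components that are non-$\bot$ agree. This last consistency condition is where the overlapping structure of the pyramid matters: the two parent objects of a layer-$\ell$ box share a process, and it is precisely that shared process that propagates the common value downward; verifying this carefully — together with confirming that a process's $sm$ snapshot indeed contains an appropriate index $j$ in lines~\ref{GCO7}/\ref{GCO9} — is the crux of the proof. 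The full specification and its verification are deferred to Figure~\ref{figWORprotconsensus2} and Appendix~\ref{appendixProofsecWOR}.
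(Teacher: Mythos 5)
Your proposal is correct and matches the paper's approach: both build the $\binom{n}{2}$-object pyramid out of $g$-2coalitions-consensus building blocks (each realized with one safe-consensus object per Figure~\ref{figWORprotk1coalitionconsensus}) and argue correctness by induction on the pyramid's layers, with the final apex object forcing Agreement and the layer-by-layer 2coalitions-Validity forcing Validity. The paper's Appendix~\ref{appendixProofsecWOR} carries out exactly the bookkeeping you flag as the crux — verifying the coalitions-tuple preconditions at each layer and tracking which processes belong to which coalition — via the local variables $step$, $firstid$, $lastid$ and Lemmas~\ref{sublem1WORconsensus2}--\ref{sublem3WORconsensus2}.
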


\subsection{The lower bound}\label{asymtotictighbounds}

The main result of this paper is a matching lower bound on the number of 
safe-consensus objects needed to solve consensus using  safe-consensus. 
Our lower bound proof is based partly on standard bivalency arguments \cite{fischerImpossCon}, but in 
order to be able to apply them, a careful combinatorial work is necessary.

\subsubsection*{The connectivity of iterated protocols with 
safe-consensus.}\label{subsecConnWOR1NTSC}

Roughly speaking, a typical  consensus impossibility proof  
shows that a protocol $\mathcal{A}$ cannot solve consensus because there exists one execution of 
$\mathcal{A}$ in which processes decide a consensus value $v$ and a second execution of 
$\mathcal{A}$ where the consensus output of the processes is $v^\prime$, with $v\neq v^\prime$, such 
that the global states of these executions can be 
connected with paths of connected states. The existence of such paths will imply that in some 
execution of $\mathcal{A}$, some processes decide distinct output values 
\cite{fischerImpossCon,impossConSHM,HS99,getco10topologicaltheory}, violating the Agreement 
requirement of consensus. Any protocol that solves consensus, must be able to destroy 
these paths of connected states.

Remember from Section \ref{secSharedObjsCombinatorialSets} that the set $\Gamma_\mathcal{A}(n,m)$ encodes all
the valid subsets of processes of size $m$ that can invoke safe-consensus objects in $\mathcal{A}$. The cardinality 
of $\Gamma_\mathcal{A}(n,m)$ is denoted by $\nu_\mathcal{A}(n ,m)$ and 
\[\nu_\mathcal{A}(n)=\sum_{m=2}^n \nu_\mathcal{A}(n ,m).\]
For the case of our lower bound proof, the main property that will prevent $\mathcal{A}$ from 
solving consensus (i.e., from destroying paths of connected states) is that for some $m_0\in \{ 2,\ldots,n \}$, 
it is true that 
\begin{equation}\label{eqCantDestroyPaths}
	\nu_\mathcal{A}(n,m_0) \leqslant n - m_0,
\end{equation}
i.e., at most $n - m_0$ subsets of processes of size 
$m_0$ can invoke safe-consensus shared objects in the protocol $\mathcal{A}$. 
We are ready to present our main result, it is the following 

\begin{theorem}\label{lemminimumKboxesFULLWOR}
If $\mathcal{A}$ is a WOR protocol for $n$-consensus using safe-consensus objects, then for 
every $m\in \{ 2,\ldots, n \}$, $\nu_\mathcal{A}(n,m) > n - m$.
\end{theorem}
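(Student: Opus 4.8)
The plan is to prove the contrapositive-style statement by induction on $n$, using a bivalency argument combined with the connectivity machinery of Section~\ref{secResultsWOR}. Suppose, for contradiction, that $\mathcal{A}$ is a WOR protocol solving $n$-consensus with $\nu_\mathcal{A}(n,m_0) \leqslant n - m_0$ for some $m_0 \in \{2,\ldots,n\}$. The heart of the argument is to show that, under this hypothesis, any two reachable states in a given round that ``should'' be distinguishable can in fact be connected by a \regularpath, and that this connectivity propagates from one round to the next. Concretely, I would fix two initial states $I, J$ that are $v$-valent and $v'$-valent respectively (these exist by bivalency of some initial state, Lemma~\ref{leministates}, and the validity of consensus), and show by induction on the round number $r$ that the successor states $I^r = I\cdot\xschedulesup{r}{\cdots}$ and $J^r = J\cdot\xschedulesup{r}{\cdots}$ remain connected through paths whose intermediate states all share the same global invocation specification. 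Lemma~\ref{lemvvstates} then forces these connected endpoints to have the same valence, contradicting $v \neq v'$.

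**Key combinatorial step.** The crucial point is that having only $\nu_\mathcal{A}(n,m_0) \leqslant n - m_0$ boxes of size $m_0$ is too few to ``separate'' all the relevant configurations. I would phrase this via subgraphs of the Johnson graph $J(n,m_0)$: the $m_0$-subsets of $\overline{n}$ that are \emph{not} used as boxes form a large set, and one shows that the complement of $\Gamma_\mathcal{A}(n,m_0)$ inside $V_{n,m_0}$ is still connected (in the Johnson-graph adjacency where two $m_0$-sets are adjacent if they share $m_0-1$ elements) when $|\Gamma_\mathcal{A}(n,m_0)| \leqslant n - m_0$, because removing $n-m_0$ vertices cannot disconnect $J(n,m_0)$ (its connectivity is $m_0(n-m_0) \geqslant n-m_0$, with strict relevant slack). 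This connectivity of the ``unused-box'' graph is what lets me slide, one $m_0$-coalition swap at a time, between two round schedules $\xschedule{A_1,\ldots,A_q}$ that differ in which processes are grouped together — and because no box of size $m_0$ is invoked along the way, the safe-consensus outputs cannot be exploited to break indistinguishability, so Lemma~\ref{lemAllNormalFormsSameboxes} keeps the path \regularpath.

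**Inductive propagation.** For the induction step I would argue that if $S \linkstates{X} S'$ with $|X| \geqslant m_0$ at the end of round $r$, then running one more round with a carefully chosen schedule $\xschedule{A_1,\ldots,A_q}$ (in which the processes of $X$ act together, so the snapshot they obtain is identical, and no $m_0$-box outside $\Gamma_\mathcal{A}(n,m_0)$ gets invoked in a way that creates new distinctions) yields successor states that are again connected by a \regularpath of degree $\geqslant m_0$. The base case $r=0$ is Lemma~\ref{leministates} together with the observation that initial states have empty (or trivial) invocation specifications. One then carries this through all rounds; since $\mathcal{A}$ is wait-free it decides in finitely many rounds, and at the deciding round the two connected states would have to output both $v$ and $v'$ — contradiction.

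**Main obstacle.** The hard part will be the bookkeeping that connects the abstract Johnson-graph connectivity to an actual \regularpath of \emph{global states} of the protocol: one must verify that each elementary coalition swap is realized by a genuine pair of adjacent reachable states with the \emph{same} $\setofboxes{\cdot}$, and that the degree of indistinguishability never drops below $m_0$ along the whole path across all rounds. This is exactly where the ``intricate combinatorial and bivalency argument'' lives, and for $n=3$ it specializes to the manageable cases $m_0 = 2$ (at most one box of size $2$, so two of the three pairs are unused and $J(3,2)$ minus one vertex is still connected) and $m_0 = 3$ (no box of all three processes), which I would treat explicitly as the warm-up before the appendices handle general $n$.
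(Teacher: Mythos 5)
Your top-level skeleton is right: assume for contradiction that $\nu_\mathcal{A}(n,m_0)\leqslant n-m_0$ for some $m_0$, connect two initial states of opposite valence, propagate connectivity of successor states round by round, and invoke Lemma~\ref{lemvvstates} to contradict the distinct valences. That matches the paper's structure. But your ``key combinatorial step'' and the ``inductive propagation'' diverge from what the paper does, and in fact contain a genuine gap.

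First, the combinatorial tool you reach for is not the right one. You argue that since $J(n,m_0)$ has vertex connectivity $m_0(n-m_0)$, removing the $\leqslant n-m_0$ vertices of $\Gamma_\mathcal{A}(n,m_0)$ leaves the \emph{complement} (the unused $m_0$-boxes) connected, and then you claim you can ``slide, one $m_0$-coalition swap at a time'' through unused boxes. But in the WOR model the boxes invoked in a given round are determined by the deterministic function $h$ applied to each process's local state from the end of the \emph{previous} round — not by the round schedule you choose within the current round. Lemma~\ref{lemWORsamesafeconfigurationAnySuccessor} makes this precise: for any two one-round schedules $\pi_1,\pi_2$, $\setofboxes{S\cdot\pi_1}=\setofboxes{S\cdot\pi_2}$. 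So there is no freedom to ``route'' an execution so that ``no box of size $m_0$ is invoked along the way'' — the protocol dictates which safe-consensus objects are hit, and they may well be used boxes. The connectivity of the unused-box graph gives you nothing actionable. What the paper actually uses is connectivity properties of the \emph{used} boxes $\Gamma_\mathcal{A}(n,m_0)$: Lemma~\ref{lemjohnsonsubgraph}(i) bounds $\lvert\bigcup U\rvert$ for a connected induced subgraph, Lemma~\ref{lemjohnsoninducedgraph2} shows a connected $U$ with $\lvert U\rvert\leqslant n-m$ cannot cover $\overline{n}$, and the $\zeta$-operator (Equation~\eqref{eqDefZetaU}) tracks how dangerous box sets merge as rounds progress, culminating in Theorem~\ref{NoConnUmNoConnzetaUm}: $\zeta^{n-m}(U)=\varnothing$ when $\lvert U\rvert\leqslant n-m$.

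Second, your inductive invariant is too strong. You claim the degree of indistinguishability stays $\geqslant m_0$ across all rounds. That is false in general, and the paper does not prove it. The actual invariant (Lemma~\ref{lemSequenceRoundRSequenceRoundRp1geqR}) allows the degree to drop by one per round; the crucial content is the bookkeeping in properties $\Psi_3$--$\Psi_5$ showing that a drop to size $\lvert Z\rvert<n-2$ in the indistinguishability set is caused by a \emph{unique} box $b$ with $Z=\overline{n}-b$, and that the set of such dangerous boxes is contained in $\zeta$ applied to the previous round's dangerous boxes. It is precisely the iterated-$\zeta$ argument that kills the dangerous set after $s-1$ rounds and stabilizes the degree at $\geqslant1$, which is all you need. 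Without that machinery there is no way to prevent the degree from reaching zero, which would break the path and destroy the contradiction. In short: you identified the right high-level shape (bivalency plus round-by-round connectivity plus Johnson-graph combinatorics) but picked the wrong combinatorial invariant (complement connectivity rather than union bounds and the $\zeta$-operator on the used boxes) and assumed an indistinguishability degree that does not actually hold.
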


Theorem \ref{lemminimumKboxesFULLWOR} describes the minimum number of different  process groups, 
each of size $m$, that must invoke safe-consensus shared objects, in order to be able to solve the consensus 
task, for each $m = 2,\ldots, n$. The lower bound on the total number of safe-consensus objects 
necessary to implement consensus, is an easy consequence of Theorem \ref{lemminimumKboxesFULLWOR} 
and the definition of $\nu_\mathcal{A}(n)$. 

To understand better the implications of Theorem \ref{lemminimumKboxesFULLWOR} on the 
minimum number of safe-consensus shared objects needed to implement $n$-consensus, here is an 
example: Let $n = 5$ and suppose $\mathcal{A}$ is a WOR protocol that solves $5$-consensus.
By Theorem \ref{lemminimumKboxesFULLWOR}, for each $m= 2,3,4,5$, $\mathcal{A}$ satisfies the 
following inequalities:
\begin{itemize}
  \item[] $\nu_\mathcal{A}(5,2) > 5 - 2 = 3$,
  \item[] $\nu_\mathcal{A}(5,3) > 5 - 3 = 2$,
  \item[] $\nu_\mathcal{A}(5,4) > 5 - 4 = 1$,
  \item[] $\nu_\mathcal{A}(5,5) > 5 - 5 = 0$.
\end{itemize}

In words, There must be at least 4 groups of 2 processes that invoke sa\-fe-con\-sen\-sus ob\-jects, 
3 groups of 3 processes that invoke safe-consensus objects, 2 groups of 4 processes and finally, 
a group of 5 processes (e.g., all the processes) that must invoke a safe-consensus object. Thus the
minimum total number of safe-consensus shared objects that the processes must invoke in $\mathcal{A}$ 
to implement consensus is
\begin{eqnarray*}
  \nu_{\mathcal{A}}(5) & = & \nu_\mathcal{A}(5,2) + \nu_\mathcal{A}(5,3) + \nu_\mathcal{A}(5,4) + \nu_\mathcal{A}(5,5) \\
    & \geqslant & 4 + 3 + 2 + 1 \\
    & = & 10 \\
    & = & \binom{5}{2}.
\end{eqnarray*}
Generalizing this example for arbitrary $n$, we obtain the inequality $\nu_{\mathcal{A}}(n) \geqslant \binom{n}{2}$.
Notice that we are not counting trivial boxes (safe-consensus objects invoked by one process) in the lower bound.

Theorem \ref{lemminimumKboxesFULLWOR} will be proven by contradiction, 
that is, we will assume that for some $m_0$, Equation 
\eqref{eqCantDestroyPaths} holds. There are two structural results about 
WOR protocols that will be needed: Lemmas 
\ref{lemkboxesleqnmkconnFULLWOR} and \ref{thm2boxesleqnm2}. Basically, these 
two results tell us how to find paths of connected reachable states in 
every round of an executing WOR protocol, whenever Equation 
\eqref{eqCantDestroyPaths} is satisfied. The proof of each lemma relies on 
the development of various results about the connectivity of states of WOR 
protocols with safe-consensus shared objects and additionally, 
these results give us 
information about the 
combinatorial interactions that exists between the sets of processes which cannot distinguish 
between the states of a path and the boxes that represent safe-consensus objects invoked by the 
processes. This extra information is provided by two results about the 
connectivity of subgraphs of Johnson graphs, Lemma \ref{lemPartitionn2} 
and Theorem 
\ref{NoConnUmNoConnzetaUm}. Since the theory and prior results needed to prove 
these two results are purely combinatorial, we defer their formal proofs to 
Appendix \ref{appendixProofJohnsonGraphs}.


\subsubsection*{Further model terminology}\label{subsecfurthermodeldefs}

We will be using definitions and terminology given in 
Sections \ref{secdcBasicDef} and \ref{secAllModels}. 
Before giving the formal proof of our results, we need some technical definitions and lemmas.
First, 
we define a set of round schedules that will be very useful for the rest of the paper. 
Given $q\geqslant 1$ disjoint sets 
$A_1,\ldots,A_q\subset\overline{n}$, 
define the round schedule $\xschedule{A_1,\ldots,A_q}$ for $\mathcal{A}$ as:
\begin{equation}
\writeop{A_1},\scop{A_1},\readop{A_1},\ldots,
\writeop{A_q},\scop{A_q},\readop{A_q},\writeop{Y},\scop{Y},\readop{Y},
\end{equation}
where $Y=\overline{n} - (\bigcup_{i=1}^q A_i)$. 
For any state $S$ and $u\geqslant 0$, define
\[S\cdot \xschedulesup{u}{A_1,\ldots,A_q} = \begin{cases} S & \text{if } u = 0,\\
  (S\cdot \xschedulesup{u-1}{A_1,\ldots,A_q}) \cdot \xschedule{A_1,\ldots,A_q} &
\text{otherwise.} \end{cases}\]
I.e. $S\cdot \xschedulesup{u}{A_1,\ldots,A_q}$ is the state that we obtain after we run the
protocol $\mathcal{A}$ (starting from $S$) $u$ rounds with the round schedule
$\xschedule{A_1,\ldots,A_q}$ each iteration.

Let $\mathcal{A}$ be a WOR protocol 
for $n$-processes, $X\subseteq
\overline{n}$, $R$ a state of some round $r > 0$ and $b\in \setofboxes{R}$. We say that $R$ is a
\emph{ladder state for $X$ with step $b$}, if $R= S\cdot \xschedule{C_1,\ldots,C_u, B}$ $(u
\geqslant 0)$, where $S$ is a reachable state in $\mathcal{A}$ and
\begin{itemize}
\item $X=(\bigcup_j^u C_j) \cup B$;
 \item for $j=1,\ldots,u$, $0 \leqslant \lvert C_j \rvert \leqslant 2$;
\item $(\bigcup_j^u C_j) \cap b=\varnothing$;
\item $B = b \cap X$.
\end{itemize}
The following definition is given for convenience. For each box $b_i$ and $W \subseteq 
\overline{n}$, let the set $W_{(i)}$ be defined as 
\begin{equation}\label{eqDefWcapBoxBi}
 W_{(i)} = W \cap b_i.
\end{equation}

\begin{lemma}\label{lemWORsamesafeconfigurationAnySuccessor}
 Let $\mathcal{A}$ be a WOR protocol for $n\geqslant 2$ processes with sa\-fe-con\-sen\-sus
objects and $S$ a reachable state in $\mathcal{A}$. Then for any two round schedules $\pi_1,\pi_2$,
$\setofboxes{S\cdot \pi_1} = \setofboxes{S\cdot \pi_2}$.
\end{lemma}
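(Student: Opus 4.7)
The plan is to exploit the fact that in the WOR model of Figure \ref{figWORprot}, a process's choice of which safe-consensus object to invoke is a deterministic function of its local state at the start of the round, and this local state is unaffected by the round schedule. Concretely, at line \ref{GI5} process $p_i$ evaluates $\mathsf{S}[h(\langle r, id, sm_i, val_i \rangle)]$, and every argument to $h$ is present in $p_i$'s local state the moment the round begins: the $\writeop{}$ event at line \ref{GI4} deposits data into shared memory without modifying $p_i$'s local variables, while the $\readop{}$ at line \ref{GI6} that could update $sm_i$ is executed only after the invocation. So the object $p_i$ invokes depends on $S$ and on $i$, but not on the order or grouping in which the write, invoke, and snapshot events of the different processes are interleaved.

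First I would fix $S$ and, for each $p_i\in \Pi$, define the deterministic quantity $\phi_S(i) = h(\langle r+1, i, sm_i^S, val_i^S \rangle)$, where $sm_i^S, val_i^S$ are the contents of $p_i$'s local variables in $S$ and $r$ is $S$'s round number. By the observation above, in any execution $S\cdot \pi$ of one WOR round, the object invoked by $p_i$ is $\mathsf{S}[\phi_S(i)]$, irrespective of $\pi$. Next I would note that any legal WOR round schedule has the form $\xi(A_1,\ldots,A_q,Y)$, which covers all of $\overline{n}$; hence every process actually reaches line \ref{GI5} once during the round. Consequently, the partition of $\overline{n}$ into boxes of $\setofboxes{S\cdot\pi}$ — two processes being grouped together exactly when they invoke the same shared object — is the partition induced by the fibers of $\phi_S$:
\[
\setofboxes{S\cdot \pi} \;=\; \{\, \phi_S^{-1}(k) \;:\; k \in \phi_S(\overline{n}) \,\}.
\]
Since the right-hand side depends only on $S$, the equality $\setofboxes{S\cdot \pi_1} = \setofboxes{S\cdot \pi_2}$ follows.

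There is essentially no obstacle here: the lemma is a straightforward structural property of the WOR model, obtained by reading the pseudocode of Figure \ref{figWORprot} and observing that $h$'s arguments are snapshot-independent at the point of invocation. The only small care needed is to record that writes do not alter local state and that the local $val_i$ passed into $h$ is the value from the \emph{previous} round rather than the one about to be returned by the current invocation.
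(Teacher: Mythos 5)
Your proof is correct and follows essentially the same approach as the paper's: both hinge on the observation that in the WOR model the index $h(\langle r+1, i, sm_i, val_i \rangle)$ is computed before the snapshot and after a write that does not alter local state, so it is fully determined by $p_i$'s local state in $S$ (and hence by $S$), independent of the round schedule. Your fiber-partition formulation is a slightly more explicit packaging of the same idea.
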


\begin{proof}
In Figure \ref{figWORprot}, notice that process $p_i$ executes the deterministic function $h$ with
the local state it has from the previous round (except from the round counter $r$), so that if $r$ 
is the round number of $S$ and $p_i$ starts executing the protocol $\mathcal{A}$ in round $r+1$ with 
the local state it had in $S$, the input to the 
function $h$ is given by the tuple 
$(r+1, i, sm_S, val_S)$, where $sm_S,val_S$ depend only on $S$. Thus in both successor states 
$S\cdot \pi_1$ and $S\cdot \pi_2$, $p_i$ feeds the same input to $h$, gather that
\[ \setofboxes{S\cdot \pi_1} = \setofboxes{S\cdot \pi_2} \]
and the lemma is true.
\end{proof}

\subsection{The local structure of WOR protocols with safe-consensus
objects}\label{secStructureProtocolsWORSafeConsensus}

In this section, we present results that will be needed to prove Lemma 
\ref{lemkboxesleqnmkconnFULLWOR}. The goal here is to show Theorem \ref{lemlocalconnectXYGeneric}, which give us a way to connect two states of the form
$S\cdot \xschedule{X}$ and $S\cdot \xschedule{Y}$ for $X,Y\subseteq \overline{n}$ and 
any reachable state $S$ of a WOR protocol. Put it another way, with Theorem \ref{lemlocalconnectXYGeneric} we can connect for a given WOR protocol, a state that 
is obtained if the processes with ids in $X$ first update the memory, invoke 
safe-consensus objects and finally they snapshot the shared memory and after that, processes with ids in $\overline{n} - X$ perform the same actions, with a state that is obtained when processes with ids in $Y$ and ids in $\overline{n} - Y$ perform the same 
action in the same way. All this is done with a careful handling of the 
safe-consensus values returned by the shared objects to the processes, in order to keep
the degree of indistinguishability of the path as high as possible. The ladder
states defined previously play an important role in the construction of the path given 
by Theorem \ref{lemlocalconnectXYGeneric}.

\begin{lemma}
\label{lemStairWayProcessPhase1}
 Let $n\geqslant 2$, $\varnothing \neq X\subseteq \overline{n}$, $\mathcal{A}$ a WOR protocol
with sa\-fe-con\-sen\-sus objects, $S$ a state that is reachable in $\mathcal{A}$ in some round
$r\geqslant 0$ and $b\in \setofboxes{S_X}$, where $S_X=S\cdot \xschedule{X}$. Then
there exists a state $L$, such that $L$ is a ladder state for $X$ with step $b$, such that $S_X$
and $L$ are connected in round $r+1$ with a path of states $\mathfrak{p}$ with $\deg \mathfrak{p}
\geqslant n -2$.
\end{lemma}

\begin{proof}
 Let $\setofboxes{S_X}= \{ b_1,\ldots, b_q \}$ $(q\geqslant 1)$. By Lemma
\ref{lemWORsamesafeconfigurationAnySuccessor}, for any one-round successor state $Q$ of $S$,
$\setofboxes{Q}=\setofboxes{S_X}$ and we can write $\mathbf{Inv}_{S}$ instead of 
$\setofboxes{S_X}$. Without loss of generality, assume that
$b_1=b$. If $q=1$, then $b_1=\overline{n}$ and the result is immediate, because $S_X$ is a ladder
state for $X$ with step $\overline{n}$. Suppose that $q > 1$. Partition $X$ as $X=X_{(1)}
\cup \cdots \cup X_{(q)}$ and build the following sequence of connected states
\begin{multline}\label{eqlemlocalconnectStairwayP1Subsequence1}
  S_X \linkstates{\overline{n} - \laddersubsetX{2}{1}} S \cdot
\xschedule{\laddersubsetX{2}{1}, X - \laddersubsetX{2}{1}} \\
\linkstates{\overline{n} - \laddersubsetX{2}{2}} S\cdot
\xschedule{\laddersubsetX{2}{1},\laddersubsetX{2}{2}, X - (\laddersubsetX{2}{1} \cup
\laddersubsetX{2}{2})} \\
\linkstates{\overline{n} - \laddersubsetX{2}{3}} \cdots \linkstates{\overline{n} -
\laddersubsetX{2}{\alpha_2}} \\ 
S \cdot \xschedule{\laddersubsetX{2}{1},\laddersubsetX{2}{2},\ldots,\laddersubsetX{2}{\alpha_2}, X -
X_{(2)}},
\end{multline}
where $X_{(2)}=\bigcup_{i=1}^{\alpha_2} \laddersubsetX{2}{i}$ is a partition of $X_{(2)}$ such that
$\lvert \laddersubsetX{2}{j} \rvert =1$ for $j=2,\ldots,\alpha_2$. The set $\laddersubsetX{2}{1}$
has cardinality given by
\[ \lvert \laddersubsetX{2}{1} \rvert =\begin{cases}
      2 & \text{if } \lvert X_{(2)} \rvert > 1, \\
      \lvert X_{(2)} \rvert & \text{otherwise,}
     \end{cases}
 \]
 (notice that the $\laddersubsetX{2}{j}$'s and $\alpha_2$ depend on $b_2$ and $X_{(2)}$), 
and we choose the safe-consensus value of every box in the set $\mathbf{Inv}_{S}$ to be the same in
each state of the previous sequence. This can be done because of the way we partition $X_{(2)}$,
the election of the elements of the set $\laddersubsetX{2}{1}$ and the properties of the
safe-consensus task (Safe-Validity). We execute similar steps with box $b_3$, so that we
obtain the path
\begin{multline}\label{eqlemlocalconnectStairwayP1Subsequence2}
S \cdot \xschedule{\laddersubsetX{2}{1},\ldots,\laddersubsetX{2}{\alpha_2}, X - X_{(2)}} \\
\linkstates{\overline{n} - \laddersubsetX{3}{1}} S \cdot
\xschedule{\laddersubsetX{2}{1},\ldots,\laddersubsetX{2}{\alpha_2}, \laddersubsetX{3}{1},X
- (X_{(2)} \cup \laddersubsetX{3}{1})} \\
\linkstates{\overline{n} - \laddersubsetX{3}{2}} S \cdot
\xschedule{\laddersubsetX{2}{1},\ldots,\laddersubsetX{2}{\alpha_2},
\laddersubsetX{3}{1},\laddersubsetX{3}{2},X - (X_{(2)} \cup \laddersubsetX{3}{1}\cup
\laddersubsetX{3}{2})} \\
\linkstates{\overline{n} - \laddersubsetX{3}{3}} \cdots \linkstates{\overline{n} -
\laddersubsetX{3}{\alpha_3}} \\
S \cdot
\xschedule{\laddersubsetX{2}{1},\ldots,\laddersubsetX{2}{\alpha_2},\laddersubsetX{3}{1},\ldots,
\laddersubsetX{3}{\alpha_3},X - (X_{(2)} \cup X_{(3)})},
\end{multline}
where the $\laddersubsetX{3}{i}$'s and $\alpha_3$ depend on $b_3$ and $X_{(3)}$, just in the same
way the $\laddersubsetX{2}{j}$'s and $\alpha_2$ depend on $b_2$ and $X_{(2)}$ and each box has 
safe-consensus value equal to the value it has in the sequence of 
\eqref{eqlemlocalconnectStairwayP1Subsequence1}. We can repeat the very same steps for
$b_4,\ldots,b_q$ to obtain the sequence
\begin{multline}\label{eqlemlocalconnectStairwayP1Subsequence3}
S \cdot \xschedule{\laddersubsetX{2}{1},\ldots,\laddersubsetX{3}{\alpha_3},X - (X_{(2)} \cup
X_{(3)})} \\
  \linkstates{\overline{n} - \laddersubsetX{4}{1}} S\cdot
\xschedule{\laddersubsetX{2}{1},\ldots,\laddersubsetX{3}{\alpha_3},\laddersubsetX{4}{1},X -
(X_{(2)} \cup X_{(3)} \cup \laddersubsetX{4}{1})} \\
\linkstates{\overline{n} - \laddersubsetX{4}{2}} \cdots \linkstates{\overline{n} -
\laddersubsetX{q}{1}} \\ 
S \cdot \xschedule{\laddersubsetX{2}{1},\ldots,\laddersubsetX{q}{1}, X -
(\laddersubsetX{q}{1} \cup \bigcup\nolimits_{i=2}^{q-1} X_{(i)})} \\
\linkstates{\overline{n} - \laddersubsetX{q}{2}} \cdots \linkstates{\overline{n} -
\laddersubsetX{q}{\alpha_q}} S \cdot
\xschedule{\laddersubsetX{2}{1},\ldots,\laddersubsetX{q}{\alpha_q}, X_{(1)}}.
\end{multline}
It is easy to prove that $L=S \cdot
\xschedule{\laddersubsetX{2}{1},\ldots,\laddersubsetX{q}{\alpha_q}, X_{(1)}}$ is a ladder state for
$X$
with step $b_1$ and that each of the sequences of equations
\eqref{eqlemlocalconnectStairwayP1Subsequence1}, \eqref{eqlemlocalconnectStairwayP1Subsequence2} and
\eqref{eqlemlocalconnectStairwayP1Subsequence3} has indistinguishability degree no less that $n
-2$. Combining all these sequences, we obtain a new path 
\[ \mathfrak{p} \colon S_{X} \linkstates{} \cdots \linkstates{} L \] 
with $\deg \mathfrak{p} \geqslant n - 2$. 
\end{proof}

\begin{lemma}
\label{lemStairWayProcessPhase2}
 Let $n\geqslant 2,X,Y \subseteq \overline{n}$, $\mathcal{A}$ a WOR protocol with
safe-consensus objects and $S$ a state that is reachable in $\mathcal{A}$ in some round $r\geqslant
0$. Assume also that $b_j$ is a box representing a safe-consensus object invoked by some processes 
such that $b_j\in \setofboxes{L_1} = \setofboxes{L_2}$, where $L_1 = 
S\cdot\xschedule{C_1,\ldots,C_u,X_{(j)}}$ is a ladder state for $X$ with step $b_j$ and $L_2 = 
S\cdot \xschedule{C_1,\ldots,C_u, Y_{(j)}}$ is a ladder state for $(X - X_{(j)}) \cup Y_{(j)}$ with 
step $b_j$. Finally, suppose that if $b_j=\overline{n}$, $\scvalof{b_j}{L_1}=\scvalof{b_j}{L_2}$. 
Then $L_1$ and $L_2$ are connected in round $r+1$ with a \regularpath path of states $\mathfrak{p}$
that satisfies the following properties:
\begin{itemize}
 \item [i)] If $\scvalof{b_j}{L_1}=\scvalof{b_j}{L_2}$ then $\deg \mathfrak{p} \geqslant n - 2$.
\item  [ii)] If $\scvalof{b_j}{L_1} \neq \scvalof{b_j}{L_2}$ then $\deg \mathfrak{p}  \geqslant n -
\lvert b_j \rvert$ and $\overline{n} - b_j \in \isets{\mathfrak{p}}$.
\end{itemize}
\end{lemma}

\begin{proof}
Partition $\overline{n}$ as the disjoint union $\overline{n}=X^- \cup (X \cap Y) \cup Y^- \cup W$, where 
\begin{itemize}
\item [] $X^-=X - Y$;
\item [] $Y^- = Y - X$;
\item [] $W =\overline{n} - (X^- \cup (X \cap Y) \cup Y^-)$. 
\end{itemize}
What we need to do to go from $L_1$ to $L_2$ is to ``interchange'' $X_{(j)}$ with $Y_{(j)}$. We 
first construct a sequence of connected states $\mathfrak{p}_1$ given by
\begin{multline*} 
S \cdot \xschedule{C_1,\ldots,C_u, X_{(j)},Y^- \cup W} \\
\linkstates{\overline{n} -
\laddersubsetX{j}{1}} S \cdot \xschedule{C_1,\ldots,C_u, \laddersubsetX{j}{1}, X_{(j)} -
\laddersubsetX{j}{1},Y^- \cup W} \\ 
\linkstates{\overline{n} - \laddersubsetX{j}{2}} S \cdot \xschedule{C_1,\ldots,C_u,
\laddersubsetX{j}{1},\laddersubsetX{j}{2}, X_{(j)} - (\laddersubsetX{j}{1}\cup
\laddersubsetX{j}{2}),Y^- \cup W} \\
\linkstates{\overline{n} - \laddersubsetX{j}{3}} \cdots \linkstates{\overline{n} -
\laddersubsetX{j}{\alpha_j-1}} S \cdot \xschedule{C_1,\ldots,C_u,
\laddersubsetX{j}{1},\laddersubsetX{j}{2},\ldots,\laddersubsetX{j}{\alpha_j},Y^- \cup W} \\
\linkstates{\overline{n} - \laddersubsetX{j}{\alpha_j}} S \cdot \xschedule{C_1,\ldots,C_u,
\laddersubsetX{j}{1},\laddersubsetX{j}{2},\ldots,\laddersubsetX{j}{\alpha_j - 1},
\laddersubsetX{j}{\alpha_j} \cup Y^- \cup W} \\
\linkstates{\overline{n} - \laddersubsetX{j}{\alpha_j - 1}} \cdots \linkstates{\overline{n}
- \laddersubsetX{j}{1}} S \cdot \xschedule{C_1,\ldots,C_u,X_{(j)} \cup Y^- \cup W},
\end{multline*}
where the following properties hold:
\begin{itemize}
\item $X_{(j)}=\bigcup_{i=1}^{\alpha_j} \laddersubsetX{j}{i}$ is a partition of $X_{(j)}$ such that
$\lvert \laddersubsetX{j}{l} \rvert =1$ for $l=2,\ldots,\alpha_j$;  
\item $\lvert \laddersubsetX{j}{1} \rvert = 2$ if $\lvert X_{(j)} \rvert > 1$ and $\lvert
\laddersubsetX{j}{1} \rvert = \lvert X_{(j)} \rvert$ otherwise; 
\item $\mathfrak{p}_1$ is a \regularpath sequence (Lemma
\ref{lemWORsamesafeconfigurationAnySuccessor}) with $\deg \mathfrak{p}_1 \geqslant n - 2$;
\item the safe-consensus value of every box $b_i$ is the same in each state of $\mathfrak{p}_1$.
This can be achieved by a proper selection of elements of the set $\laddersubsetX{j}{1}$ and the
Safe-Validity property of the safe-consensus task.
\end{itemize}
Now, since the following equalities hold: 
\begin{eqnarray*}
  X_{(j)} \cup Y^- & = & (X \cap b_j) \cup Y^- \\
    & = & ((X^- \cup (X \cap Y)) \cap b_j) \cup Y^- \\
    & = & X^-_{(j)} \cup (X \cap Y \cap b_j) \cup Y^- \\
    & = & X^-_{(j)} \cup (X \cap Y \cap b_j) \cup (Y^- - Y^-_{(j)}) \cup Y^-_{(j)} \\
    & = & (X \cap Y \cap b_j) \cup Y^-_{(j)} \cup X^-_{(j)} \cup (Y^- - Y^-_{(j)}) \\
    & = & Y_{(j)} \cup X^-_{(j)} \cup (Y^- - Y^-_{(j)}),
\end{eqnarray*}
we can write the state $S \cdot \xschedule{C_1,\ldots,C_u,X_{(j)} \cup Y^- \cup W}$ as 
\[S \cdot
\xschedule{C_1,\ldots,C_u,Y_{(j)} \cup X^-_{(j)} \cup (Y^- - Y^-_{(j)}) \cup W}.\]  
We need to build
a second path $\mathfrak{p}_2$ as follows:
\begin{multline}\label{eqlemlocalconnectStairwayP2SubsequenceBETTER2}
 S \cdot \xschedule{C_1,\ldots,C_u,Y_{(j)} \cup X^-_{(j)} \cup (Y^- - Y^-_{(j)})\cup W} \\
\linkstates{\overline{n} - U} S \cdot
\xschedule{C_1,\ldots,C_u,\laddersubsetY{j}{1},(Y_{(j)} - \laddersubsetY{j}{1}) \cup X^-_{(j)} \cup
(Y^- - Y^-_{(j)}) \cup W} \\ 
\linkstates{\overline{n} - \laddersubsetY{j}{2}} \cdots \linkstates{\overline{n} -
\laddersubsetY{j}{\epsilon_j - 1}} \\ 
 S \cdot \xschedule{C_1,\ldots,C_u,\laddersubsetY{j}{1},\ldots,\laddersubsetY{j}{\epsilon_j} \cup
X^-_{(j)} \cup (Y^- - Y^-_{(j)}) \cup W}
\\
\linkstates{\overline{n} - \laddersubsetY{j}{\epsilon_j}} S \cdot
\xschedule{C_1,\ldots,C_u,\laddersubsetY{j}{1},\ldots,\laddersubsetY{j}{\epsilon_j},X^-_{(j)} \cup
(Y^- - Y^-_{(j)}) \cup W} \\
\linkstates{\overline{n} - \laddersubsetY{j}{\epsilon_j - 1}} \\
S \cdot \xschedule{C_1,\ldots,C_u,\laddersubsetY{j}{1},\ldots,\laddersubsetY{j}{\epsilon_j
- 1} \cup \laddersubsetY{j} {\epsilon_j},X^-_{(j)} \cup (Y^- - Y^-_{(j)}) \cup W} \\
\linkstates{\overline{n} - \laddersubsetY{j}{\epsilon_j - 2}} \cdots \linkstates{\overline{n} -
\laddersubsetY{j}{1}} S \cdot \xschedule{C_1,\ldots,C_u,Y_{(j)},X^-_{(j)} \cup (Y^- - Y^-_{(j)})
\cup W}.
\end{multline}
Let $L_2$ be the last state of the previous sequence. The next assertions are true for the path
$\mathfrak{p}_2$: 
\begin{itemize}
\item The sets $\laddersubsetY{j}{i}$ and $\epsilon_j$ are defined for $Y_{(j)}$ and $b_j$ in the
same way as the $\laddersubsetX{j}{i}$ and $\alpha_j$ are defined for $X_{(j)}$ and $b_j$;
\item The sequence $\mathfrak{p}_2$ is \regularpath (Lemma 
\ref{lemWORsamesafeconfigurationAnySuccessor});
\item The safe-consensus value of every box $c\neq b_j$ is the same in every element of
$\states{\mathfrak{p}_2}$;
\item $\scvalof{b_j}{Q}=\scvalof{b_j}{P}$ for all $Q,P\in \states{\mathfrak{p}_2} - \{ R \}$, where $R$ is such that
\[R=S \cdot \xschedule{C_1,\ldots,C_u,Y_{(j)} \cup X^-_{(j)} \cup (Y^- - Y^-_{(j)})\cup W};\]
 \item the set $U$ is defined by
\begin{equation}\label{eqlemlocalconnectStairwayP2SubsequenceBETTER2_Prorperties_U}
U=\begin{cases}
   b_j & \text{if } \scvalof{b_j}{L_1} \neq \scvalof{b_j}{L_2} \\
\laddersubsetY{j}{1} & \text{otherwise;}
  \end{cases}
\end{equation}
\end{itemize}
Notice that by the last assertion, we can deduce that $\deg \mathfrak{p}_2 \geqslant n - \lvert b_j
\rvert \text{ and } \overline{n} - b_j \in \isets{\mathfrak{p}_2}$ if $\scvalof{b_j}{L_1} $ $\neq
\scvalof{b_j}{L_2}$ and $\deg \mathfrak{p}_2 \geqslant n - 2$ when $\scvalof{b_j}{L_1} =
\scvalof{b_j}{L_2}$. Thus we can use $\mathfrak{p}_1$ and $\mathfrak{p}_2$ to obtain a
\regularpath sequence $\mathfrak{p}$ which fulfils properties i)-ii) of the lemma and that
concludes the proof.
\end{proof}

\begin{lemma}
\label{lemCompleteStairWayProcessPhase}
 Let $n\geqslant 2$, $X,Y \subseteq \overline{n}$,
$\mathcal{A}$ a WOR protocol with safe-consensus objects, $S$ a state that is reachable in
$\mathcal{A}$ in some round $r\geqslant 0$ and $b_j$ a box representing 
a safe-consensus object invoked by some processes in round $r+1$ such that $b_j\in 
\setofboxes{Q_1} = \setofboxes{Q_2}$, where $Q_1 = S\cdot \xschedule{X}$ and $Q_2 = S\cdot 
\xschedule{Y_{(j)} \cup (X - X_{(j)})}$. Assume also that if $b_j=\overline{n}$, $\scvalof{b_j}{Q_1} 
= \scvalof{b_j}{Q_2}$. Then the states $Q_1$ and
$Q_2$ are connected in round $r+1$ with a \regularpath path of states $\mathfrak{p}$
that satisfies the following properties:
\begin{itemize}
 \item [a)] If $\scvalof{b_j}{Q_1} = \scvalof{b_j}{Q_2}$ then $\deg \mathfrak{p} \geqslant n - 2$.
\item  [b)] If $\scvalof{b_j}{Q_1} \neq \scvalof{b_j}{Q_2}$ then $\deg \mathfrak{p}  \geqslant n -
\lvert b_j \rvert$ and $\overline{n} - b_j \in \isets{\mathfrak{p}}$.
\end{itemize}
\end{lemma}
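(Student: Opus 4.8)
The plan is to reduce the claim to the two ``staircase'' results already established, Lemma~\ref{lemStairWayProcessPhase1} and Lemma~\ref{lemStairWayProcessPhase2}, by threading two ladder states between $Q_1$ and $Q_2$. First I would apply Lemma~\ref{lemStairWayProcessPhase1} to $Q_1=S\cdot\xschedule{X}$ with step $b_j$, getting a ladder state $L_1=S\cdot\xschedule{C_1,\ldots,C_u,X_{(j)}}$ for $X$ with step $b_j$ (so $\bigcup_i C_i=X\setminus X_{(j)}$, each $|C_i|\le 2$, $(\bigcup_i C_i)\cap b_j=\varnothing$) together with a path $\mathfrak{p}_1\colon Q_1\linkstates{}\cdots\linkstates{}L_1$ with $\deg\mathfrak{p}_1\ge n-2$; by the construction inside that proof every box keeps its safe-consensus value along $\mathfrak{p}_1$, so $\scvalof{b_j}{L_1}=\scvalof{b_j}{Q_1}$. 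Next, since the staircase decomposition is built from the set $X\setminus X_{(j)}$ and the invocation specification only, and $\setofboxes{Q_1}=\setofboxes{Q_2}$ by Lemma~\ref{lemWORsamesafeconfigurationAnySuccessor}, I would repeat the construction of Lemma~\ref{lemStairWayProcessPhase1} starting from $Q_2=S\cdot\xschedule{Y_{(j)}\cup(X-X_{(j)})}$, with step $b_j$ and the same decomposition $C_1,\ldots,C_u$, obtaining the ladder state $L_2=S\cdot\xschedule{C_1,\ldots,C_u,Y_{(j)}}$ for $(X-X_{(j)})\cup Y_{(j)}$ with step $b_j$ (the ladder conditions are immediate from $Y_{(j)}=Y\cap b_j\subseteq b_j$, $(\bigcup_i C_i)\cap b_j=\varnothing$ and $\bigcup_i C_i=X-X_{(j)}$) and a path $\mathfrak{p}_2\colon Q_2\linkstates{}\cdots\linkstates{}L_2$ with $\deg\mathfrak{p}_2\ge n-2$ and $\scvalof{b_j}{L_2}=\scvalof{b_j}{Q_2}$. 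Moreover, any box $b_i$ with $i\neq j$ is disjoint from $b_j$ and is therefore accessed by exactly the same subsets of processes, and in the same order, in $L_1$ and in $L_2$, so I may pick the safe-consensus value of each such $b_i$ in $L_2$ to coincide with its value in $L_1$.

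I would then apply Lemma~\ref{lemStairWayProcessPhase2} to $L_1$ and $L_2$: the hypothesis $\setofboxes{L_1}=\setofboxes{L_2}\ni b_j$ holds by Lemma~\ref{lemWORsamesafeconfigurationAnySuccessor}, and in the degenerate case $b_j=\overline{n}$ one has $X_{(j)}=X$, $Y_{(j)}=Y$ and $u=0$, so $L_1=Q_1$, $L_2=Q_2$, and the remaining hypothesis $\scvalof{b_j}{L_1}=\scvalof{b_j}{L_2}$ is exactly the standing assumption. This gives a \regularpath path $\mathfrak{q}\colon L_1\linkstates{}\cdots\linkstates{}L_2$ with $\deg\mathfrak{q}\ge n-2$ when $\scvalof{b_j}{L_1}=\scvalof{b_j}{L_2}$, and with $\deg\mathfrak{q}\ge n-|b_j|$ and $\overline{n}-b_j\in\isets{\mathfrak{q}}$ otherwise.

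Finally I would concatenate $\mathfrak{p}_1$, $\mathfrak{q}$ and the reverse of $\mathfrak{p}_2$ into a path $\mathfrak{p}\colon Q_1\linkstates{}\cdots\linkstates{}Q_2$; the junctions at $L_1$ and $L_2$ match because the non-step boxes were arranged to carry identical safe-consensus values. Every state of $\mathfrak{p}$ is a one-round successor of $S$, so by Lemma~\ref{lemWORsamesafeconfigurationAnySuccessor} they all share a global invocation specification and $\mathfrak{p}$ is a \regularpath path. Since $\scvalof{b_j}{L_1}=\scvalof{b_j}{Q_1}$ and $\scvalof{b_j}{L_2}=\scvalof{b_j}{Q_2}$, the two cases transfer verbatim: if $\scvalof{b_j}{Q_1}=\scvalof{b_j}{Q_2}$ then all three pieces have degree $\ge n-2$, giving a); if $\scvalof{b_j}{Q_1}\neq\scvalof{b_j}{Q_2}$ then $b_j$ is not a trivial box (a trivial box has a forced safe-consensus value), so $|b_j|\ge 2$, hence $n-2\ge n-|b_j|$ and all three pieces have degree $\ge n-|b_j|$, while $\overline{n}-b_j\in\isets{\mathfrak{q}}\subseteq\isets{\mathfrak{p}}$, giving b).

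The step I expect to be the real obstacle is not any single inequality but the bookkeeping that makes the three pieces chain up: one must guarantee that the ladder reached from $Q_1$ and the ladder reached from $Q_2$ use the identical staircase decomposition and assign identical safe-consensus values to every box other than $b_j$, which forces the argument to lean on the internal constructions of Lemmas~\ref{lemStairWayProcessPhase1} and~\ref{lemStairWayProcessPhase2} and on disjointness of distinct boxes, rather than on their bare statements. The only other delicate point, the case $b_j=\overline{n}$, is precisely what the extra hypothesis of the lemma is there to handle.
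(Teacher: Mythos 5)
Your proof is correct and follows essentially the same route as the paper: apply Lemma~\ref{lemStairWayProcessPhase1} to descend from $Q_1$ to a ladder state for $X$ with step $b_j$, apply Lemma~\ref{lemStairWayProcessPhase2} to exchange $X_{(j)}$ for $Y_{(j)}$, and apply Lemma~\ref{lemStairWayProcessPhase1} again to ascend to $Q_2$, concatenating the three \regularpath pieces. The extra bookkeeping you flag (matching the staircase decomposition on both sides, matching safe-consensus values for boxes other than $b_j$, the degenerate case $b_j=\overline{n}$, and the observation that a differing $\scvalof{b_j}{\cdot}$ forces $\lvert b_j\rvert\geqslant 2$) is left implicit in the paper but is correctly supplied here.
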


\begin{proof}
 By Lemma \ref{lemStairWayProcessPhase1}, the state $Q_1$ can be connected with a
state of the form $Q_{X_{(j)}}=S\cdot \xschedule{B_{1},\ldots,B_{s},X_{(j)}}$ with a \regularpath
path $\mathfrak{q}_1$ such that $\deg \mathfrak{q}_1 \geqslant n - 2$. Using Lemma
\ref{lemStairWayProcessPhase2}, $Q_{X_{(j)}}$ can be connected with $Q_{Y_{(j)}}=S\cdot
\xschedule{B_{1},\ldots,B_{s},Y_{(j)}}$ by means of a \regularpath sequence $\mathfrak{q}_2\colon
Q_{X_{(j)}}\linkstates{} \cdots \linkstates{} Q_{Y_{(j)}}$ such that $\mathfrak{q}_2$ satisfies
properties i) and ii) of that Lemma. And we can apply Lemma \ref{lemStairWayProcessPhase1} to 
connect $Q_{Y_{(j)}}$ with $Q_2$ with the
\regularpath path $\mathfrak{q}_3$ which has indistinguishability degree no less that $n - 2$. 
Therefore the sequence of connected states built by first gluing together the sequences
$\mathfrak{q}_1$ and $\mathfrak{q}_2$, followed by $\mathfrak{q}_3$, is a \regularpath sequence
$\mathfrak{q}$ such that the requirements  a)-b) are satisfied.
\end{proof}

With the three previous lemmas at hand, we are almost ready to state and prove Theorem 
\ref{lemlocalconnectXYGeneric}, only one more definition is necessary. 
Let $Q_1,Q_2$ be two reachable 
states in round $r$ of an iterated protocol $\mathcal{A}$ for $n$ processes with safe-consensus 
objects. 
The set
$\mathfrak{D}_\mathcal{A}^r(Q_1,Q_2)$ is defined as
\begin{equation*}
 \mathfrak{D}_\mathcal{A}^r(Q_1,Q_2) = \{ b\in \Gamma_\mathcal{A}(n) \mid b \in \setofboxes{Q_1}
\cap \setofboxes{Q_2} 
\text{ and } 
\scvalof{b}{Q_1} \neq \scvalof{b}{Q_2} \}.
\end{equation*}

In words, $\mathfrak{D}_\mathcal{A}^r(Q_1,Q_2)$ represents the safe-consensus shared 
objects that the same subset of processes invoked in both states $Q_1$ and $Q_2$, 
such that the shared objects returned different output values in each state. 
If there is no confusion about which round number and protocol we refer to, we will
write $\mathfrak{D}_\mathcal{A}^r(Q_1,Q_2)$ as $\mathfrak{D}(Q_1,Q_2)$.

\begin{theorem}\label{lemlocalconnectXYGeneric}
 Let $n\geqslant 2$ and $X,Y \subseteq \overline{n}$,
 $\mathcal{A}$ a WOR protocol with safe-consensus objects, $S$ a reachable state of
$\mathcal{A}$ in some round $r\geqslant 0$ and let $Q_1=S\cdot \xschedule{X}$ and $Q_2=S\cdot
\xschedule{Y}$ be such that $\overline{n} \notin \mathfrak{D}(Q_1,Q_2)$. Then
$Q_1$ and $Q_2$ are connected in round $r+1$ with a \regularpath path of states $\mathfrak{p}$ such
that
\begin{itemize}
\item[] (A) If $\mathfrak{D}(Q_1,Q_2)=\varnothing$, then $\deg \mathfrak{p}
\geqslant n - 2$.
\item[] (B) If the set $\mathfrak{D}(Q_1,Q_2)$ is not empty, then
\begin{enumerate}
\item $\deg \mathfrak{p} \geqslant \min \{ n -\lvert b \rvert \}_{b\in
\mathfrak{D}(Q_1,Q_2)}$;
\item for every $Z\in \isets{\mathfrak{p}}$ with $\lvert Z \rvert < n - 2$, there exists
 $b\in \mathfrak{D}(Q_1,$ $Q_2)$ such that $Z=\overline{n}- b$.
\end{enumerate}
\end{itemize}
\end{theorem}

\begin{proof}
By Lemma \ref{lemWORsamesafeconfigurationAnySuccessor}, we
have the following equation,
\[\mathfrak{D}(P_1,P_2) = \{
b\in \Gamma_\mathcal{A}(n) \mid b \in \mathbf{Inv}_{S}
\text{ and } \scvalof{b}{P_1} \neq \scvalof{b}{P_2}\},\] 
where $P_l$ is a one-round successor state of $S$ and $\mathbf{Inv}_{S}=\setofboxes{P_l}$,
$l=1,2$. Let $\mathbf{Inv}_{S}=\{ b_1, \ldots, b_q \}$. By Lemma 
\ref{lemCompleteStairWayProcessPhase}, we can connect the state $Q_1$ with $R_1=S\cdot
\xschedule{Y_{(1)} \cup (X - X_{(1)})}$ using a \regularpath path $\mathfrak{p}_1\colon Q_1
\linkstates{} \cdots \linkstates{} R_1$ such that
\begin{itemize}
 \item[] $(A_1)$ If $b_1\notin \mathfrak{D}(Q_1,R_1)$, then $\deg \mathfrak{p}_1
\geqslant n - 2$.
\item[] $(B_1)$ If $b_1\in \mathfrak{D}(Q_1,R_1)$, then $\deg \mathfrak{p}_1 
\geqslant n - \lvert b_1 \rvert$ and $\overline{n} - b_1 \in \isets{\mathfrak{p}_1}$.
\end{itemize}
Assume
there is a set $Z\in \isets{\mathfrak{p}_1}$ with size strictly less that $n - 2$.
Then it must be true that $b_1\in \mathfrak{D}(Q_1,R_1)$, because if $b_1\notin
\mathfrak{D}(Q_1,R_1)$, by $(A_1)$, $\lvert Z \rvert \geqslant n - 2$ and this
is impossible. Thus the conclusion of property $(B_1)$ holds for $\mathfrak{p}_1$, which means that
$\overline{n} - b_1 \in \isets{\mathfrak{p}_1}$. Examining the proof\footnote{
  In the proof of Lemma \ref{lemCompleteStairWayProcessPhase}, $\mathfrak{p}_1$ is build using 
  three subpaths $\mathfrak{q}_1,\mathfrak{q}_2$ y $\mathfrak{q}_3$. The paths $\mathfrak{q}_1$ 
  and $\mathfrak{q}_3$ have indistinguishability degree at least $n - 2$ (Lemma \ref{lemStairWayProcessPhase1}).
  The path $\mathfrak{q}_2$ is build using Lemma \ref{lemStairWayProcessPhase2} and we can check that 
  every set  $X \in \states{\mathfrak{q}_2}$ satisfies $| X| \geqslant n - 2$, except at most, the set 
  $\overline{n} - U$ in Equation \eqref{eqlemlocalconnectStairwayP2SubsequenceBETTER2}, since $U$ satisfies 
  Equation \eqref{eqlemlocalconnectStairwayP2SubsequenceBETTER2_Prorperties_U}.
}
 of Lemma
\ref{lemCompleteStairWayProcessPhase} we can convince ourselves that every $W\in
\isets{\mathfrak{p}_1}$ such that $W\neq \overline{n} - b_1$ has cardinality at least $n - 2$, 
therefore, $Z=\overline{n} - b_1$.

Now, considering that\footnote{Remember that $Y_{(1)} - X_{(2)}=(Y \cap b_1) - (X \cap b_2)= Y \cap b_1= Y_{(1)}$, 
since $b_1 \cap b_2 = \varnothing$.} 
$Y_{(2)} \cup ((Y_{(1)} \cup (X - X_{(1)})) - X_{(2)}) = 
Y_{(2)} \cup ((Y_{(1)} - X_{(2)}) \cup ((X - X_{(1)})- X_{(2)})) = 
Y_{(1)} \cup Y_{(2)} \cup (X - (X_{(1)} \cup X_{(2)}))$, 
we can apply Lemma \ref{lemCompleteStairWayProcessPhase} to the states $R_1$ and $R_2=S\cdot
\xschedule{Y_{(1)} \cup Y_{(2)} \cup (X - (X_{(1)} \cup X_{(2)}))}$ to find a \regularpath sequence
$\mathfrak{p}_2 \colon R_1 \linkstates{} \cdots \linkstates{} R_2$ such that $\mathfrak{p}_2$ and
$b_2$ enjoy the same properties that $\mathfrak{p}_1$ and $b_1$ have. We can combine the paths
$\mathfrak{p}_1$ and $\mathfrak{p}_2$ to obtain a \regularpath sequence $\mathfrak{p}_{12}\colon Q_1
\linkstates{} \cdots \linkstates{} R_2$ from $Q_1$ to $R_2$ satisfying the properties
\begin{itemize}
 \item[] $(A_2)$ If $\{ b_1, b_2 \} \cap \mathfrak{D}(Q_1,R_2) = \varnothing$,
then $\deg \mathfrak{p}_{12} \geqslant n - 2$.
\item[] $(B_2)$ If $\{ b_1, b_2 \} \cap \mathfrak{D}(Q_1,R_2)$ is not empty,
then 
\begin{itemize}
\item[] 1. $\deg \mathfrak{p}_{12} \geqslant \min \{ n -\lvert b \rvert \}_{b\in \{ b_1, b_2 \} \cap
\mathfrak{D}(Q_1,R_2)}$;
\item[] 2. for every $Z\in \isets{\mathfrak{p}_{12}}$ with $\lvert Z \rvert < n - 2$, 
there exists $b\in \{ b_1, b_2 \} \cap \mathfrak{D}(Q_1,$ $R_2)$ 
such that $Z=\overline{n}-b$.
\end{itemize}
\end{itemize}
We can repeat this process for all $s\in \{1, \ldots, q \}$. In general, if $R_s=S \cdot
\xschedule{(\bigcup_{i}^s Y_{(i)}) \cup (X - (\bigcup_i^s X_{(i)})) }$, we can construct a
\regularpath path 
\[ \mathfrak{p}_{1s}\colon Q_1 \linkstates{} \cdots \linkstates{} R_s, \]
with the properties
\begin{itemize}
 \item[] $(A_s)$ If $\{ b_1, \ldots, b_s \} \cap \mathfrak{D}(Q_1,R_s) =
\varnothing$, then $\deg \mathfrak{p}_{1s} \geqslant n - 2$.
\item[] $(B_s)$ If $\{ b_1, \ldots, b_s \} \cap \mathfrak{D}(Q_1,R_s)$ is not
empty, then 
\begin{itemize}
\item[] 1. $\deg \mathfrak{p}_{1s} \geqslant \min \{ n -\lvert b \rvert \}_{b\in \{ b_1, \ldots, b_s \}
\cap \mathfrak{D}(Q_1,R_s)}$;
\item[] 2. for every $Z\in \isets{\mathfrak{p}_{1s}}$ with $\lvert Z \rvert < n - 2$, there exists an
unique $b\in \{ b_1,\ldots,$ $b_s \} \cap \mathfrak{D}(Q_1,R_s)$ such that 
$Z=\overline{n}- b$.
\end{itemize}
\end{itemize}
As $R_q=S \cdot \xschedule{(\bigcup_{i}^q Y_{(i)}) \cup (X - (\bigcup_i^q X_{(i)})) }=S\cdot
\xschedule{Y}=Q_2$, the sequence $\mathfrak{p}_{1q}$ is the desired \regularpath path from $Q_1$ to
$Q_2$, fulfilling conditions $(A)$ and $(B)$ (because $\{ b_1, \ldots, b_q \} \cap
\mathfrak{D}(Q_1,R_q) = \mathbf{Inv}_{S} \cap
\mathfrak{D}(Q_1,Q_2) = \mathfrak{D}(Q_1,Q_2)$). The result
follows.
\end{proof}

\subsection{The main structural results}
\label{secStructuralResultsngeq3}

In this section, we prove Lemma \ref{lemkboxesleqnmkconnFULLWOR}. To 
do this, we need one more structural result, Lemma 
\ref{lemSequenceRoundRSequenceRoundRp1geqR} and our main result 
about Johnson graphs: Theorem \ref{NoConnUmNoConnzetaUm}. We first 
introduce some combinatorial definitions before stating this theorem. 
Remember from Section \ref{secSharedObjsCombinatorialSets} that 
for $1\leqslant m \leqslant n$, 
$V_{n,m}=\{ c\subseteq \overline{n} \mid \left| c \right| = m \}$
and let $U\subseteq V_{n,m}$. We 
define the set $\zeta(U)$ as 
\begin{equation*}
\zeta(U)= \{ c\cup d \mid c,d \in U \text{ and }  \lvert c \cap d \rvert = m - 1 \}. 
\end{equation*}
Notice that each $f\in \zeta(U)$ has size $m+1$, 
thus $\zeta(U)\subseteq V_{n,m+1}$. For any
$v=0,\ldots,n-m$, the \emph{iterated $\zeta$-operator} $\zeta^v$ is given by
\begin{equation*}
 \zeta^v(U)=\begin{cases}
         U & \text{if } v=0, \\
	 \zeta(\zeta^{v - 1}(U)) & \text{otherwise}.
        \end{cases}
\end{equation*}
Since $U\subseteq V_{n,m}$, we can check that $\zeta^v(U)\subseteq V_{n,m+v}$. 

The intuitive interpretation of $U,\zeta$ and the iterated 
$\zeta$-operator from the point of view of WOR consensus protocols 
is the 
following: $U$ represents a set of possible intermediate 
agreements between some subsets of processes of size $m$ and $\zeta$ is 
some kind of method the processes use to extend the partial 
agreements specified by $U$ to new agreements between sets of processes 
of size $m+1$, but these new agreements can be made only if the 
required agreements of $m$ processes exist inside $U$. The iterated 
$\zeta$-operator is just a way to extend $\zeta$ to produce larger 
agreements, starting from agreements between sets of processes of size 
$m$. We are ready to state the combinatorial result needed to prove 
Lemma \ref{lemkboxesleqnmkconnFULLWOR}. 

\begin{theorem}\label{NoConnUmNoConnzetaUm}
 Let $U\subset V_{n,m}$ such that $\lvert U \rvert \leqslant n - m$. 
 Then $\zeta^{n - m}(U) = \varnothing$.
\end{theorem}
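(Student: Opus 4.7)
The plan is to derive Theorem \ref{NoConnUmNoConnzetaUm} directly from Lemma \ref{UnionConnCompInequalityGeneralized} applied at $s = n-m$, by a one-shot counting contradiction.

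First I would observe a type-level constraint: since $\zeta$ sends $V_{n,m}$ into $V_{n,m+1}$, an easy induction on $v$ (following the definition in \eqref{eqDefIteratedZetaU}) gives $\zeta^{n-m}(U) \subseteq V_{n,n}$, and $V_{n,n}$ has the single element $\overline{n}$. So either $\zeta^{n-m}(U) = \varnothing$, which is what I want, or $\zeta^{n-m}(U) = \{\overline{n}\}$. My goal is to rule out the second case.

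Suppose for contradiction that $\overline{n} \in \zeta^{n-m}(U)$. Then in the graph $G^{n-m} = G[\zeta^{n-m}(U)]$ the singleton $\{\overline{n}\}$ is a connected component $V$. Invoking property \textbf{H1} of Lemma \ref{UnionConnCompInequalityGeneralized} with $s = n-m$, there is a set $\mathcal{O}$ of connected components of $G[U]$ satisfying
\begin{equation*}
n \;=\; \bigl|\bigcup V\bigr| \;\leqslant\; m - 1 + \sum_{Z \in \mathcal{O}} |Z|.
\end{equation*}
Since the elements of $\mathcal{O}$ are pairwise disjoint subsets of $U$, the sum on the right is at most $|U| \leqslant n - m$, so I obtain $n \leqslant m - 1 + (n - m) = n - 1$, which is the desired contradiction.

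The only real ingredient is the combinatorial bookkeeping already packaged in Lemma \ref{UnionConnCompInequalityGeneralized}; once the type-level bound $\zeta^{n-m}(U) \subseteq \{\overline{n}\}$ is noted, there is no further obstacle. The reason the hypothesis $|U| \leqslant n - m$ is tight here is precisely that it controls the sum $\sum_{Z \in \mathcal{O}} |Z|$ through the disjointness of connected components, which is the content of property \textbf{H2} that makes the single bound for all of $\mathcal{U}$ work.
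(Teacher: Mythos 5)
Your proof is correct and follows essentially the same route as the paper: both argue by contradiction, note that a nonempty $\zeta^{n-m}(U)$ must be $\{\overline{n}\}$, apply property \textbf{H1} of Lemma~\ref{UnionConnCompInequalityGeneralized} at $s = n-m$, and then derive $n \leqslant m - 1 + |U| \leqslant n - 1$. The only minor imprecision is your closing remark attributing the bound $\sum_{Z\in\mathcal{O}}|Z|\leqslant |U|$ to \textbf{H2}: in fact it just uses that connected components of $G[U]$ are pairwise disjoint subsets of $U$, while \textbf{H2} concerns the disjointness of the $\mathcal{O}$-sets for distinct components of $G^s$ and is not needed in this final application.
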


Notice that 
$\zeta^{n - m}(U)\subseteq V_{n,m+(n-m)}=V_{n,n}=\{ \overline{n} \}$. 
Thus, the intuitive meaning of Theorem \ref{NoConnUmNoConnzetaUm} (a\-gain, from the point of view of WOR 
consensus protocols) is that if there are not enough agreements 
$(\leqslant n - m)$ of subsets of processes of size $m$, it is 
impossible to reach consensus among $n$ processes, represented by the set $\overline{n}$. 
See Appendix \ref{appendixProofJohnsonGraphs} for more details and the full proof of Theorem \ref{NoConnUmNoConnzetaUm}.

With respect to the proofs of Lemmas \ref{lemSequenceRoundRSequenceRoundRp1geqR} and 
\ref{lemkboxesleqnmkconnFULLWOR}, we need one more definition regarding 
paths of connected states of a WOR protocol and the boxes representing safe-consensus 
shared objects. For a given path $\mathfrak{s}$ 
of connected states of an iterated protocol $\mathcal{A}$, consider the set
\[ \beta_\mathcal{A}(\mathfrak{s}) = \{ b \in \Gamma_\mathcal{A}(n) \mid (\exists X,Y\in
\isets{\mathfrak{s}}) (X \neq Y \text{ and } \lvert X \cap b \rvert = 1 \text{ and } \lvert Y \cap b \rvert = 1) \}, \]
and if $m\in \overline{n}$, let $\beta_\mathcal{A}(\mathfrak{s}; m)=\beta_\mathcal{A}(\mathfrak{s})
\cap \Gamma_\mathcal{A}(n,m)$. Roughly speaking, the set 
$\beta_\mathcal{A}(\mathfrak{s})$ captures the safe-consensus shared 
objects the processes can invoke to decrease the degree of 
indistinguishability of paths composed of states which are succesor 
states of the elements of $\states{\mathfrak{s}}$. In other words, if 
the path $\mathfrak{s}$ is composed of reachable states in $\mathcal{A}$ in round $r$, 
then the boxes (safe-consensus objects) in 
$\beta_\mathcal{A}(\mathfrak{s})$ can help the processes diminish the 
degree of indistinguishability of any posible path $\mathfrak{q}$ such 
that each state in $\states{\mathfrak{q}}$ is a successor state of some 
member of $\states{\mathfrak{s}}$. 
 If there is no confusion about which protocol we refer to, we write 
 $\beta(\mathfrak{s})$ and $\beta(\mathfrak{s}; m)$ instead of $\beta_\mathcal{A}(\mathfrak{s})$  
 and $\beta_\mathcal{A}(\mathfrak{s}; m)$ respectively.

\begin{lemma}\label{lemSequenceRoundRSequenceRoundRp1geqR}
 Let $n\geqslant 3$ and $1 \leqslant v \leqslant s \leqslant n - 2$ be fixed. Suppose that
$\mathcal{A}$ is a WOR protocol with safe-consensus objects and that there exists a
sequence
\[ \mathfrak{s} \colon S_0\linkstates{X_1} \cdots \linkstates{X_q} S_q\quad (q\geqslant 1), \]
of connected states of round $r\geqslant 0$ such that 
\begin{itemize}
 \item[] $\Lambda_1$ $\deg \mathfrak{s} \geqslant v$. 
 \item[] $\Lambda_2$ For every $X_i$ with $v \leqslant \lvert X_i \rvert < s$, there exists $b_i\in
\Gamma_\mathcal{A}(n,n - \lvert X_i \rvert)$ such that $X_i=\overline{n} - b_i$
 \item[] $\Lambda_3$ $\overline{n} \notin \beta(\mathfrak{s})$.
\end{itemize}
Then in round $r+1$ there exists a sequence
\[\mathfrak{q} \colon Q_0\linkstates{} \cdots \linkstates{} Q_u, \]
of connected successor states of all the $S_j$, such that the following statements hold:
\begin{itemize}
\item[] $\Psi_1$ If $\beta(\mathfrak{s}; n - v + 1)=\varnothing$, then $\deg
\mathfrak{q} \geqslant v$.
\item[] $\Psi_2$ If $\beta(\mathfrak{s}; n - v + 1) \neq \varnothing$, then $\deg
\mathfrak{q} \geqslant v - 1$.
\item[] $\Psi_3$ For every $Z\in \isets{\mathfrak{q}}$ with $\lvert Z \rvert = v - 1$, there
exist $X,X^\prime\in \isets{\mathfrak{s}}$ and 
$b\in \beta(\mathfrak{s}; n - v
+ 1)$ such that $Z=\overline{n} - b=X \cap X^\prime$  and $b=c_{1}\cup c_{2}$, $c_{k}\in
\Gamma_\mathcal{A}(n,n - v)$.
\item[] $\Psi_4$ For every $Z\in \isets{\mathfrak{q}}$ with $v \leqslant \lvert Z \rvert < s$, there
exists 
$b\in \Gamma_\mathcal{A}(n,n - \lvert Z \rvert)$ such that $Z=\overline{n} - b$.
\item[] $\Psi_5$ $\beta(\mathfrak{q}; n - l + 1)\subseteq
\zeta(\beta(\mathfrak{s}; n - l))$ for $v \leqslant l < s$.
\end{itemize}
\end{lemma}

\begin{proof}
In order to find the path $\mathfrak{q}$ that has the properties $\Psi_1$-$\Psi_5$, first we need to
define a set of states of round $r+1$, called
$\Phi(\mathfrak{s})$, which we will use to construct the path $\mathfrak{q}$. The
key ingredient of $\Phi(\mathfrak{s})$ is the safe-consensus values of the
boxes used in each member of $\Phi(\mathfrak{s})$. Define the set of states
$\Phi(\mathfrak{s})$ as
\[
 \Phi(\mathfrak{s}) = \{ R \mid R=S \cdot \xschedule{X}, (S, X) \in
\states{\mathfrak{s}} \times \isets{\mathfrak{s}} \}.
\]
 Each element of $\Phi(\mathfrak{s})$ is a state in round $r+1$ 
which is obtained when all the processes with ids in some set $X\in \isets{\mathfrak{s}}$ execute
concurrently the operations of $\mathcal{A}$, followed by all processes with ids in $\overline{n} -
X$. The safe-consensus values of each box for the states of
$\Phi(\mathfrak{s})$
in round $r+1$ are defined by using the following rules: Let $b$ be any box such that $b\notin
\beta(\mathfrak{s})$.
\begin{itemize}
\item If $\lvert X \cap b\rvert \neq 1$ for every $X\in \isets{\mathfrak{s}}$, then 
we 
choose any value $j$ such that $j$ is the safe-consensus value of
$b$ in every state $R\in \Phi(\mathfrak{s})$ 
with 
$b\in \setofboxes{R}$.
\item If there exists exactly one set $X\in \isets{\mathfrak{s}}$ such that $X \cap b = \{ x \}$,
then $b$ has $j_x$ as its safe-consensus value in every state
$Q\in \Phi(\mathfrak{s})$ with $b\in \setofboxes{Q}$,
where $j_x$ is the value proposed by process $p_x$ when invoking the safe-consensus object represented by $b$.
\end{itemize}
Now we establish rules to define the safe-consensus values of every element of the set
$\beta(\mathfrak{s})$, using the order on the set $\isets{\mathfrak{s}}$ induced by the
path $\mathfrak{s}$, when we traverse $\mathfrak{s}$ from $S_0$ to $S_q$. 
That is, 
$\isets{\mathfrak{s}}$ is ordered as
\begin{equation}\label{eqOrderSetPhiArp1s}
 X_1,\ldots,X_q.
\end{equation}
 For each $b\in \beta(\mathfrak{s})$, let that $X_i,X_{i+ z_1}, \ldots,X_{i+
z_k}$, ($1 \leqslant i < i+z_1 < i+z_2 <\cdots < i+z_{k} \leqslant q; z_j > 0$ and $k \geqslant
1$) be the (ordered) subset of $\isets{\mathfrak{s}}$ of all $X\in \isets{\mathfrak{s}}$ with the
property $\lvert X \cap b \rvert = 1$. Take the set $X_i$. If $x_i\in X_i \cap b$, then 
we make the value proposed by process $p_{x_i}$ (when invoking the 
shared object represented by $b$) 
the safe-consensus value of $b$ for all the elements of
$\Phi(\mathfrak{s})$ of
the form 
\[P_j=S\cdot \xschedule{X_j}, \qquad 1 \leqslant j \leqslant i + z_1 - 1,\] 
where $b\in \setofboxes{P_j}$.
 Next, in all states $R_u\in \Phi(\mathfrak{s})$ such that $R_u=S\cdot
\xschedule{X_u}$ with $b\in \setofboxes{R_u}$ and $i + z_1 \leqslant u \leqslant i + z_2 - 1$, $b$
has safe-consensus value equal to the input value of the process with 
id 
$x_{i+z_1}\in X_{i+z_1}\cap b$ when invoking the safe-consensus object 
represented by $b$. 
In general, in all states
$T_v\in
\Phi(\mathfrak{s})$ of the form 
\[T_v=S\cdot \xschedule{X_v}, \qquad i + z_l \leqslant v \leqslant i + z_{l+1} - 1\text{ and }1\leqslant l \leqslant k - 1,\] 
where $b\in \setofboxes{T_v}$, 
$b$ has safe-consensus value equal to 
the input value of the process with id $x_{i+z_{l}}\in X_{i+z_l}\cap b$ 
feed to the safe-consensus object represented by $b$.
Finally, in each
state
$L_w=S\cdot \xschedule{X_w} \in \Phi(\mathfrak{s})$ with $b\in \setofboxes{L_w}$
and $i+z_{k} \leqslant w \leqslant q$, $b$ has safe-consensus value equal to the input value that the process with id 
$x_{i+z_{k}}\in X_{i+z_k}\cap b$ inputs to the safe-consensus 
shared object 
represented by $b$.

Using the  Safe-Validity property of the safe-consensus task, it is an easy 
routine task to show that we can find states of $\mathcal{A}$ in round $r+1$ which have the form of
the states given in $\Phi(\mathfrak{s})$ and with the desired safe-consensus
values for every box.

We are ready to build the sequence $\mathfrak{q}$ of the Lemma\footnote{Notice that the order
given to $\Phi(\mathfrak{s})$ in equation \ref{eqOrderSetPhiArp1s} is the precise
order in which the states of this set will appear in the path $\mathfrak{q}$.}. We use induction on
$q\geqslant 1$. For the base case when $q=1$, we have that $\mathfrak{s} \colon S_0\linkstates{X_1}
S_1$, with $\lvert X_1 \rvert \geqslant v$. Here we easily build the sequence $S_0 \cdot
\xschedule{X_1} \linkstates{X_1} S_1 \cdot \xschedule{X_1}$ which clearly satisfies properties
$\Psi_1$-$\Psi_5$. Assume that for $1\leqslant w < q$ we have a path 
\[ \mathfrak{q}^\prime \colon Q_0\linkstates{} \cdots \linkstates{} Q_l \quad (l\geqslant 1), \]
Satisfying conditions $\Psi_1$-$\Psi_5$ and such that $Q_0=S_0\cdot \xschedule{X_1},Q_l=S_w \cdot
\xschedule{X_w}$. We now connect the state $Q_l$ with $Q=S_{w+1} \cdot \xschedule{X_{w+1}}$. By
Theorem \ref{lemlocalconnectXYGeneric}, there is a \regularpath sequence 
\[ \mathfrak{v}\colon Q_l \linkstates{} \cdots \linkstates{} Q^\prime \] 
such that $Q^\prime=S_w \cdot \xschedule{X_{w+1}}$ and $\mathfrak{v},Q_l$ and $Q^\prime$ satisfy
conditions (A)-(B) of that theorem.
Clearly  $\mathfrak{D}(Q_l,$ $Q^\prime)\subseteq \beta(\mathfrak{s})$,
because the only boxes used by $\mathcal{A}$ in the states $Q_l$ and $Q^\prime$ with different
safe-consensus value, are the boxes which intersect two different sets $X,X^\prime \in
\isets{\mathfrak{s}}$ in one element. Joining the sequence $\mathfrak{q}^\prime$ with
$\mathfrak{v}$, followed by the small path $\mathfrak{t} \colon Q^\prime \linkstates{X_{w+1}} Q$ 
(such that $\deg \mathfrak{t} \geqslant v$), we obtain a new sequence $\mathfrak{q} \colon Q_0 \linkstates{} \cdots
\linkstates{}
Q$. We now need to show that $\mathfrak{q}$ satisfies properties $\Psi_1$-$\Psi_5$.
\begin{itemize}
 \item[] $\Psi_1$ Notice that every box in $\beta(\mathfrak{s})$ has size no bigger that $n - v +
1$. Suppose that $\beta(\mathfrak{s}; n - v + 1)=\varnothing$. In
particular, this implies that $\mathfrak{D}(Q_l,Q^\prime) \cap
\Gamma_\mathcal{A}(n, n - v + 1)=\varnothing$. If it happens that
$\mathfrak{D}(Q_l,Q^\prime)$ is void, then as condition (A) of Theorem 
\ref{lemlocalconnectXYGeneric} holds for $\mathfrak{v}$, $\deg \mathfrak{v} \geqslant v$. But if
$\mathfrak{D}(Q_l,Q^\prime)\neq \varnothing$ we have that $\lvert b \rvert
\leqslant n - v$ for all $b\in \mathfrak{D}(Q_l,Q^\prime)$, thus by part 1. of
property (B) of Theorem \ref{lemlocalconnectXYGeneric}, 
\[ \deg \mathfrak{v} \geqslant \min \{ n -\lvert b \rvert \}_{b\in
\mathfrak{D}(Q_l,Q^\prime)} \geqslant v. \] 
By the induction, hypothesis $\deg \mathfrak{q}^\prime \geqslant v$. It is also true that 
$\deg \mathfrak{t} \geqslant v$, therefore $\deg \mathfrak{q} \geqslant v$.
\item[] $\Psi_2$ If $\beta(\mathfrak{s}; n - v + 1)\neq \varnothing$, then 
either $\deg \mathfrak{q}^\prime \geqslant v - 1$ (by the induction hypothesis) or $\deg
\mathfrak{v} \geqslant v - 1$. This last assertion is true, because by (B) of Theorem 
\ref{lemlocalconnectXYGeneric}, we have that $\deg \mathfrak{v} \geqslant \min \{ n -\lvert b 
\rvert \}_{b\in \mathfrak{D}(Q_l,Q^\prime)} \geqslant v - 1$. Thus, 
$\deg \mathfrak{q} \geqslant v - 1$.
\item[] $\Psi_3$ We remark first that $\isets{\mathfrak{q}}= \isets{\mathfrak{q}^\prime} \cup
\isets{\mathfrak{v}} \cup \isets{\mathfrak{t}}$. If we are given $Z\in \isets{\mathfrak{q}}$ such
that $\lvert Z \rvert = v -
1$, then $Z\in \isets{\mathfrak{q}^\prime}$ or $Z\in \isets{\mathfrak{v}}$. When $Z\in
\isets{\mathfrak{q}^\prime}$, we use our induction hypothesis to show that $Z=Y \cap
Y^\prime=\overline{n} - d$ for some $Y,Y^\prime\in \isets{\mathfrak{s}}$ and $d\in
\beta(\mathfrak{s}; n - v + 1)$ such that $d=d^\prime \cup d^{\prime\prime}$,
$d^{\prime},d^{\prime\prime} \in \Gamma_\mathcal{A}(n, n - v)$. On the other hand, if $Z\in
\isets{\mathfrak{v}}$, it must be true that $\mathfrak{D}(Q_l,Q^\prime)\neq
\varnothing$ (if not, then $\deg \mathfrak{v} \geqslant n - 2$ by property (A) of Theorem 
\ref{lemlocalconnectXYGeneric} for $\mathfrak{v}$ and this contradicts the size of $Z$). As $v - 1 <
n - 2$, we can apply part 2 of condition (B) of Theorem \ref{lemlocalconnectXYGeneric} to deduce
that $Z=\overline{n} - b$ for a unique $b\in \mathfrak{D}(Q_l,Q^\prime)$. Because
$\lvert Z \rvert = v - 1$, it is true that $\lvert b \rvert = n - v + 1$. Now, $b\in
\mathfrak{D}(Q_l,Q^\prime)\subseteq \beta(\mathfrak{s})$, thus 
there exist $X,X^\prime \in \isets{\mathfrak{s}}$ such that 
\begin{equation}\label{lemSequenceRoundRSequenceRoundRp1geqR_eqPsi3}
  \lvert X \cap b \rvert = 1 \text{ and } \lvert X^\prime \cap b \rvert = 1
\end{equation}
Combining Equation \eqref{lemSequenceRoundRSequenceRoundRp1geqR_eqPsi3} 
with the fact that the size of $b$ is $n - v + 1$, we can check that $\lvert X \rvert =  \lvert X^\prime
\rvert = v$ and $\overline{n} - b = X \cap X^\prime$, that is,
\[ Z=\overline{n} - b=X \cap X^\prime, \]
and finally, we apply $\Lambda_2$ of the sequence $\mathfrak{s}$ to $X,X^\prime$ to find two unique
boxes $c,c^\prime \in \Gamma_\mathcal{A}(n,n - v)$ with $X=\overline{n} - c \text{ and }
X^\prime=\overline{n} - c^\prime$, so that 
\[ \overline{n} - b=X \cap X^\prime = (\overline{n} - c) \cap (\overline{n} - c^\prime) =
\overline{n} - (c \cup c^\prime), \]
then $b=c \cup c^\prime$ and condition $\Psi_3$ is satisfied by $\mathfrak{q}$.
\item[] $\Psi_4$ The sequence $\mathfrak{q}$ fulfils this property because of the induction
hypothesis on $\mathfrak{q}^\prime$, condition $\Lambda_2$ for $\mathfrak{s}$ and (B) of Theorem 
\ref{lemlocalconnectXYGeneric} for $\mathfrak{v}$.
\item[] $\Psi_5$ Let $l\in \{ v,\ldots,s - 1\}$. This property is satisfied by $\mathfrak{q}$ if
$\beta(\mathfrak{q}; n - l + 1)=\varnothing$. Suppose that $c\in
\beta(\mathfrak{q}; n - l + 1)$, there exist $X,Y\in \isets{\mathfrak{q}}$ such that
$\lvert X \cap c \rvert = \lvert Y \cap c \rvert = 1$. Then $\lvert X \rvert = \lvert Y \rvert = l$,
so that we use property $\Psi_3$ (or $\Psi_4$) on $X,Y$ to find two boxes $b_X,b_Y\in
\beta(\mathfrak{s}; n - l)$ such that $X = \overline{n} - b_X \text{ and } Y = 
\overline{n} -
b_Y$. Also, $X \cap Y = \overline{n} - c$, thus $\overline{n} - c = X \cap Y = (\overline{n} - b_X)
\cap (\overline{n} - b_Y) = \overline{n} - (b_X \cup b_Y)$. Therefore $c = b_X \cup b_Y$ and this
says that $c\in \zeta(\beta(\mathfrak{s}; n - l))$. Condition $\Psi_5$ is fulfilled by
$\mathfrak{q}$.
\end{itemize}
We have shown by induction that we can build the path $\mathfrak{q}$ from the sequence
$\mathfrak{s}$, no matter how many states $\mathfrak{s}$ has. This proves the Lemma.
\end{proof}

The following corollary is a weaker version of Lemma \ref{lemSequenceRoundRSequenceRoundRp1geqR}, 
and it can be proven as a consequence of that result.

\begin{corollary}\label{lemSequenceRoundRSequenceRoundRp1geqRSimple}
 Let $n\geqslant 3$ and $1 \leqslant s \leqslant n -2$ be fixed. Suppose that $\mathcal{A}$ is a WOR 
protocol with safe-consensus objects and there exists a sequence 
\[ \mathfrak{s} \colon S_0\linkstates{X_1} \cdots \linkstates{X_r} S_q\quad (q\geqslant 1), \]
of connected states of round $r\geqslant 0$ such that $\deg \mathfrak{s} \geqslant s$ and
$\overline{n}\notin \beta(\mathfrak{s})$. Then in round $r+1$ there exists a sequence
\[\mathfrak{q} \colon Q_0\linkstates{Z_1} \cdots \linkstates{Z_u} Q_u, \]
such that the following statements hold:
\begin{itemize}
\item[] $\Psi_1$ If $\beta(\mathfrak{s}; n - s + 1)=\varnothing$, then $\deg
\mathfrak{q} \geqslant s$.
\item[] $\Psi_2$ If $\beta(\mathfrak{s}; n - s + 1)$ is not empty, then $\deg
\mathfrak{q} \geqslant s - 1$. 
\item[] $\Psi_3$ For every $Z\in \isets{\mathfrak{q}}$ with $\lvert Z \rvert = s - 1$, there
exist $X,X^\prime\in \isets{\mathfrak{s}}$ and a unique $b\in \beta(\mathfrak{s}; n -
s
+ 1)$ such that $Z=\overline{n} - b=X \cap X^\prime$.
\item[] $\Psi_5$ $\beta(\mathfrak{q}; n - s + 2)\subseteq
\zeta(\beta(\mathfrak{s}; n - s + 1))$.
\end{itemize}
\end{corollary}

We have gathered all the required elements to prove one of the key ingredient of the full 
proof of Theorem \ref{lemminimumKboxesFULLWOR}. Lemma 
\ref{lemkboxesleqnmkconnFULLWOR} expresses formally the fact that 
if in a WOR consensus protocol, the processes cannot make enough 
agreements of size $m$ ($\nu_\mathcal{A}(n, m)\leqslant n-m$) then 
they won't be able to decrease the degree of indistinguishability of 
paths of connected states of reachable states in every round 
$r\geqslant 1$ of the protocol.

\begin{lemma}\label{lemkboxesleqnmkconnFULLWOR}
  Assume that $3 \leqslant m \leqslant n$. Suppose that $\mathcal{A}$ is a WOR 
protocol with safe-consensus objects such that $\nu_\mathcal{A}(n, m)\leqslant n-m$.
If $S_r,Q_r$ are two reachable states in $\mathcal{A}$ for some round $r\geqslant 0$, connected 
with
a sequence $\mathfrak{q} \colon S_r\linkstates{} \cdots \linkstates{} Q_r$ of connected states 
such that $\deg \mathfrak{q} \geqslant n - m+1$. 
Then for all $u\geqslant 0$, there exist successor states
$S_{r+u},Q_{r+u}$ of $S_r$ and $Q_r$ respectively, in round $r+u$ of $\mathcal{A}$, such that
$S_{r+u}$ and $Q_{r+u}$ are connected.
\end{lemma}
\begin{proof}
Let $\mathcal{A}$ be a protocol with
the given hypothesis, set $\mathfrak{q}_0=\mathfrak{q}$ and $m=n - s + 1$. Because $3\leqslant
m \leqslant n$, $s \in \{1,\ldots, n - 2 \}$. Using the sequence $\mathfrak{q}_0$
and applying Corollary \ref{lemSequenceRoundRSequenceRoundRp1geqRSimple}, we build a path
$\mathfrak{q}_1\colon S_{r+1}\linkstates{} \cdots \linkstates{} Q_{r+1}$, connecting the successor
states $S_{r+1},Q_{r+1}$ of $S_r$ and $Q_r$ respectively, such that 
\begin{itemize}
\item[] $\Psi_{1,1}$ If $\beta(\mathfrak{q}_0; n - s + 1)=\varnothing$, then $\deg
\mathfrak{q}_1 > s - 1$.
\item[] $\Psi_{1,2}$ If $\beta(\mathfrak{q}_0; n - s + 1)$ is not empty, then $\deg
\mathfrak{q}_1 \geqslant s - 1$.
\item[] $\Psi_{1,3}$ For every $Z\in \isets{\mathfrak{q}_1}$ with $\lvert Z \rvert = s - 1$, there
exist $X,X^\prime\in \isets{\mathfrak{q}_0}$ and a unique $b\in \beta(\mathfrak{q}_0; n
- s + 1)$ such that $Z=\overline{n} - b=X \cap X^\prime$.
\item[] $\Psi_{1,5}$ $\beta(\mathfrak{q}_1; n - s + 2)\subseteq
\zeta(\beta(\mathfrak{q}_0; n - s + 1))$.
\end{itemize}
Starting from $\mathfrak{q}_1$ and using induction together with Lemma
\ref{lemSequenceRoundRSequenceRoundRp1geqR}, we can prove that for each $u\in \{1,\ldots, s - 1 \}$,
there exist successor states $S_{r+u},Q_{r+u}$ of the states $S_{r}$ and $Q_{r}$ respectively, and a
sequence
\[ \mathfrak{q}_u \colon S_{r+u} \linkstates{} \cdots \linkstates{} Q_{r+u},\]
satisfying the properties:
\begin{itemize}
\item [] $\Psi_{u,1}$ If $\beta(\mathfrak{q}_{u - 1}; n - s + u)=\varnothing$, then $\deg
\mathfrak{q}_u > s - u$.
\item [] $\Psi_{u,2}$ If $\beta(\mathfrak{q}_{u - 1}; n - s + u)$ is not empty, then
$\deg \mathfrak{q}_u \geqslant s - u$.
\item [] $\Psi_{u,3}$ For each $Z\in \isets{\mathfrak{q}_u}$ with $\lvert Z \rvert = s - 1$, there
exist $X,X^\prime\in \isets{\mathfrak{q}_{u - 1}}$ and a unique $b\in
\beta(\mathfrak{q}_{u - 1}; $ $n - s + 1)$ such that $Z=\overline{n} - b=X \cap X^\prime$.
\item [] $\Psi_{u,4}$ For each $Z\in \isets{\mathfrak{q}_u}$ with $s - u \leqslant \lvert Z \rvert <
s - 1$, there exist $X,X^\prime \in \isets{\mathfrak{q}_{u - 1}}$ and a unique $b\in
\beta(\mathfrak{q}_{u - 1};n - \lvert Z \rvert)$ such that $Z=\overline{n} - b=X \cap
X^\prime$ and $b=c_{1}\cup c_{2}$, $c_{k}\in \beta(\mathfrak{q}_{u - 1};n - \lvert Z
\rvert - 1)$. 
\item [] $\Psi_{u,5}$ $\beta(\mathfrak{q}_u; n - l + 1)\subseteq
\zeta(\beta(\mathfrak{q}_{u - 1}; n - l))$ for $s - u \leqslant l < s$.
\end{itemize}
When $u=s - 1$, we obtain a sequence $\mathfrak{q}_{s - 1} \colon S_{r+ s - 1} \linkstates{} \cdots
\linkstates{} Q_{r+s - 1}$ that connects the states $S_{r+ s - 1}$ and $Q_{r+ s - 1}$, such that
\begin{itemize}
\item [] $\Psi_{s-1,1}$ If $\beta(\mathfrak{q}_{s - 2}; n - 1)=\varnothing$, then $\deg
\mathfrak{q}_{s - 1} > 1$.
\item [] $\Psi_{s-1,2}$ If $\beta(\mathfrak{q}_{s - 2}; n - 1)$ is not empty, then $\deg
\mathfrak{q}_{s - 1} \geqslant 1$.
\item [] $\Psi_{s-1,3}$ For each $Z\in \isets{\mathfrak{q}_{s - 1}}$ with $\lvert Z \rvert = s -
1$, there exist $X,X^\prime\in \isets{\mathfrak{q}_{s - 2}}$ and a unique $b\in
\beta(\mathfrak{q}_{s - 2}; n - s + 1)$ such that $Z=\overline{n} - b=X \cap X^\prime$.
\item [] $\Psi_{s-1,4}$ For any $Z\in \isets{\mathfrak{q}_{s - 1}}$ with $1 \leqslant \lvert Z
\rvert < s - 1$, there exist $X,X^\prime\in \isets{\mathfrak{q}_{s - 2}}$ and a unique $b\in
\beta(\mathfrak{q}_{s - 2};n - \lvert Z \rvert)$ such that $Z=\overline{n} - b=X \cap
X^\prime$ and $b=c_{1}\cup c_{2}$, $c_{k}\in \beta(\mathfrak{q}_{s - 2};n - \lvert Z
\rvert - 1)$. 
\item [] $\Psi_{s-1,5}$ $\beta(\mathfrak{q}_{s - 1}; n - l + 1)\subseteq
\zeta(\beta(\mathfrak{q}_{s - 2}; n - l))$ for $1\leqslant l < s$.
\end{itemize}
Our final goal is to show that for all $v \geqslant s - 1$, we can connect in each round
$r+v$ of $\mathcal{A}$, successor states of $S_r$ and $Q_r$. We first claim that for any
$w=0,\ldots,s - 1$ and $z\geqslant 0$,
\[ \zeta^z(\beta(\mathfrak{q}_{w}; n - s + 1)) \subseteq \zeta^z(\Gamma_\mathcal{A}(n, n
- s + 1)), \]
(this is true because $\zeta$ preserves $\subseteq$) and combining this fact with the properties
$\Psi_{u,5}$ and induction, we can show that for $1 \leqslant l < s$
\begin{equation}\label{eqbetasetsstationaries}
 \beta(\mathfrak{q}_{s - 1}; n - l + 1) \subseteq \zeta^{s -
l}(\beta(\mathfrak{q}_{l - 1}; n - s + 1)) \subseteq \zeta^{s -
l}(\Gamma_\mathcal{A}(n, n - s + 1)).
\end{equation}
And because $\lvert \Gamma_\mathcal{A}(n, n - s + 1) \rvert = \nu_\mathcal{A}(n, n - s + 1)
\leqslant s - 1$ and $m=n - s + 1$, we use Theorem \ref{NoConnUmNoConnzetaUm} to check that
$\zeta^{s - 1}(\Gamma_\mathcal{A}(n, n - s + 1))=\varnothing$, implying that
\[\beta(\mathfrak{q}_{s - 1}; n) = \varnothing.\]
With all this data, Lemma \ref{lemSequenceRoundRSequenceRoundRp1geqR} and an easy inductive argument 
(starting at the base case $v=s -  1$), we can find for all $v\geqslant s - 1$,
states $S_{r+v},Q_{r+v}$ of round $r+v$ of $\mathcal{A}$, which are successor states of $S_r$ and
$Q_r$ respectively and connected with a sequence $\mathfrak{q}_v$ such that
\begin{itemize}
\item[] $\Psi^\prime_{v,1}$ $\deg \mathfrak{q}_{v} \geqslant 1$.
\item[] $\Psi^\prime_{v,2}$ For every $Z\in \isets{\mathfrak{q}_{v}}$ with $\lvert Z \rvert = s -
1$, there
exist $X,X^\prime\in \isets{\mathfrak{q}_{v - 1}}$ and a unique $b\in
\beta(\mathfrak{q}_{v - 1}; n - s + 1)$ such that $Z=\overline{n} - b=X \cap X^\prime$.
\item[] $\Psi^\prime_{v,3}$ For every $Z\in \isets{\mathfrak{q}_{v}}$ with $1 \leqslant \lvert Z
\rvert < s - 1$, there exist $X,X^\prime\in \isets{\mathfrak{q}_{v - 1}}$ and a unique $b\in
\beta(\mathfrak{q}_{v - 1};n - \lvert Z \rvert)$ such that $Z=\overline{n} - b=X \cap
X^\prime$ and $b=c_{1}\cup c_{2}$, $c_{k}\in \beta(\mathfrak{q}_{v - 1};n - \lvert Z
\rvert - 1)$. 
\item[] $\Psi^\prime_{v,4}$ $\beta(\mathfrak{q}_{v}; n - l + 1) \subseteq \zeta^{s
- l}(\Gamma_\mathcal{A}(n, n - s + 1))$ for $1 \leqslant l < s$.
\end{itemize}
It is precisely these four properties of the path $\mathfrak{q}_{v}$ which allow us to find the
path $\mathfrak{q}_{v+1}$, applying Lemma \ref{lemSequenceRoundRSequenceRoundRp1geqR}, such that
$\mathfrak{q}_{v+1}$ enjoys the same properties. Therefore, starting from round $r$, we can connect
in all rounds of $\mathcal{A}$, successor states of $S_r$ and $Q_r$. The Lemma is
proven.
\end{proof}

\subsection{The case $\nu_\mathcal{A}(n,2) \leqslant n - 2$}
\label{secStructuralResultsneq2}

In Section \ref{secStructuralResultsngeq3}, we proved Lemma 
\ref{lemkboxesleqnmkconnFULLWOR}, a key result to prove Theorem 
\ref{lemminimumKboxesFULLWOR}, but that lemma does not cover the case 
$\nu_\mathcal{A}(n,2) \leqslant n - 2$. In this section we prove 
 Lemma \ref{thm2boxesleqnm2}, which covers the case 
 $\nu_\mathcal{A}(n,2) \leqslant n - 2$ and it is 
 the last ingredient that we need to prove Theorem \ref{lemminimumKboxesFULLWOR}. This can be done easily using 
 Lemma \ref{lemBasicPathn2}, a structural result concerning paths of 
 connected states and the following combinatorial result.

\begin{lemma}\label{lemPartitionn2}
 Let $U\subset V_{n,2}$ such that $\lvert U \rvert \leqslant n - 2$. Then there exists a partition
$\overline{n}=A \cup B$ such that 
\[ (\forall b\in U) (b\subseteq A \text{ or } b \subseteq B). \]
\end{lemma}

The proof of Lemma \ref{lemPartitionn2} can be found in Section \ref{secAppendixFinalProofs} of the Appendix. 
Lemma \ref{lemPartitionn2} describes the combinatorics of Lemma 
\ref{thm2boxesleqnm2} for the case $m=2$, just in the same way Theorem 
\ref{NoConnUmNoConnzetaUm} does the same thing for Lemma 
\ref{lemkboxesleqnmkconnFULLWOR}. Intuitively, when 
$\lvert U \rvert \leqslant n - 2$, there are not enough agreements 
between pairs of processes and this allow us to partition the set of 
processes $\Pi$ into two sets 
\[ \Pi = A \cup B, \]
and this partition can be used to build sequences of connected states 
in every round of an executing WOR protocol (Lemmas 
\ref{lemBasicPathn2} and \ref{thm2boxesleqnm2}). We now proceed to prove
the required results.

\begin{lemma}\label{lemBasicPathn2}
 Let $\mathcal{A}$ be a WOR protocol with safe-consensus objects. Suppose there
exists a partition $\overline{n}=A \cup B$ and a sequence
\[\mathfrak{p} \colon S_0\linkstates{}\cdots\linkstates{} S_l\qquad (l\geqslant 0) \]
of connected states in round $r\geqslant 0$ of $\mathcal{A}$, with the following properties
\begin{itemize}
 \item[] I) $\isets{\mathfrak{p}}=\{ A,B \}$; 
\item[] II) $(\forall b\in \Gamma_\mathcal{A}(n, 2)) (b\subseteq A \text{ or } b \subseteq B)$.
\end{itemize}
Then in round $r+1$ of $\mathcal{A}$ there exists a path
\[\mathfrak{q} \colon Q_0\linkstates{}\cdots\linkstates{} Q_s\qquad (s\geqslant 1)\]
of connected states and the following properties hold:
\begin{itemize}
 \item [a)] Each state $Q_k$ is of the form $Q_k=S_j \cdot \xschedule{X}$, where $X=A \text{ or } 
X=B$;
\item [b)] $\isets{\mathfrak{q}}=\{ A,B \}$.
\end{itemize}
\end{lemma}

\begin{proof}
  The techniques needed to prove this result are similar to those used in the proof of Lemma 
  \ref{lemSequenceRoundRSequenceRoundRp1geqR}.
  We first define the safe-consensus value of every box $b\in \Gamma_\mathcal{A}(n, 2)$, using property
  II) and the safe-Validity property of safe-consensus. After doing that, we apply 
  induction on $l$ to build the path $\mathfrak{q}$ (of succesor states of elements from $\isets{\mathfrak{p}}$)
  satisfying a)-b). We omit the details.
\end{proof}

\begin{lemma}\label{thm2boxesleqnm2}
Let $n\geqslant 2$. If $\mathcal{A}$ is a WOR protocol for $n$ processes using
safe-consensus objects with $\nu_\mathcal{A}(n,2)$ $\leqslant n - 2$ and $S$ is a reachable 
state
in $\mathcal{A}$ for some round $r\geqslant 0$, then there exists a partition of the set 
$\overline{n}=A\cup B$ such that for all $u\geqslant 0$, the states $S \cdot \xschedulesup{u}{A}$
and $S\cdot \xschedulesup{u}{B}$ are connected.
\end{lemma}

\begin{proof}
This proof is analogous to the proof of Lemma \ref{lemkboxesleqnmkconnFULLWOR}. We use Lemma 
\ref{lemPartitionn2} to find the partition of $\overline{n}$ and then we apply 
inductively 
Lemma \ref{lemBasicPathn2}. We omit the details.
\end{proof}

\subsection{The proof of Theorem \ref{lemminimumKboxesFULLWOR}}

Here we give the proof of Theorem \ref{lemminimumKboxesFULLWOR} for any $n \geqslant 2$, thus 
completing all the necessary proofs of the paper.

\paragraph{Proof of Theorem \ref{lemminimumKboxesFULLWOR}} Assume that there exists a protocol 
$\mathcal{A}$ for consensus such that there is some $m$ with
$2\leqslant m \leqslant n$ with $\nu_\mathcal{A}(n,m) \leqslant n - m$. Let $O,U$ be the
initial states in which all processes have as input values 0s and 1s respectively. We now find
successor states of $O$ and $U$ in each round $r\geqslant 0$, which are connected. There are two cases:
\begin{itemize}
 \item [] {\it Case $m=2$.} By Lemma \ref{thm2boxesleqnm2}, there exists a partition of
$\overline{n}=A\cup B$ such that for any state $S$ and any $r\geqslant 0$, $S \cdot
\xschedulesup{r}{A}$ and $S\cdot \xschedulesup{r}{B}$ are connected. Let $OU$ be the initial state
in which all processes with ids in $A$ have as input value 0s and all processes with ids in $B$ 
have as input values 1s. Then for all $r\geqslant 0$ we have that
\[ O\cdot \xschedulesup{r}{A}\linkstates{A} OU\cdot \xschedulesup{r}{A}\quad \text{and} \quad
OU\cdot \xschedulesup{r}{B}\linkstates{B} U\cdot \xschedulesup{r}{B} \]
and by Lemma \ref{thm2boxesleqnm2}, the states $OU\cdot \xschedulesup{r}{A}$ and $OU\cdot 
\xschedulesup{r}{B}$ are connected. 
Thus, for any $r$-round partial execution of $\mathcal{A}$, we can connect the states $O^r=O \cdot
\xschedulesup{r}{A}$ and $U^r=U\cdot \xschedulesup{r}{B}$.
 \item [] {\it Case $3\leqslant m \leqslant n$.} By Lemma \ref{leministates}, we know that any two 
initial states for 
consensus are connected, 
so that we can connect $O$ and $U$ with a sequence 
$\mathfrak{q}$ of initial states of $\mathcal{A}$ and it is not hard to check that $\deg 
\mathfrak{q} \geqslant n - 1 \geqslant n - m + 1$. By Lemma \ref{lemkboxesleqnmkconnFULLWOR}, for 
each round $r\geqslant 0$ of $\mathcal{A}$, there exist successor states $O^r,U^r$ of $O$ and $U$ 
respectively, such that $O^r$ and $U^r$ are connected.
\end{itemize}
In this way, we have connected successor states of $O$ and $U$ in each round of the protocol
$\mathcal{A}$. Now, $O$ is a $0$-valent, initial state, which is connected to the initial state 
$U$, so that we can apply Lemma \ref{lemvvstates} to conclude that $U$ is $0$-valent. But this 
contradicts the fact that $U$ is a $1$-valent state, so we have reached a contradiction. 
Therefore $\nu_\mathcal{A}(n,m) > n - m$.
\hspace*{\fill}$\square$\par\vspace{3mm}

\section{Conclusion}\label{secConclusions}

In this paper, we have introduced three extensions to the basic iterated model of distributed 
computing \cite{borowsky}, using the safe-consensus task proposed by Yehuda, Gafni and Lieber in 
\cite{yehudagafni} and studied some of their properties and the 
solvability of the consensus task in each model. These extensions are 
very natural, in the sense that, for the set of new shared 
objects used by the processes (besides the snapshot objects),
the protocols follow the conventions of 
the standard iterated model for the snapshot objects and the 
new shared objects: Each set of objects is arranged as an array 
and each object is used only once by each processes and each 
process can access at most one shared object. This 
gives the protocols of the extended models a well behaved 
structure and allows for an easier analysis of their executions, 
because of their inductive nature, 
even with the use of the new shared objects. We believe that 
iterated models extended with shared objects will play an 
important role in the development of the theory of models
of distributed computing systems with shared objects, just in 
the same way the basic iterated model has played a fundamental 
role in the development of the theory of standard shared memory 
systems.

In the first iterated model that we investigated, the WRO iterated model, the processes first 
write to memory, then they snapshot it and after that, they invoke safe-consensus 
objects. We proved that in this model, the consensus task cannot be implemented.
The impossibility 
proof uses simpler connectivity arguments that those used in the lower bound proof of the WOR 
iterated model. For the second iterated model, the OWR iterated model, processes first 
invoke safe-consensus objects, then they write to memory and then they snapshot the contents of the 
shared memory. We proved that this model is equivalent to the WRO iterated model for task 
solvability using simulations, thus we obtained as a corollary that consensus cannot be implemented in the OWR 
iterated model.

Finally, in the third model, the WOR iterated model, processes write to memory, 
invoke safe-consensus objects and they snapshot the shared memory. We first constructed a 
WOR protocol which can solve $n$-consensus with $\binom{n}{2}$ safe-consensus objects. To make this 
protocol more readable and simplify its analysis, we introduced a new group consensus task: The 
$g$-2coalitions-consensus 
task, which captures the difficulties of solving $g$ processes consensus, even when 
almost all the processes know the proposed input values, except for two
processes that only know one input value. This new task may be of 
independent interest in future research. We also proved that our WOR consensus protocol is sharp, by giving a $\binom{n}{2}$ lower 
bound on the number of 
safe-consensus objects necessary to implement $n$-consensus in the WOR iterated model with 
safe-consensus. This lower bound is the main result of this paper. To obtain this lower bound, 
we combined structural results about WOR protocols and combinatorial results applied to all the different 
subsets of processes that 
can invoke safe-consensus objects.
At the very end, it is bivalency \cite{fischerImpossCon}, but in order to be able to 
use bivalency, an intricate connectivity argument must be developed, in which we relate the way that 
the processes use the safe-consensus shared objects to solve consensus with subgraphs of the Johnson graphs. 
More specifically, our results suggest that for a given WOR 
protocol which solves $n$-consensus, the ability of the processes to make incremental partial agreements
 using safe-consensus objects to reach consensus, is encoded in the number of vertices and the connectivity of 
 specific subgraphs of $J_{n,m}$ for 
$2\leqslant m \leqslant n$. It is somehow surprising that Johnson graphs played 
such an important role in the proof of Theorem \ref{lemminimumKboxesFULLWOR}, 
these graphs and their properties are known to have applications in coding theory, specifically, Johnson graphs are closely related to \emph{Johnson schemes} \cite{DelLev98}. 
It is the first time that they appear in the scene of 
distributed systems. Our results might suggest some kind of connection 
between distributed systems and coding theory.

Also, in all our structural results of iterated protocols with safe-consensus, 
we see that 
connectivity of graphs appear again in the scene for the consensus task, this suggests that topology will play 
an important role to understand better the behaviour of protocols that use shared objects more 
powerful that read/write shared memory registers. The proofs developed to obtain the lower bound 
say that the connectivity of the topological structures given by the shared objects used by the 
processes, can affect the connectivity of the topology of the protocol complex \cite{HerlihyKR2013}.

With respect to the relationship between the iterated models introduced in this paper and 
the standard model of distributed computing 
\cite{Herlihy91,Herlihy90} extended with safe-consensus objects used in \cite{yehudagafni} to 
implement consensus using safe-consensus objects, we can conclude the following. Our results for 
the WRO and the OWR iterated models say that these two models are not equivalent to the standard 
model with safe-consensus (for task solvability), as consensus can be implement in the latter model 
\cite{yehudagafni}, but consensus cannot be implemented in the WRO and OWR iterated models. Still, an 
open problem related to these two iterated models remains unsolved: The characterization of their exact 
computational power. Another closely related problem is: Are these two models more powerful 
than the standard shared memory model of distributed computing ?

On the other hand, the relationship between the standard model extended with safe-consensus and the 
WOR iterated model remains unknown. Are these two models equivalent? We conjecture that the answer 
is yes.

An interesting question for future work is the relation between safe consensus and abort objects.
Hadzilacos and Toueg \cite{HVT_AbortableObjects} introduced objects that behave like consensus objects except 
that any operation may abort without taking effect if it is concurrent with another operation.

\bibliographystyle{unsrt}       

\bibliography{bib/biblio}           

\begin{thebibliography}{10}

\bibitem{Herlihy91}
Maurice Herlihy.
\newblock {Wait-free synchronization}.
\newblock {\em ACM Trans. Program. Lang. Syst.}, 13(1):124--149, January 1991.

\bibitem{fischerImpossCon}
Michael~J. Fischer, Nancy~A. Lynch, and Michael~S. Paterson.
\newblock Impossibility of distributed consensus with one faulty process.
\newblock {\em J. ACM}, 32(2):374--382, 1985.

\bibitem{impossConSHM}
M.~C. Loui and H.~H. Abu-Amara.
\newblock Memory requirements for agreement among unreliable asynchronous
  processes.
\newblock {\em Parallel and Distributed Computing, Advances in Computing
  Research. F. P. Preparata ed., JAI Press, Greenwich, CT}, 4:163--183, 1987.

\bibitem{yehudagafni}
Yehuda Afek, Eli Gafni, and Opher Lieber.
\newblock Tight group renaming on groups of size g is equivalent to
  g-consensus.
\newblock In {\em Proceedings of the 23rd international conference on
  Distributed computing (DISC'09)}, volume 5805 of {\em LNCS}, pages 111--126,
  Berlin, Heidelberg, 2009. Springer-Verlag.

\bibitem{powersafeconsensusconderajsbaum}
Rodolfo Conde and Sergio Rajsbaum.
\newblock The complexity gap between consensus and safe-consensus.
\newblock In Magn\'us~M. Halld\'orsson, editor, {\em Structural Information and
  Communication Complexity}, volume 8576 of {\em Lecture Notes in Computer
  Science}, pages 68--82. Springer International Publishing, 2014.

\bibitem{rajsbaum2010iterated}
Sergio Rajsbaum.
\newblock Iterated shared memory models.
\newblock In Alejandro L\'opez-Ortiz, editor, {\em LATIN 2010: Theoretical
  Informatics}, volume 6034 of {\em Lecture Notes in Computer Science}, pages
  407--416. Springer Berlin / Heidelberg, 2010.

\bibitem{BGdistalg}
E.~Borowsky, E.~Gafni, N.~Lynch, and S.~Rajsbaum.
\newblock The {BG} distributed simulation algorithm.
\newblock {\em Distrib. Comput.}, 14(3):127--146, 2001.

\bibitem{HS99}
Maurice Herlihy and Nir Shavit.
\newblock The topological structure of asynchronous computability.
\newblock {\em J. ACM}, 46(6):858--923, 1999.

\bibitem{BG93-2}
Elizabeth Borowsky and Eli Gafni.
\newblock Generalized {FLP} impossibility result for t-resilient asynchronous
  computations.
\newblock In {\em STOC '93: Proceedings of the twenty-fifth annual ACM
  symposium on Theory of computing}, pages 91--100, New York, NY, USA, 1993.
  ACM.

\bibitem{SZ00}
Michael Saks and Fotios Zaharoglou.
\newblock Wait-free k-set agreement is impossible: The topology of public
  knowledge.
\newblock {\em SIAM J. Comput.}, 29(5):1449--1483, 2000.

\bibitem{AttiyaBDPR90}
Hagit Attiya, Amotz Bar-Noy, Danny Dolev, David Peleg, and Rudiger Reischuk.
\newblock {Renaming in an Asynchronous Environment}.
\newblock {\em Journal of the ACM}, July 1990.

\bibitem{Castaneda:2012:ETA:2247370.2247382}
Armando Casta\~{n}eda, Maurice Herlihy, and Sergio Rajsbaum.
\newblock An equivariance theorem with applications to renaming.
\newblock In {\em Proceedings of the 10th Latin American International
  Conference on Theoretical Informatics}, LATIN'12, pages 133--144, Berlin,
  Heidelberg, 2012. Springer-Verlag.

\bibitem{Castaneda:2008:NCT:1400751.1400791}
Armando Casta\~{n}eda and Sergio Rajsbaum.
\newblock New combinatorial topology upper and lower bounds for renaming.
\newblock In {\em Proceedings of the Twenty-seventh ACM Symposium on Principles
  of Distributed Computing}, PODC '08, pages 295--304, New York, NY, USA, 2008.
  ACM.

\bibitem{Castaneda:2012:NCT:2108242.2108245}
Armando Casta\~{n}eda and Sergio Rajsbaum.
\newblock New combinatorial topology bounds for renaming: The upper bound.
\newblock {\em J. ACM}, 59(1):3:1--3:49, March 2012.

\bibitem{CastanedaRRrev2011}
Armando Casta{\~n}eda, Sergio Rajsbaum, and Michel Raynal.
\newblock The renaming problem in shared memory systems: An introduction.
\newblock {\em Computer Science Review}, 5(3):229--251, 2011.

\bibitem{borowsky}
Elizabeth Borowsky and Eli Gafni.
\newblock A simple algorithmically reasoned characterization of wait-free
  computation (extended abstract).
\newblock In {\em PODC '97: Proceedings of the sixteenth annual ACM symposium
  on Principles of distributed computing}, pages 189--198, New York, NY, USA,
  1997. ACM.

\bibitem{herlihyAdversaries}
Maurice Herlihy and Sergio Rajsbaum.
\newblock The topology of shared-memory adversaries.
\newblock In {\em PODC '10: Proceeding of the 29th ACM symposium on Principles
  of distributed computing}, pages 105--113, New York, NY, USA, 2010. ACM.

\bibitem{herlihyAdversaries-j}
Maurice Herlihy and Sergio Rajsbaum.
\newblock The topology of distributed adversaries.
\newblock {\em Distributed Computing}, 26(3):173--192, 2013.

\bibitem{impossFDs}
Sergio Rajsbaum, Michel Raynal, and Corentin Travers.
\newblock An impossibility about failure detectors in the iterated immediate
  snapshot model.
\newblock {\em Inf. Process. Lett.}, 108(3):160--164, 2008.

\bibitem{RRT08}
Sergio Rajsbaum, Michel Raynal, and Corentin Travers.
\newblock The iterated restricted immediate snapshot model.
\newblock In {\em COCOON '08: Proceedings of the 14th annual international
  conference on Computing and Combinatorics}, pages 487--497, Berlin,
  Heidelberg, 2008. Springer-Verlag.

\bibitem{gafniRrecursion}
Eli Gafni and Sergio Rajsbaum.
\newblock Recursion in distributed computing.
\newblock In Shlomi Dolev, Jorge Cobb, Michael Fischer, and Moti Yung, editors,
  {\em Stabilization, Safety, and Security of Distributed Systems}, volume 6366
  of {\em Lecture Notes in Computer Science}, pages 362--376. Springer Berlin
  Heidelberg, 2010.

\bibitem{GRH06}
Eli Gafni, Sergio Rajsbaum, and Maurice Herlihy.
\newblock Subconsensus tasks: Renaming is weaker than set agreement.
\newblock In {\em Proceedings of the 20th international conference on
  Distributed computing (DISC'06)}, volume 4167 of {\em LNCS}, pages 329--338,
  Berlin, Heidelberg, 2006. Springer-Verlag.

\bibitem{gafnirajsOPODIS10}
Eli Gafni and Sergio Rajsbaum.
\newblock Distributed programming with tasks.
\newblock In Chenyang Lu, Toshimitsu Masuzawa, and Mohamed Mosbah, editors,
  {\em Principles of Distributed Systems}, volume 6490 of {\em Lecture Notes in
  Computer Science}, pages 205--218. Springer Berlin Heidelberg, 2010.

\bibitem{HerlihyKR2013}
Maurice Herlihy, Dmitry Kozlov, and Sergio Rajsbaum.
\newblock {\em {Distributed Computing Through Combinatorial Topology}}.
\newblock Morgan Kaufmann, 2013.

\bibitem{yehudagamzu}
Yehuda Afek, Iftah Gamzu, Irit Levy, Michael Merritt, and Gadi Taubenfeld.
\newblock Group renaming.
\newblock In {\em OPODIS '08: Proceedings of the 12th International Conference
  on Principles of Distributed Systems}, volume 5401 of {\em LNCS}, pages
  58--72, Berlin, Heidelberg, 2008. Springer-Verlag.

\bibitem{mosesrajsbaum}
Yoram Moses and Sergio Rajsbaum.
\newblock A layered analysis of consensus.
\newblock {\em SIAM J. Comput.}, 31(4):989--1021, 2002.

\bibitem{AttiyaDLS94}
Hagit Attiya, Cynthia Dwork, Nancy Lynch, and Larry Stockmeyer.
\newblock {Bounds on the time to reach agreement in the presence of timing
  uncertainty}.
\newblock {\em J. ACM}, 41:122--152, January 1994.

\bibitem{Dwork1990156}
Cynthia Dwork and Yoram Moses.
\newblock Knowledge and common knowledge in a byzantine environment: Crash
  failures.
\newblock {\em Information and Computation}, 88(2):156 -- 186, 1990.

\bibitem{getco10topologicaltheory}
Rodolfo Conde and Sergio Rajsbaum.
\newblock An introduction to the topological theory of distributed computing
  with safe-consensus.
\newblock {\em Electronic Notes in Theoretical Computer Science}, 283(0):29 --
  51, 2012.
\newblock Proceedings of the workshop on Geometric and To\-po\-lo\-gi\-cal
  Methods in Computer Science (GETCO).

\bibitem{attiyarajsbaum}
Hagit Attiya and Sergio Rajsbaum.
\newblock The combinatorial structure of wait-free solvable tasks.
\newblock {\em SIAM J. Comput.}, 31(4):1286--1313, 2002.

\bibitem{AW}
Hagit Attiya and Jennifer Welch.
\newblock {\em Distributed Computing: Fundamentals, Simulations and Advanced
  Topics}.
\newblock John Wiley \& Sons, 2004.

\bibitem{AADGMS93}
Yehuda Afek, Hagit Attiya, Danny Dolev, Eli Gafni, Michael Merritt, and Nir
  Shavit.
\newblock Atomic snapshots of shared memory.
\newblock {\em J. ACM}, 40(4):873--890, 1993.

\bibitem{borowskyIASI}
Elizabeth Borowsky and Eli Gafni.
\newblock Immediate atomic snapshots and fast renaming.
\newblock In {\em PODC '93: Proceedings of the twelfth annual ACM symposium on
  Principles of distributed computing}, pages 41--51, New York, NY, USA, 1993.
  ACM.

\bibitem{setagreementJournal}
Soma Chaudhuri.
\newblock More choices allow more faults: set consensus problems in totally
  asynchronous systems.
\newblock {\em Inf. Comput.}, 105:132--158, July 1993.

\bibitem{gafniextbgsim}
Eli Gafni.
\newblock The extended {BG}-simulation and the characterization of
  t-resiliency.
\newblock In {\em STOC '09: Proceedings of the 41st annual ACM symposium on
  Theory of computing}, pages 85--92, New York, NY, USA, 2009. ACM.

\bibitem{Castaneda:2012:RWS:2247370.2247383}
Armando Casta\~{n}eda, Damien Imbs, Sergio Rajsbaum, and Michel Raynal.
\newblock Renaming is weaker than set agreement but for perfect renaming: A map
  of sub-consensus tasks.
\newblock In {\em Proceedings of the 10th Latin American International
  Conference on Theoretical Informatics}, LATIN'12, pages 145--156, Berlin,
  Heidelberg, 2012. Springer-Verlag.

\bibitem{DelLev98}
Philippe Delsarte and Vladimir~I. Levenshtein.
\newblock Association schemes and coding theory.
\newblock {\em IEEE Transactions on Information Theory}, 44(6):2477--2504,
  1998.

\bibitem{Herlihy90}
Maurice~P. Herlihy and Jeannette~M. Wing.
\newblock Linearizability: a correctness condition for concurrent objects.
\newblock {\em ACM Trans. Program. Lang. Syst.}, 12(3):463--492, July 1990.

\bibitem{HVT_AbortableObjects}
Vassos Hadzilacos and Sam Toueg.
\newblock On deterministic abortable objects.
\newblock In {\em Proceedings of the 2013 ACM Symposium on Principles of
  Distributed Computing}, PODC '13, pages 4--12, New York, NY, USA, 2013.
  Association for Computing Machinery.

\bibitem{bollobas1998modern}
B.~Bollob{\'a}s.
\newblock {\em Modern Graph Theory: B{\'e}la Bollob{\'a}s}.
\newblock Graduate Texts in Mathematics Series. Springer, 1998.

\end{thebibliography}

\appendix

\section{Appendix: Results on subgraphs of Johnson graphs}\label{appendixProofJohnsonGraphs}

In this appendix, we prove Theorem \ref{NoConnUmNoConnzetaUm} and 
Lemma \ref{lemPartitionn2}, two results about subgraphs of Johnson 
graphs used to build the full proof of Theorem 
\ref{lemminimumKboxesFULLWOR}. The main result here is 
Theorem \ref{NoConnUmNoConnzetaUm}, which provides a new and interesting 
connection between Johnson graphs and iterated consensus protocols.
We will be using some combinatorial facts of graph theory and finite sets. 

\subsection{Subgraphs of Johnson graphs}\label{subsecSubgraphsJohnsonGraphs}

Our graph terminology is standard, see for example \cite{bollobas1998modern}, all the graphs that
we use are simple graphs. For any set $A$ and $E\subseteq 2^{A}$, we denote the union of
all the elements of every set in $E$ as $\bigcup E$. 
For $1\leqslant m \leqslant n$, the
\emph{Johnson graph} $J_{n,m}$ has as
vertex set all subsets of $\overline{n}$ of cardinality $m$, two vertices $b_1,b_2$ are adjacent if
and only if $\lvert b_1 \cap b_2 \rvert = m - 1$. Let $V_{n,m}=V(J_{n,m})$ 
and $U\subseteq V_{n,m}$, 
define the set $\zeta(U)$ as 
\begin{equation}\label{eqDefZetaU}
\zeta(U)= \{ c\cup d \mid c,d \in U \text{ and }  \lvert c \cap d \rvert = m - 1 \}. 
\end{equation}
Notice that each $f\in \zeta(U)$ has size $m+1$ because, if $f= c\cup d$ for $c,d \in U$, then
$\lvert c \rvert = \lvert d \rvert = m$ and it is known that $\lvert c \cap d \rvert = m - 1
\Leftrightarrow \lvert c \cup d \rvert = m + 1$. Thus $\zeta(U)\subseteq V_{n,m+1}$. For any
$v=0,\ldots,n-m$, the \emph{iterated $\zeta$-operator} $\zeta^v$ is given by
\begin{equation}\label{eqDefIteratedZetaU}
 \zeta^v(U)=\begin{cases}
         U & \text{if } v=0, \\
	 \zeta(\zeta^{v - 1}(U)) & \text{otherwise}.
        \end{cases}
\end{equation}
As $U\subseteq V_{n,m}$, we can check that $\zeta^v(U)\subseteq V_{n,m+v}$. A simple, but useful
property of the $\zeta$-operator is that
\begin{equation}\label{eqInclusionIteratedZetaU}
 \bigcup \zeta^v(U)\subseteq \bigcup U.
\end{equation}

\subsection{Preliminaries}\label{secAppendixConnectivityJnm}

We are ready to prove all the combinatorial results that we need.

\begin{lemma}\label{lemjohnsonsubgraph}
Let $U \subseteq V_{n,m}$ and $G=G\left[ U \right]$. The following properties are satisfied.
\begin{itemize}
 \item [(i)] If $G$ is connected, then $\bigl| \bigcup U \bigr| \leqslant m - 1 + \lvert U
\rvert$.
\item [(ii)] If $U_1,\ldots,U_r$ are connected components of $G$, then $\zeta(\bigcup_{i=1}^r
U_i)=\bigcup_{i=1}^r \zeta(U_i)$.
\end{itemize}
\end{lemma}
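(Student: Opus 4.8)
The plan is to handle the two parts separately: part (i) by induction on $\lvert U\rvert$ via a spanning tree of the induced graph $G[U]\subseteq J_{n,m}$, and part (ii) almost directly from the definition of a connected component. Throughout I will use the equivalence $\lvert c\cap d\rvert=m-1\Leftrightarrow\lvert c\cup d\rvert=m+1$ for $m$-subsets $c,d$ of $\overline{n}$, which is recorded just above, so adjacency of $c,d$ in $J_{n,m}$ means $c$ is obtained from $d$ by swapping one element.

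For part (i), I would induct on $k=\lvert U\rvert$. If $k=1$, then $\bigcup U$ is a single $m$-set and $\lvert\bigcup U\rvert=m=m-1+1$. If $k\geqslant 2$ and $G[U]$ is connected, fix a spanning tree $T$ of $G[U]$ and let $c$ be a leaf of $T$, with $d$ its unique neighbour in $T$. Deleting a leaf of a spanning tree leaves a spanning tree of the rest, so $G[U\setminus\{c\}]$ is connected on $k-1$ vertices, and the induction hypothesis gives $\lvert\bigcup(U\setminus\{c\})\rvert\leqslant m-1+(k-1)$. Since $c$ and $d$ are adjacent in $J_{n,m}$ we have $\lvert c\cap d\rvert=m-1$, so $c\setminus d$ consists of a single element $y$; as $d\subseteq\bigcup(U\setminus\{c\})$, this forces $c\subseteq\bigcup(U\setminus\{c\})\cup\{y\}$, and hence $\bigcup U=\bigcup(U\setminus\{c\})\cup c$ adds at most the one element $y$. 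Therefore $\lvert\bigcup U\rvert\leqslant m-1+(k-1)+1=m-1+k$, completing the induction.

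For part (ii), the inclusion $\bigcup_{i=1}^r\zeta(U_i)\subseteq\zeta\bigl(\bigcup_{i=1}^r U_i\bigr)$ is trivial, since the defining condition for membership in $\zeta$ only becomes easier to satisfy when the ground set is enlarged. For the reverse inclusion, take $f\in\zeta\bigl(\bigcup_{i=1}^r U_i\bigr)$ and write $f=c\cup d$ with $c,d\in\bigcup_{i=1}^r U_i$ and $\lvert c\cap d\rvert=m-1$. Then $c$ and $d$ are vertices of $G[U]$ that are adjacent in $J_{n,m}$, so $\{c,d\}$ is an edge of $G[U]$ and $c,d$ lie in the same connected component of $G[U]$. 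Since each of $c$ and $d$ belongs to one of $U_1,\dots,U_r$, that common component must be some $U_i$, and hence $f=c\cup d\in\zeta(U_i)\subseteq\bigcup_{j=1}^r\zeta(U_j)$.

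I do not anticipate a real obstacle. The only points calling for a little care are, in part (i), the choice of a leaf of a spanning tree as a vertex whose removal preserves connectivity (so that the inductive hypothesis applies) together with the bookkeeping that each such step increases $\lvert\bigcup(\cdot)\rvert$ by at most one; and, in part (ii), the observation that an edge of $G[U]$ cannot join two distinct connected components, which is precisely what forces the two $m$-sets producing a given element of $\zeta$ into the same $U_i$.
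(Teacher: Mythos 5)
Your proof is correct and follows essentially the same approach the paper intends: induction on $\lvert U\rvert$ for part (i), and unwinding the definitions for part (ii). The paper omits both arguments as routine, and your spanning-tree/leaf-removal device for the inductive step in (i) is exactly the standard way to make that induction precise.
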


\begin{proof}
(i) can be proven easily using induction on $\lvert U \rvert$ and (ii) is an easy consequence of
the definitions, thus we omit the proof.
\end{proof}

\begin{lemma}\label{lemjohnsoninducedgraph2}
 Let $U\subseteq V_{n,m}$ with $\lvert U \rvert \leqslant n - m$. If $G\left[U\right]$ is connected
then $\bigcup U \neq \overline{n}$.
\end{lemma}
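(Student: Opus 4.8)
The plan is to derive this as an immediate consequence of part (i) of Lemma~\ref{lemjohnsonsubgraph}. Since $G\left[U\right]$ is connected, that lemma gives the bound $\bigl| \bigcup U \bigr| \leqslant m - 1 + \lvert U \rvert$. First I would plug in the hypothesis $\lvert U \rvert \leqslant n - m$ to obtain
\[ \Bigl| \bigcup U \Bigr| \leqslant m - 1 + \lvert U \rvert \leqslant m - 1 + (n - m) = n - 1. \]
Since $\bigcup U \subseteq \overline{n}$ and $\lvert \overline{n} \rvert = n > n - 1 \geqslant \bigl| \bigcup U \bigr|$, the set $\bigcup U$ is a proper subset of $\overline{n}$, so $\bigcup U \neq \overline{n}$, as claimed.

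There is essentially no obstacle here: the entire content is packed into Lemma~\ref{lemjohnsonsubgraph}(i), whose inductive proof on $\lvert U \rvert$ (adding one vertex of a spanning-tree-like ordering of the connected subgraph increases $\bigl|\bigcup U\bigr|$ by at most one, since a new vertex adjacent to an old one in $J_{n,m}$ shares $m-1$ elements with it) is the part that does the real work. So the only thing to check in this lemma is the arithmetic $m - 1 + (n-m) = n - 1 < n$, and that $\bigcup U$ is contained in $\overline{n}$ by definition of $V_{n,m}$.
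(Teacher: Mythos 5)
Your proof is correct and follows essentially the same route as the paper: both invoke Lemma~\ref{lemjohnsonsubgraph}(i) and do the arithmetic $m-1+(n-m)=n-1<n$; the paper phrases it as a contradiction (assume $\bigcup U = \overline{n}$, derive $\lvert U \rvert \geqslant n-m+1$) while you run the same inequality forward directly, which is a cosmetic difference only.
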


\begin{proof}
 Suppose that with the given hypothesis $\bigcup U = \overline{n}$. As $G\left[ U \right]$
is connected, we can use part (i) of Lemma \ref{lemjohnsonsubgraph} to obtain the inequality
$n\leqslant m + \lvert U \rvert -1$. But this implies that $\lvert U \rvert \geqslant n - m + 1$ and
this is a contradiction. Therefore $\bigcup U \neq \overline{n}$.
\end{proof}

%

%

%

\begin{lemma}\label{GconnectedGprimeConnectedandMore}
 Let $U\subset V_{n,m}$ 
with $\lvert U \rvert > 1$ 
and $G=G\left[ U \right]$ a connected subgraph of $J_{n,m}$.
Then the graph $G^\prime=G\left[\zeta(U) \right]$ is connected and $\bigcup \zeta(U) = \bigcup U$.
\end{lemma}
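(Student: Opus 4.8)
The plan is to exploit the observation that $\zeta(U)$ is exactly the set of ``edge-unions'' of the connected graph $G$, and that passing from a graph to its line graph preserves connectivity.

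First I would dispose of $\bigcup\zeta(U)=\bigcup U$. The inclusion $\bigcup\zeta(U)\subseteq\bigcup U$ is Equation~\eqref{eqInclusionIteratedZetaU} with $v=1$. For the reverse inclusion, note that since $\lvert U\rvert>1$ and $G=G\left[U\right]$ is connected, $G$ has at least one edge, so $\zeta(U)\neq\varnothing$; and given $i\in\bigcup U$, say $i\in c$ with $c\in U$, connectedness of $G$ gives $c$ a neighbour $d$ in $G$, whence $c\cup d\in\zeta(U)$ and $i\in c\subseteq c\cup d\subseteq\bigcup\zeta(U)$. This also shows $\bigcup U\subseteq\bigcup\zeta(U)$, so equality holds.

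For the connectivity of $G^\prime=G\left[\zeta(U)\right]$ I would introduce the map $\phi\colon E(G)\to\zeta(U)$, $\phi(\{c,d\})=c\cup d$. It is well defined and lands in $\zeta(U)$ because adjacency in $G\left[U\right]$ means $\lvert c\cap d\rvert=m-1$, hence $\lvert c\cup d\rvert=m+1$; and it is onto $\zeta(U)$ directly from the definition of $\zeta$ (any $c\cup d\in\zeta(U)$ has $c,d\in U=V(G)$ and $\lvert c\cap d\rvert=m-1$, so $\{c,d\}\in E(G)$). The key step is: if two edges $\{c,d\}$ and $\{d,e\}$ of $G$ share the endpoint $d$, then $c\cup d$ and $d\cup e$ are equal or adjacent in $J_{n,m+1}$. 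Indeed both contain $d$, which has cardinality $m$, so $\lvert(c\cup d)\cap(d\cup e)\rvert\geqslant m$; since each of $c\cup d$ and $d\cup e$ has cardinality $m+1$, that intersection has cardinality $m$ (so $c\cup d$ and $d\cup e$ are adjacent in $J_{n,m+1}$, hence in $G^\prime$) or $m+1$ (so $c\cup d=d\cup e$).

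Finally I would run the line-graph argument: the line graph $L(G)$ of a connected graph with at least one edge is connected. Given $f_1,f_2\in\zeta(U)$, pick edges $e_1,e_2\in E(G)$ with $\phi(e_i)=f_i$, and take a path $e_1=g_0,g_1,\ldots,g_k=e_2$ in $L(G)$, so that consecutive $g_j$ share an endpoint in $G$. By the key step, $\phi(g_0),\phi(g_1),\ldots,\phi(g_k)$ becomes, after deleting immediate repetitions, a walk in $G^\prime$ from $f_1$ to $f_2$; hence $G^\prime$ is connected. The only place that needs care is the cardinality computation in the key step and the bookkeeping that $\phi$ genuinely maps into $\zeta(U)$ rather than merely into $V_{n,m+1}$; both are routine, so I do not expect any real obstacle.
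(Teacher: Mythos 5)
Your proof is correct and rests on the same key observation as the paper's: two edges of $G$ that share an endpoint map under $c,d\mapsto c\cup d$ to equal or adjacent vertices of $G^\prime$, and the union-equality is proved by the identical neighbour argument. The paper writes out the resulting walk in $G^\prime$ directly from a path in $G$, while you package it through the line graph $L(G)$; this is a cleaner formalization (and it also absorbs the paper's separate $\lvert U\rvert=2$ base case) but not a genuinely different argument.
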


\begin{proof}
We show that $G^\prime$ is connected. If $G$ contains only two vertices, then $G^\prime$ contains
only one vertex and the result is immediate. So assume that $\lvert U \rvert > 2$ and take $b,c\in
\zeta(U)$. We need to show that there is a path from $b$ to $c$. We know that $b=b_1 \cup b_2$ and
$c=c_1 \cup c_2$ with $b_i,c_i\in U$ for $i=1,2$. As $G$ is connected, there is a path
\[ v_1=b_2,v_2,\ldots,v_q=c_2, \]
and we use it to build the following path in $G^\prime$,
\[ b=b_1\cup v_1,v_1 \cup v_2,\ldots, v_{q-1} \cup v_q,c=v_q \cup c_1, \]
thus $b$ and $c$ are connected in $G^\prime$, so that it is a connected graph. 

Now we prove that $\bigcup \zeta(U) = \bigcup U$. By Equation \eqref{eqInclusionIteratedZetaU},
$\bigcup \zeta(U) \subseteq \bigcup U$, so it only remains to prove the other inclusion. Let $x\in
\bigcup U$, there is a vertex $b\in U$ such that $x\in b$ and as $\lvert U \rvert > 1$ and $G$ is
connected, there exists another vertex $c\in U$ such that $b$ and $c$ are adjacent in $G$. Then
$b\cup c\in \zeta(U)$, thus
\[ x\in b \subset b\cup c \subset \bigcup \zeta(U), \]
gather that, $\bigcup U \subseteq \bigcup \zeta(U)$ and the equality holds. This concludes the
proof.
\end{proof}

\begin{lemma}\label{CharacterizationConnCompGZeta}
 Let $U\subseteq V_{n,m}$, $G=G\left[ U \right], G^\prime=G\left[ \zeta(U) \right]$ and
$\mathcal{U}$ be the set of connected components of the graph $G$. Then the following conditions
hold:
\begin{enumerate}
 \item For any connected component $V$ of $G^\prime$, there exists a set $\mathcal{O}\subseteq
\mathcal{U}$ such that $V=\zeta (\bigcup \mathcal{O})=\bigcup_{Z\in \mathcal{O}} \zeta(Z)$.
\item If $V,V^\prime$ are two different connected components of $G^\prime$ and
$\mathcal{O},\mathcal{O}^\prime\subseteq \mathcal{U}$ are the sets which fulfil property 1 for $V$
and $V^\prime$ respectively, then $\mathcal{O} \cap \mathcal{O}^\prime = \varnothing$.
\end{enumerate}
\end{lemma}

\begin{proof}
Part 2 is clearly true, so we only need to prove part 1. Define the graph
$\mathcal{H}=(V(\mathcal{H}), E(\mathcal{H}))$ as follows:
\begin{itemize}
 \item [] $V(\mathcal{H})=\mathcal{U}$;
\item [] Two vertices $Z, Z^\prime$ form an edge in $E(\mathcal{H})$ if and only if there exist
$b,c\in Z$ and $d,e \in Z^\prime$ such that
\begin{itemize}
\item $\lvert b \cap c \rvert = m - 1 \text{ and } \lvert d \cap e \rvert = m - 1$
\item $\bigl| (b \cup c) \cap (d \cup e) \bigr| \geqslant m$.
\end{itemize}
\end{itemize}
Roughly speaking, $\mathcal{H}$ describes for two connected components $Z,Z^\prime$ of $G$, 
whe\-ther
$\zeta(Z),\zeta(Z^\prime)$ lie in the same connected component of $G^\prime$ or not.
Let $V$ be a connected component of $G^\prime$, if $b \in V$, then $b=b_1
\cup b_2$, where $b_1,b_2$ are in some connected component $Z_b\subseteq U$ of $G$. In
$\mathcal{H}$, there exists a component $\mathcal{O}$ such that $Z_b\in \mathcal{O}$. By Lemma
\ref{GconnectedGprimeConnectedandMore}, $G\left[ \zeta(Z_b) \right]$ is connected and $b\in \zeta
(Z_b) \cap V$, so it is clear that $\zeta(Z_b) \subseteq V$. Suppose that $Z^\prime\in \mathcal{O}$
with $Z^\prime \neq Z_b$ and these components form an edge in $\mathcal{H}$. This means that
$G\left[ \zeta(Z_b) \right]$ and $G\left[ \zeta(Z^\prime) \right]$ are joined by at least one edge
in $G^\prime$, thus every vertex of $\zeta(Z^\prime)$ is connected with every vertex of
$\zeta(Z_b)$, and as $\zeta(Z_b)\subseteq V$, $\zeta(Z_b) \cup \zeta(Z^\prime)\subseteq V$. We can
continue this process with every element of the set $\mathcal{O} - \{ Z_b, Z^\prime \}$ to show that
\[ \zeta \bigl(\bigcup \mathcal{O} \bigr)=\bigcup_{Z\in \mathcal{O}} \zeta(Z) \subseteq V. \]
(The first equality comes from part (ii) of Lemma \ref{lemjohnsonsubgraph}). We prove the other
inclusion, if $V = \{ b \}$, we are done. Otherwise, let $c\in V - \{ b \}$, if $c\in \zeta(Z_b)$,
then $c\in \zeta \bigl( \bigcup \mathcal{O}\bigr)$. In case that $c\notin \zeta(Z_b)$, there must
exists some connected component $Z_c$ of $G$ such that $c\in \zeta(Z_c)$ and $Z_c\neq Z_b$. In
$G^\prime$, we can find a path $b=v_1,\ldots,v_q=c$ where $v_i\in V$ for $i=1,\ldots,q$. The set
$\mathcal{Q} \subset \mathcal{U}$ defined as
\[ \mathcal{Q} = \{ X\in \mathcal{U} \mid (\exists j) (1 \leqslant j \leqslant q \text{ and }  v_j 
\in
\zeta(X)) \}, \]
can be seen to have the property that every pair of vertices $X,X^\prime \in \mathcal{Q}$ are
connected by a path in $\mathcal{H}$. Since $Z_b,Z_c\in \mathcal{Q}$, they are connected in
$\mathcal{H}$ and as $Z_b \in \mathcal{O}$, then $Z_c\in \mathcal{O}$, gather that, $c\in \zeta
(\bigcup \mathcal{O})$. Therefore $V\subseteq \zeta(\bigcup \mathcal{O})$ and the equality
$V=\bigcup_{Z\in \mathcal{O}} \zeta(Z)$ holds. This proves part 1 and finishes the proof.
\end{proof}

\begin{lemma}\label{UnionConnCompInequalityGeneralized}
 Let $U\subseteq V_{n,m}$ and $\mathcal{U}$ be the set of connected components of $G\left[ U
\right]$. Then for all $s\in \{ 0,\ldots, n - m \}$ and $G^s=G\left[ \zeta^s(U) \right]$, the
following conditions are satisfied.
\begin{itemize}
 \item [{\bf H1}] For every connected component $V$ of the graph $G^s$, there exists a set
$\mathcal{O} \subseteq \mathcal{U}$ such that 
\begin{equation}\label{eqUnionConnCompInequalityGeneralized}
\bigl| \bigcup V \bigr| \leqslant m - 1 + \sum_{Z\in \mathcal{O}} \bigl| Z \bigr|.
\end{equation}
\item [{\bf H2}] If $V,V^\prime$ are two different connected components of $G^s$ and
$\mathcal{O},\mathcal{O}^\prime \subseteq \mathcal{U}$ are the sets which make true the inequality
given in \eqref{eqUnionConnCompInequalityGeneralized} for $V$ and $V^\prime$ respectively, then
$\mathcal{O} \cap \mathcal{O}^\prime = \varnothing$.
\end{itemize}
\end{lemma}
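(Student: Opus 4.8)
The plan is to prove the two statements \textbf{H1} and \textbf{H2} simultaneously by induction on $s$, fixing at each level a choice $V\mapsto\mathcal{O}_V$ of witnessing subfamilies of $\mathcal{U}$. For the base case $s=0$ we have $\zeta^{0}(U)=U$, so each connected component $V$ of $G^{0}=G\left[ U \right]$ is itself an element of $\mathcal{U}$; taking $\mathcal{O}=\{V\}$, part (i) of Lemma~\ref{lemjohnsonsubgraph} gives $\bigl|\bigcup V\bigr|\leqslant m-1+|V|$, which is exactly \eqref{eqUnionConnCompInequalityGeneralized}, and \textbf{H2} holds because distinct components give distinct singleton families.

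For the inductive step, assume the statement for some $s$ with $s<n-m$, and write $W=\zeta^{s}(U)\subseteq V_{n,k}$ with $k=m+s$, so that $G^{s+1}=G\left[\zeta(W)\right]$. Fix a connected component $V$ of $G^{s+1}$. By part 1 of Lemma~\ref{CharacterizationConnCompGZeta} there is a subfamily $\mathcal{P}$ of the set of connected components of $G^{s}=G\left[ W \right]$ — in fact a connected component of the auxiliary graph $\mathcal{H}$ used in the proof of that lemma — with $V=\zeta\bigl(\bigcup\mathcal{P}\bigr)=\bigcup_{P\in\mathcal{P}}\zeta(P)$. Every $P\in\mathcal{P}$ satisfies $|P|>1$, since a singleton component has $\zeta(P)=\varnothing$ and is isolated in $\mathcal{H}$; hence Lemma~\ref{GconnectedGprimeConnectedandMore} gives $\bigcup\zeta(P)=\bigcup P$, and therefore $\bigcup V=\bigcup_{P\in\mathcal{P}}\bigcup P$.

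The heart of the argument is a telescoping union bound. Enumerate $\mathcal{P}=\{P_{1},\ldots,P_{t}\}$ so that $\{P_{1},\ldots,P_{i}\}$ spans a connected subgraph of $\mathcal{H}$ for every $i$ (possible, e.g., via a spanning-tree order, since $\mathcal{P}$ is $\mathcal{H}$-connected). Whenever $P_{i+1}$ is $\mathcal{H}$-adjacent to some $P_{j}$ with $j\leqslant i$, the definition of the edges of $\mathcal{H}$ yields $b,c\in P_{i+1}$ and $d,e\in P_{j}$ with $b\cup c,\,d\cup e$ of size $k+1$ and $\bigl|(b\cup c)\cap(d\cup e)\bigr|\geqslant k$; since $b\cup c\subseteq\bigcup P_{i+1}$ and $d\cup e\subseteq\bigcup P_{j}$, this gives $\bigl|\bigcup P_{i+1}\cap\bigcup P_{j}\bigr|\geqslant k$. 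Adding the $\bigcup P_{l}$ one at a time therefore yields
\[
\Bigl|\bigcup V\Bigr|=\Bigl|\bigcup_{l=1}^{t}\bigcup P_{l}\Bigr|\leqslant \sum_{l=1}^{t}\Bigl|\bigcup P_{l}\Bigr|-(t-1)k .
\]
Now apply the inductive \textbf{H1} to each $P_{l}$, writing $\mathcal{O}_{P_{l}}\subseteq\mathcal{U}$ for the associated family, and use the inductive \textbf{H2} to note that the $\mathcal{O}_{P_{l}}$ are pairwise disjoint; setting $\mathcal{O}=\bigcup_{l=1}^{t}\mathcal{O}_{P_{l}}$ we get
\[
\Bigl|\bigcup V\Bigr|\leqslant t(m-1)+\sum_{Z\in\mathcal{O}}|Z|-(t-1)k
= (m-1)+\sum_{Z\in\mathcal{O}}|Z|-(t-1)\bigl(k-(m-1)\bigr)\leqslant (m-1)+\sum_{Z\in\mathcal{O}}|Z|,
\]
because $k=m+s\geqslant m>m-1$, which proves \textbf{H1} for $V$ with witness $\mathcal{O}$. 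For \textbf{H2} at level $s+1$: if $V,V'$ are distinct components of $G^{s+1}$, part 2 of Lemma~\ref{CharacterizationConnCompGZeta} gives disjoint families $\mathcal{P},\mathcal{P}'$ of components of $G^{s}$, and the inductive \textbf{H2} (the $\mathcal{O}_{P}$ are pairwise disjoint for distinct components $P$ of $G^{s}$) then makes the witnesses $\mathcal{O}=\bigcup_{P\in\mathcal{P}}\mathcal{O}_{P}$ and $\mathcal{O}'=\bigcup_{P\in\mathcal{P}'}\mathcal{O}_{P}$ disjoint. This completes the induction.

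The step I expect to be the main obstacle is precisely the telescoping bound: one must extract from the (possibly indirect) connectivity of $\mathcal{P}$ inside $\mathcal{H}$ the quantitative fact that consecutive components overlap in at least $k=m+s$ ground elements, and then check that this overlap exactly absorbs the $(t-1)$ surplus copies of roughly $m$ that a naive union bound over $t$ components would incur, leaving only the single slack term $m-1$. One also has to be careful that $\mathcal{P}$ is genuinely a connected subfamily (an $\mathcal{H}$-component), so that the spanning-tree enumeration exists, and that all its members are non-singleton, so that Lemma~\ref{GconnectedGprimeConnectedandMore} applies.
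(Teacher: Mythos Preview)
Your proof is correct and follows the same inductive architecture as the paper: base case via Lemma~\ref{lemjohnsonsubgraph}(i), inductive step via Lemma~\ref{CharacterizationConnCompGZeta} to decompose a component of $G^{s+1}$ into a family $\mathcal{P}$ of components of $G^{s}$, Lemma~\ref{GconnectedGprimeConnectedandMore} to pass from $\bigcup\zeta(P)$ to $\bigcup P$, and then a union bound combining the inductive witnesses $\mathcal{O}_{P}$ into $\mathcal{O}=\bigcup_{P}\mathcal{O}_{P}$.

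The one place you go beyond the paper is exactly the step the paper waves through as ``a simple induction on $\lvert\mathcal{Q}\rvert$'': you make explicit that $\mathcal{P}$ is a connected component of the auxiliary graph $\mathcal{H}$ from the \emph{proof} (not the statement) of Lemma~\ref{CharacterizationConnCompGZeta}, order $\mathcal{P}$ along a spanning tree of $\mathcal{H}$, and use the $\mathcal{H}$-edge definition to extract an overlap of at least $k=m+s$ ground elements between each newly added $\bigcup P_{i+1}$ and the running union. This telescoping is precisely what is needed to reduce the naive slack $t(m-1)$ to a single $m-1$, and your arithmetic $(t-1)(k-(m-1))\geqslant 0$ is correct. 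The paper leaves this computation implicit; your version spells it out and in doing so clarifies why the $\mathcal{H}$-connectivity of $\mathcal{O}$ (and not merely the set-theoretic conclusion of Lemma~\ref{CharacterizationConnCompGZeta}) is what drives the bound.
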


\begin{proof}
We prove the lemma by using induction on $s$. For the base case $s=0$, $G^0=G\left[ U \right]$, 
we use Lemma \ref{lemjohnsonsubgraph} and we are done. Suppose that for $0 \leqslant s < n - m$,
{\bf H1} and {\bf H2} are true. We prove the case $s+1$, let $W$ be a connected component of the
graph $G^{s+1}$. By part 1 of Lemma \ref{CharacterizationConnCompGZeta}, we know that there is a
unique set $\mathcal{Q}$ of connected components of $G^{s}$ such that
\[ W=\bigcup_{Q\in \mathcal{Q}} \zeta(Q) \]
and because $\zeta(Q) \neq \varnothing$, $\lvert Q \rvert > 1$, so that by Lemma
\ref{GconnectedGprimeConnectedandMore}, $\bigcup \zeta(Q) = \bigcup Q$ for all $Q\in \mathcal{Q}$,
thus it is true that
\begin{eqnarray*}
 \bigcup W & = & \bigcup \bigl(\bigcup_{Q\in \mathcal{Q}} \zeta\bigl(Q\bigr)\bigr)  \\
	   & = & \bigcup_{Q\in \mathcal{Q}} \bigl(\bigcup \zeta\bigl(Q \bigr)\bigr)  \\
	   & = & \bigcup_{Q\in \mathcal{Q}} \bigl(\bigcup Q \bigr).
\end{eqnarray*}
By the induction hypothesis, 
$\lvert \bigcup Q \rvert \leqslant m - 1 + \sum_{Z\in \mathcal{O}_Q} \lvert Z \rvert$, where
$\mathcal{O}_Q\subseteq \mathcal{U}$ for all $Q\in \mathcal{Q}$ and when  $Q \neq R$, $\mathcal{O}_Q
\cap \mathcal{O}_R=\varnothing$. With a simple induction on $\lvert \mathcal{Q} \rvert$, it is
rather forward to show that 
\[ \bigl| \bigcup W \bigr| \leqslant m - 1 + \sum_{Z\in \mathcal{O}} \bigl| Z \bigr|, \]
where $\mathcal{O}=\bigcup_{Q\in \mathcal{Q}} \mathcal{O}_Q$, gather that, property {\bf H1} is
fulfilled. We now show that condition {\bf H2} holds. If $W^\prime$ is another connected component
of $G^{s+1}$, then applying the above procedure to $W^\prime$ yields $W^\prime=\bigcup_{S \in
\mathcal{S}} \zeta(S)$, $\bigcup W^\prime = \bigcup_{S\in \mathcal{S}} \bigl(\bigcup S \bigr)$ and
$\bigl| \bigcup W^\prime \bigr| \leqslant m - 1 + \sum_{X \in \mathcal{O}^\prime}
\bigl| X \bigr|$, with $\mathcal{O}^\prime=\bigcup_{S\in \mathcal{S}} \mathcal{O}_S$. By 2 of Lemma
\ref{CharacterizationConnCompGZeta}, $\mathcal{Q} \cap \mathcal{S} = \varnothing$, so that if $Q\in
\mathcal{Q}$ and $S\in \mathcal{S}$, then $Q\neq S$ and the induction hypothesis tells us that
$\mathcal{O}_Q \cap \mathcal{O}_S=\varnothing$, thus
\begin{eqnarray*}
 \mathcal{O} \cap \mathcal{O}^\prime & = & \bigl(\bigcup_{Q\in \mathcal{Q}} \mathcal{O}_Q \bigr)
\cap \bigl(\bigcup_{S\in \mathcal{S}} \mathcal{O}_S \bigr) \\
	   & = & \bigcup_{(Q,S) \in \mathcal{Q} \times \mathcal{S}} \bigl( \mathcal{O}_Q
\cap \mathcal{O}_S \bigr)  \\
	   & = & \varnothing,
\end{eqnarray*}
and property {\bf H2} is satisfied, so that by induction we obtain the result.
\end{proof}

\subsection{Proofs of Theorem \ref{NoConnUmNoConnzetaUm} and Lemma \ref{lemPartitionn2}}\label{secAppendixFinalProofs}

Using all the results that we have developed in this appendix, 
we can now prove Theorem \ref{NoConnUmNoConnzetaUm}, 
(the ma\-in result of the combinatorial part of the proof of Theorem 
\ref{lemminimumKboxesFULLWOR}) and Lemma \ref{lemPartitionn2}, which is used to prove Lemma \ref{thm2boxesleqnm2}.\\

\noindent {\it Proof of Theorem \ref{NoConnUmNoConnzetaUm}}.
For a contradiction, suppose that $\zeta^{n - m}(U) \neq \varnothing$, then $G\left[ \zeta^{n -
m}(U) \right]$ is the graph $J_{n,n}$ and contains the unique connected component $C=\{ \overline{n}
\}$, thus by Lemma \ref{UnionConnCompInequalityGeneralized}, for some set $\mathcal{O}$ of
connected components of $G\left[ U \right]$,
\[ n = \bigl| \bigcup C \bigr| \leqslant m - 1 + \sum_{Z\in \mathcal{O}} \bigl| Z \bigr| \leqslant
m - 1 + \bigl| U \bigr|, \]
and we conclude that $\lvert U \rvert \geqslant n - m + 1$, a contradiction. So that $\zeta^{n -
m}(U) \neq \varnothing$ is impossible. Therefore $\zeta^{n - m}(U)$ has no elements.
\hspace*{\fill}$\square$\par\vspace{3mm}

The proof of Lemma \ref{lemPartitionn2} uses properties of 
subgraphs of $J_{n,2}$ and it is far easier to obain than the proof of
Theorem \ref{NoConnUmNoConnzetaUm}.\\

\noindent {\it Proof of Lemma \ref{lemPartitionn2}}. If $\bigcup U \neq \overline{n}$, then setting $A=\overline{n} - \{ i \}$ and $B=\{ i \}$ where 
$i\notin \bigcup U$ we are done. Otherwise, by Lemma \ref{lemjohnsoninducedgraph2}, the induced 
subgraph $G=G\left[ U \right]$ of $J_{n,2}$ is disconnected. There exists a partition $V_1,V_2$ of 
$V(G)=U$ with the property that there is no edge of $G$ from any vertex of $V_1$ to any vertex of 
$V_2$. Let 
\[ A=\bigcup V_1 \quad \text{and} \quad B = \bigcup V_2. \]
It is easy to show that $\overline{n}=A\cup B$ is the partition of $\overline{n}$ that we need.
\hspace*{\fill}$\square$\par\vspace{3mm}

\end{document}